\documentclass[11pt]{article}
\usepackage[margin=1in]{geometry}
\linespread{1.06}
\usepackage{mathpazo}
\usepackage[colorlinks,citecolor=blue,linkcolor=blue,urlcolor=blue]{hyperref}
\usepackage[backend=biber,style=alphabetic,natbib=true,maxbibnames=99]{biblatex}
\addbibresource{refs.bib}

\usepackage[utf8]{inputenc}
\usepackage{xcolor}
\definecolor{mydarkblue}{rgb}{0,0.08,0.85}
\usepackage{amsfonts}
\usepackage{caption}
\usepackage{subcaption}
\usepackage{amsmath}
\usepackage{amsfonts}
\usepackage{amssymb}
\usepackage{amsbsy}
\usepackage{amsthm}
\usepackage{xcolor}
\usepackage{algorithm,algpseudocode}

\newcommand{\dw}{\calW}
\newcommand{\dk}{\mathrm{d}_{\mathrm{K}}}

\usepackage{selfSelection}

\renewcommand{\hat}[1]{\widehat{#1}}
\renewcommand{\bar}[1]{\overline{#1}}

\title{Estimation of Standard Auction Models}
\author{
\begin{tabular}{cc}
& \\
\textbf{Yeshwanth Cherapanamjeri} & \textbf{Constantinos Daskalakis}\\
\small{University of California Berkeley} & \small{Massachusetts Institute of Technology}\\
\small{\texttt{yeshwanth@berkeley.edu }} & \small{\texttt{costis@csail.mit.edu}}\\
& \\
\textbf{Andrew Ilyas} & \textbf{Manolis Zampetakis}\\
\small{Massachusetts Institute of Technology} & \small{University of California Berkeley}\\
\small{\texttt{ailyas@mit.edu}} & \small{\texttt{mzampet@berkeley.edu}}\\
& \\
\end{tabular}
}
\date{}

\begin{document}
\maketitle

\begin{abstract}
  We provide efficient estimation methods for first- and second-price auctions
under independent (asymmetric) private values and partial observability. Given a
finite set of observations, each comprising the identity of the winner and the
price they paid  in a sequence of identical auctions, we provide algorithms for
non-parametrically estimating the bid distribution of each bidder, as well as
their value distributions under equilibrium assumptions.  We provide
finite-sample estimation bounds which are uniform in that their error rates do
not depend on the bid/value distributions being estimated.  Our estimation
guarantees advance a body of work in Econometrics wherein only identification
results have been obtained
(e.g.~\cite{athey2002identification,athey2007nonparametric}), unless the setting
is symmetric~(e.g.~\cite{morganti2011estimating,menzel2013large}), parametric
(e.g.~\cite{athey2011comparing}), or all  bids are observable
(e.g.~\cite{guerre2000optimal}).  Our guarantees also provide computationally and
statistically effective alternatives to classical techniques from reliability
theory~\citep{meilijson1981estimation}. Finally, our results are immediately
applicable to Dutch and English auctions.

\end{abstract}

\section{Introduction}
Estimating value and/or bid distributions from an observed sequence of auctions
is a fundamental challenge in Econometrics with direct practical applications.
For example, these fundamentals allow one to analyze the performance 
of an auction and make counterfactual predictions about alternatives.
The difficulty of this problem depends on the format of the
 auctions and the structure of the observed information
from each one, as well as  how the fundamentals of bidders are interrelated
and vary across the sequence of observations. 

In this paper, we study a basic version of the afore-described estimation
challenge, wherein the auction format and the bidder distributions stay fixed
across  
observations, and the bidders have
independent private values (which are independently resampled across different
observations). 
The auction formats that we consider are first- and second-price
auctions, as well as Dutch and English auctions. What will make our problem
challenging is that (i) our bidders are {\em ex ante} asymmetric, drawing their
independent private values from different distributions; (ii) we
will make no parametric assumptions about these distributions; and (iii) we will
only be observing the identity of the winner and the price they paid but not the
losing bids. Under this observational model and our independent private
values assumption above, we can focus our attention on first- and second-price
auctions, and our results automatically extend to Dutch and English auctions. 

In the above settings, we give computationally and sample efficient methods
for estimating all agents' bid distributions and (under equilibrium
assumptions) value distributions: 

\begin{itemize}
    \item[$\blacktriangleright$] In the case of first-price auctions, we provide
    finite-sample estimation guarantees under L\'evy, Kolmogorov and Total
    Variation distance with minimal assumptions.  Under (a condition weaker
    than) a lower bound on the density of the bid distributions (although we
    actually do not need existence of densities), 
    Theorem~\ref{thm:1stPrice:fullSupport} shows that the bid distributions can
    be estimated to within~$\varepsilon$ in L\'evy distance, using
    $1/\varepsilon^{O(k)}$ samples, where $k$ is the number of bidders.
    Theorem~\ref{thm:main_lb} shows that the exponential dependence on $k$ is
    necessary, and Theorem~\ref{thm:lb_kol} shows that L\'evy distance cannot be
    strengthened to Kolmogorov distance. Sidestepping the exponential sample
    dependence on $k$, strengthening the estimation distance, and removing the
    density lower bound assumption, Theorem~\ref{thm:1stPrice:likely} shows
    that,  assuming only  continuity  of their cumulative functions, the bid
    distributions can be estimated to within $\varepsilon$ in Kolmogorov
    distance on their {\em effective supports}, i.e.~the part of the support
    that is likely to be observed, defined in~\citet{blum2015learning}. Finally,
    under Lipschitzness assumptions on the densities of the bid distributions,
    Theorem~\ref{thm:1stPrice:density} improves the latter to
    $\varepsilon$-error in Total Variation distance. Our sample requirements for
    estimation over the effective supports of the bid distributions under either
    Kolmogorov or TV distance are dramatically improved to {\em logarithmic} in
    the number of bidders and benign in $1/\varepsilon$.  Finally, assuming that
    bidders use Bayesian Nash equilibrium strategies,
    Theorems~\ref{thm:value_estimation_full} and~\ref{thm:value_estimation} show
    that bidders' value distributions can be estimated over their full and,
    respectively, effective supports with similar sample sizes as those needed
    for the estimation of bid distributions.

    All of our estimation algorithms run in polynomial time in their sample
    sizes, and all our estimation error bounds are uniform in that they do not
    depend on the bid/value distributions being estimated, unlike the
    instance-dependent rates that commonly arise from the use of kernel density
    estimation methods. It is also important to note that we estimate the value
    distributions in L\'evy distance (in fact, in the stronger
    notion of Wasserstein distance) and this is sufficient for the purposes of
    performing counter-factual predictions about the revenue that would result
    from running alternative auctions~\citep{brustle2020multi}.

    \item[$\blacktriangleright$] In the case of second-price auctions,
    Theorem~\ref{thm:sec_price_fin_samp} establishes that bid distributions can be
    estimated to within $\varepsilon$ in Kolmogorov distance over their entire
    supports assuming upper and lower bounds on their density functions. Again the
    sample complexity scales as $1/\varepsilon^{O(k)}$. This result poses major
    technical challenges, requiring a computationally and statistically effective,
    fixed point computation alternative to \citet{meilijson1981estimation}'s method.
    We again sidestep the exponential dependence of the required sample size on $k$,
    by considering estimation over the effective support of the distributions in a
    setting, similar to that proposed by~\cite{blum2015learning} for first-price
    auctions, where we can insert bids to the auction or, equivalently, set a
    reserve price (see discussion in Section~\ref{sec:1stPrice:additionalBidder}).
    In this setting, Theorem~\ref{thm:sp_bid_insert} shows that bid distributions
    can be estimated to within $\varepsilon$ in Kolmogorov distance over their
    effective supports, using a sample size that is polynomial in both
    $1/\varepsilon$ and $k$. Similar to Theorem~\ref{thm:1stPrice:density}
    estimation in Kolmogorov distance can be turned to estimation in Total Variation
    distance under Lipschitzness of the densities. Of course, assuming that the
    bidders bid according to the truthful bidding equilibrium, our estimation
    results for bid distributions automatically translate to estimation results for
    value distributions.
\end{itemize}

To the best of our knowledge, our results are the first finite-sample estimation
guarantees for the general problem we consider. In particular: 
\begin{itemize}
    \item There is an extensive line of work on identification of bid and value
    distributions from complete or partial observations of bids;
    see~\cite{athey2007nonparametric} for a survey. In our setting of
    independent private values, \cite{athey2002identification} show that with
    infinite samples, bid distributions are identifiable from the distribution
    of the winners' identity and paid price.  
    
    Identification results for bid and value distributions have been
    established in the presence of correlated values, alternative auction
    formats, and unobserved heterogeneity, or unknown numbers of bidders 
    \citep{paarsch1992deciding,
    laffont1993structural, laffont1995econometrics, donald1996identification,
    baldwin1997bidder, donald2003empirical,
    bajari2003deciding, haile2001auctions, luo2020identification,
    haile2003nonparametric, mbakop2017identification, hu2013identification}. In contrast,
    here we focus on the IPV framework (albeit, in its more challenging
    asymmetric case) and standard auction mechanisms (first-price, second-price,
    Dutch, and English auctions), and provide finite-sample {\em estimation}
    results in these settings. 

\item On the estimation front, \citet{morganti2011estimating} and
\citet{menzel2013large} both provide estimators of bid distributions in first-
and second-price auctions (in fact they provide estimation using any order
statistics of the bid distributions) under the restrictive assumption that the
bidders are {\em symmetric}. They obtain rates for estimation of the bid
distributions over the full support, which  degrade exponentially with the
number of bidders. In comparison, we get similar rates for the significantly
more challenging asymmetric setting, and also provide drastically better
estimation rates (with logarithmic dependence on the number of bidders) on the
effective supports of the distributions.

\item In terms of non-parametric estimation of value distributions from bid
distributions in first-price auctions, \cite{guerre2000optimal} provide
estimation algorithms which operate under the restrictive assumption that the
bidders are symmetric. Their estimation makes use of the explicit formula for
the Bayesian Nash equilibrium in the symmetric case.  Later work
\citep{campo2003asymmetry,bajari2003deciding,
Krasnokutskaya2011IdentificationAE, haile2003nonparametric} extends these
results to the more general asymmetric setting, where the Bayesian Nash
equilibrium has no closed-form expression; as such, these analyses are typically
limited to the setting with only two unique bidder types. 
Our algorithms operate in the latter
(significantly more challenging) setting, with the additional challenges of (a)
allowing each bidder to have their own unique value distribution (b)
only observing the winning bid, rather than all agents' bids; (c) not using
higher-order differentiability assumptions used in prior work while 
providing uniform convergence bounds (i.e., not depending on the
distributions being estimated).

\item In the computer science literature, there has been work on non-parametric
estimation of bid distributions in first-price auctions, under the stronger
assumption that the econometrician can insert bids which do not influence the
bidding behavior of the bidders~\cite{blum2015learning}. We compare to that
setting and work in Section~\ref{sec:1stPrice:additionalBidder}, explaining that
our work obtains substantial improvements on their rates. 
\end{itemize}

Besides the non-parametric identification work on auctions, discussed above,
there has been more extensive work on estimation and identification under
parametric or semi-parametric assumptions: see \citep{donald1996identification}
and \citep{athey2006empirical} for an overview.  For example,
\citet{athey2011comparing} fit the parameters of Weibull distributions to
observed maximum bids in to estimate bid distributions in USFS timber auctions.

\subsection{Preliminaries} 
\label{sec:prelims}
\paragraph{The (asymmetric) independent private values model.}
In this work, we consider the asymmetric independent private values (IPV)
model, with the additional stipulation that not all bids are observed. 
In this model, we observe a series of identical auctions between $k$ agents
(known $k \geq 2$): in each auction, every agent $i$ submits a {\em
bid} $X_i$, sampled independently from a (fixed) distribution with cumulative
distribution function $F_i$.
The sampled bids, together with the auction type, determine the {\em winner}
$Z$ of each auction (typically $Z = \arg\max_{i} X_i$) and a 
{\em transaction price} $Y$, i.e., what the winner pays for the auctioned
item.
In this work, we will only observe $Z$ and (sometimes) $Y$, and
rather than all bids $X_i$.

Two key differences between our setting and the typical IPV setup are (a)
the aforementioned partial observability; and (b) asymmetry---in particular, a
typical assumption is that all agents bid according 
to the {\em same} fixed distribution $F$, which simplifies both bid and value
estimation significantly (e.g., we could estimate a first-price auction by
learning the CDF of the largest bid, and then estimate the individual bid
distributions as the $k$-th root).

\paragraph{Statistical distances.}
Throughout our work, we provide finite-sample convergence bounds in terms of the
Wasserstein, L\'evy, and Kolmogorov distances, depending on the setting. 
The Wasserstein distance $\dw$
between two distributions $P$, $Q$ supported on $[0, 1]$ is 
\[\dw(P, Q) \triangleq \inf_{R} \E_{(x, y) \ts R} \left[ \abs{x - y} \right],\]
where the infimum is over all joint distributions $R$ with support $[0, 1]^2$ such
that the marginal of $x$ is equal to $P$ and the marginal of $y$ is equal to $Q$
when $(x, y) \ts R$. The Kolmogorov distance $\dk$ between two
distributions $P$ and $Q$ over an interval $I$ is defined as 
\[
    \dk(P, Q) \triangleq \sup_{x \in I}\ \abs{F_P(x) - F_Q(x)},
\] 
where $F_P$ and $F_Q$ are the cumulative distribution functions of $P$ and $Q$,
respectively. Finally, the L\'evy distance $D_L$ between $P$ and $Q$ is given by  
\[
    D_L(P, Q) = \min \lbrb{\epsilon:\ F_P(x - \epsilon) - \epsilon \leq F_Q(x) \leq F_P(x + \epsilon) + \epsilon}.
\]
Note that L\'evy distance is a strictly {\em weaker} notion than Wasserstein
distance and Kolmogorov distance, in the sense that both $D_L(P, Q) \leq \dk(P,
Q)$ and $D_L(P, Q) \leq \sqrt{\dw(P, Q)}$. Thus, all
of our results in Wasserstein and Kolmogorov distance
also effectively bound L\'evy distance.

\paragraph{Finite-sample rates.}
We present our convergence results using order notation:
in particular, the constants omitted from all order notation
in this paper are absolute constants that do not depend on the distributions
being estimated or any other parameters of the setting being considered. In
other words, our bounds are uniform. For example, whenever a bound in a theorem
statement reads as $O(f(k,1/\varepsilon,L))$, where $f$ is some function and $k,
\varepsilon, L$ are parameters of the setting, this means that there is an
absolute constant $C$ such that for any setting conforming to the setting
examined in this theorem we can replace $O(f(k,1/\varepsilon,L))$ in the theorem
statement by $C \cdot f(k,1/\varepsilon,L).$ Recall that
$\tilde{O}(f(\cdot))$ means that for some absolute constants $C$ and $k \ge 1$,
the bound can be replaced by $C f(\cdot) \log^k f(\cdot)$. Finally, in this work
we present most of our bounds in terms of the number of samples necessary to
attain a specific learning error $\varepsilon$---these can be straightforwardly
converted to bounds on $\varepsilon$ in terms of the number of samples $n$,
providing estimation rates as $n \to \infty$.

\section{Estimation from First-price Auction Data} 
\label{sec:max_sel_alg} 
In this section we show how to estimate the bid distributions from
a finite number of first-price auction observations.
We consider two regimes:
\begin{enumerate}
  \item[$\blacktriangleright$] In the \textit{full-support} regime, our goal is 
  to provide an estimation of the bid distributions in their whole support $[0, 1]$. As
  shown in \cref{thm:1stPrice:fullSupport}, in this regime we estimate the
  probability distributions within $\eps$ in Wasserstein distance. The sample 
  complexity here is $\approxeq (1/\eps)^k$ and has exponential
  dependence on the number of the agents $k$. As we explain in 
  \cref{sec:1stPrice:lowerBound}, this dependence on $k$ is necessary
  for the full-support regime (due to the exponentially low probability of
  observing winning bids near zero).
  \item[$\blacktriangleright$] In the \textit{effective-support} regime, our goal
  is to provide an estimation of the bid distributions only at the bid values that 
  have probability at least $\lambda$ to be observed as an outcome of the 
  first-price auction. As we show in \cref{thm:1stPrice:likely}, in this
  regime we avoid the exponential dependence on $k$ and we are able to get an
  algorithm that depends only polynomially in $\eps$ and $\gamma$ and only
  logarithmically in $k$. This is a doubly exponential improvement over the
  full-support regime and an exponential improvement on the best known
  effective-support result from \cite{blum2015learning}. 
  This result also provides the first algorithm with
  sublinear sample complexity for this problem.
\end{enumerate}

  Our first step is to formally define the procedure from which the first-price 
auction data are generated. The observation access that we assume is minimal in
the sense that we only observe the outcome of the auction; who wins and how much
they pay. 

\begin{definition}[First-Price Auction Data] \label{def:max_sel_obs}
    Let $\{F_i\}_{i = 1}^k$ be $k$ cumulative distribution functions with
  support $[0, 1]$, i.e. $F_i(x) = 0\ \forall x < 0$ and $F_i(1) = 1$. 
  A sample $(Y, Z)$ from a
  first-price auction with bid distributions $\{F_i\}_{i = 1}^k$ is generated as
  follows:
  \begin{enumerate}
    \item first generate $X_i \ts F_i$ independently for all $i \in [k]$,
    \item observe the tuple 
          $(Y, Z) \triangleq (\max_{i \in [k]} X_i, \argmax_{i \in [k]} X_i)$.
  \end{enumerate}
\end{definition}

A different access model explored in \cite{blum2015learning}
gives the econometrician control over an additional agent that has the ability
to bid arbitrarily, but only allows them to observe the identity of the winner
of each auction (not the transaction price). Using our result we can
also improve the result of \cite{blum2015learning} under this model (see
\cref{sec:1stPrice:additionalBidder}).

\subsection{Estimation of Bid Distributions} 
\label{sec:1stPrice:bids}
We are now ready to state our main results for the estimation of the bid 
distributions given sample access to first-price auction data as defined in 
\cref{def:max_sel_obs}. We start with our result for the full-support regime.

\begin{theorem}[First-Price Auctions -- Full Support] \label{thm:1stPrice:fullSupport}
  Let $\{(Y_i, Z_i)\}_{i = 1}^n$ be $n$ i.i.d. samples from the 
  first-price auction as per \cref{def:max_sel_obs}. Assume that the cumulative 
  distribution functions $F_i$ are continuous and satisfy 
  \( \abs{F_i(x) - F_i(y)} \ge \lambda \abs{x - y} \text{ for all } x, y \in [0, 1]. \)
  Then, there is a polynomial-time algorithm that computes
  functions $\hat{F}_i$ for $i \in [k]$ such that 
  \[ \Pr\left(\dw(\hat{F}_i, F_i) \le \eps\right) \ge 1 - \delta \] 
  for all $i \in [k]$ assuming that
  $n = \tilde{\Theta}\left(\lprp{\frac{2}{\lambda \cdot \eps}}^{4k}
  \frac{\log(1/\delta)}{\eps^2} \right)$, where $\dw$ is the Wasserstein distance.
\end{theorem}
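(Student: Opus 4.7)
The plan is to exploit the identity
\[
\log F_i(y) \;=\; -\int_y^1 \frac{dH_i(u)}{H(u)},
\]
where $H_i(u) := \Pr[Y \le u,\, Z = i]$ and $H(u) := \Pr[Y \le u] = \prod_j F_j(u)$. This holds because $H_i'(u) = f_i(u)\prod_{j \neq i}F_j(u) = H(u)\cdot(\log F_i)'(u)$ together with $F_i(1)=1$. The algorithm will first form the empirical (sub-)CDFs $\hat H$ and $\hat H_i$ from the $n$ samples; by the DKW inequality and a union bound over the $k+1$ functions, $\sup_y|\hat H(y)-H(y)|$ and $\max_i \sup_y|\hat H_i(y)-H_i(y)|$ are each at most $\eta := O(\sqrt{\log(k/\delta)/n})$ with probability at least $1-\delta$. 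It then plugs these into the identity with a cutoff: set $\hat F_i(y) := \exp\!\bigl(-\int_y^1 d\hat H_i(u)/\hat H(u)\bigr)$ for $y \ge y_0 := \eps/2$, interpreted as the finite sum over sample points $Y_j \in (y,1]$ with $Z_j=i$, each contributing $(1/n)/\hat H(Y_j)$; extend monotonically on $[0,y_0]$ by any interpolation (irrelevant, since Wasserstein absorbs at most $y_0$ of mass there).

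The Lipschitz lower bound $\lambda$ on each $F_j$, together with continuity, forces $F_j(0)=0$ and hence $F_j(y) \ge \lambda y$, so $H(y) \ge (\lambda y)^k$. To bound the plug-in error I would decompose
\[
\frac{d\hat H_i}{\hat H} - \frac{dH_i}{H} \;=\; \frac{d(\hat H_i - H_i)}{\hat H} \;+\; dH_i\cdot\frac{H - \hat H}{H\hat H}.
\]
Once $n$ is large enough that $\eta \le (\lambda y_0)^k/2$, uniform closeness gives $\hat H(y) \ge H(y)/2$ for all $y \ge y_0$, so all denominators stay comparable to the truth. The second piece is then controlled by $2\eta\int_y^1 dH_i/H^2 \le 2\eta/H(y)^2$, using the crucial sub-probability bound $H_i \le 1$ and monotonicity of $H$. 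The first piece is handled by Stieltjes integration by parts against the signed integrator $\hat H_i - H_i$, whose sup norm is $\eta$: the boundary terms are at most $O(\eta/H(y))$ and the remaining $\int(\hat H_i-H_i)\,d\hat H/\hat H^2$ is again $O(\eta/H(y))$. Summing, $|\log \hat F_i(y) - \log F_i(y)| \le C\eta/H(y)^2 \le C\eta/(\lambda y)^{2k}$ uniformly on $[y_0,1]$, which exponentiates to the same pointwise bound on $|\hat F_i(y) - F_i(y)|$.

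Since for distributions on $[0,1]$ we have $\dw(\hat F_i, F_i) = \int_0^1 |\hat F_i - F_i|$, the $[0,y_0]$ piece contributes at most $y_0 = \eps/2$ and the $[y_0,1]$ piece contributes at most $C\eta/(\lambda y_0)^{2k}$. Requiring both parts to be $O(\eps)$ forces $\eta \lesssim \eps\cdot(\lambda\eps/2)^{2k}$, giving $n = \tilde\Theta\bigl((2/(\lambda\eps))^{4k}\log(1/\delta)/\eps^2\bigr)$, as claimed. The main obstacle is the Stieltjes-integral error analysis: the integrator $d\hat H_i - dH_i$ is \emph{not} a signed measure of small total variation, only of small signed distribution function, so a naive triangle inequality on $|d\hat H_i - dH_i|$ loses catastrophically. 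Integration by parts combined with the sub-probability bound $H_i \le 1$, which keeps $\int dH_i/H^2 \le 1/H(y)^2$ rather than something proportional to a density of $H_i$, is what gives the $2k$ exponent in the log error and hence the $4k$ exponent in the final sample bound.
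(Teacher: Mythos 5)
Your proposal is correct and uses the same estimator as the paper (your $\exp(-\int_y^1 d\hat H_i/\hat H)$ is exactly the paper's $\hat F_i=\exp(-\hat G_i)$, with $\hat G_i(y)=\frac1n\sum_j \mathbf 1\{Y_j\ge y, Z_j=i\}/\hat H(Y_j)$), and the outer reduction is also the paper's: truncate at $y_0=\eps/2$, use $H(y)\ge(\lambda y)^k$, get a uniform CDF bound on $[\eps/2,1]$, and convert to Wasserstein by paying $\eps/2$ below the cutoff. Where you genuinely diverge is the concentration analysis. The paper routes through the effective-support result (\cref{thm:1stPrice:likely}): it applies DKW only to $\hat H$, introduces the intermediate quantity $\wt G_i$ built with the \emph{true} weights $1/H(Y_j)$, controls $\hat G_i-\wt G_i$ by the uniform bound on $|1/\hat H-1/H|$, and then controls $\wt G_i-G_i$ stochastically via Hoeffding on a grid of quantile points plus monotone interpolation. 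You instead apply uniform (DKW-type) bounds to \emph{both} $\hat H$ and $\hat H_i$ and then run a purely deterministic perturbation analysis of the Stieltjes integral: the decomposition into $\int d(\hat H_i-H_i)/\hat H$ plus $\int (1/\hat H-1/H)\,dH_i$, with the first piece handled by integration by parts against the small-sup-norm integrator and the second by monotonicity of $H$. This is a self-contained alternative to the paper's grid-plus-Hoeffding step, and it lands on the same bottleneck ($\eta\lesssim \gamma^2\eps$ with $\gamma=(\lambda\eps/2)^k$), hence the same $(2/(\lambda\eps))^{4k}\eps^{-2}$ rate; your cruder $1/H(y)^2$ bound on the second piece costs nothing relative to the stated theorem. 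Two small caveats: you justify the key identity via densities ($H_i'=f_i\prod_{j\ne i}F_j$), whereas the theorem deliberately does not assume densities exist (the lower-Lipschitz condition does not imply absolute continuity); the identity still holds for continuous $F_i$, but you should invoke the Stieltjes chain-rule argument of \cref{lem:it_ident} rather than differentiate. Likewise, DKW applies to $\hat H_i$ only after the standard reduction of the sub-CDF $\Pr[Y\le u, Z=i]$ to a genuine CDF (the device used in \cref{lem:epsg_lem}), and the integration by parts should be done in the Lebesgue--Stieltjes sense with care about the shared jump points of $\hat H_i$ and $1/\hat H$; none of this changes the bounds.
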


  As we show in \cref{sec:1stPrice:lowerBound}, the sample complexity of 
\cref{thm:1stPrice:fullSupport} is almost optimal. Nevertheless, as we already 
explained, the exponential dependence on the number of agents $k$ can be reduced to
only logarithmic dependence if we only focus on the part of the support that is 
likely to be observed. In this case, our estimation guarantee is also 
simpler: we estimate the cumulative distribution functions 
with additive error $\eps$.

\begin{theorem}[First-Price Auctions -- Effective Support] \label{thm:1stPrice:likely}
  Let $\{(Y_i, Z_i)\}_{i = 1}^n$ be $n$ i.i.d. samples from the same 
  first-price auction as per \cref{def:max_sel_obs} and assume that the cumulative 
  distribution functions $F_i$ are continuous. Then, there exists a polynomial-time
  estimation algorithm, that computes the cumulative distribution functions $\hat{F}_i$
  for $i \in [k]$, such that for every
  $p, \gamma \in \{p, \gamma \ge 0:\ \Pr_{(Y, Z) \ts \calP_1}(Y \le p) \ge \gamma\}$, and 
  every $\eps \in (0, \gamma/2]$,
  \[ \Pr\left(\max_{x \in [p, 1]} \abs*{\hat{F}_i(x) - F_i(x)} \le \eps\right) \ge 1 - \delta \] 
  for all $i \in [k]$ assuming that 
  $n = \tilde{\Theta}\left({\log(k/\delta)}/{(\gamma^4 \eps^2)}\right)$.
\end{theorem}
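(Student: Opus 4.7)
The estimator rests on a product-limit-type identity. Let $G(x) := \Pr(Y \leq x) = \prod_{i=1}^k F_i(x)$ be the CDF of the maximum bid and $H_i(x) := \Pr(Y \leq x,\, Z = i)$ the sub-CDF of the transaction price on the event that bidder $i$ wins. Since
\[
    dH_i(t) = \prod_{j \neq i} F_j(t)\, dF_i(t) = \frac{G(t)}{F_i(t)}\, dF_i(t) = G(t)\, d\log F_i(t),
\]
integrating $dH_i/G$ from $x$ to $1$ together with $F_i(1) = 1$ produces
\[
    F_i(x) \;=\; \exp\!\left(-\int_x^1 \frac{dH_i(t)}{G(t)}\right).
\]
Both $G$ and $H_i$ are directly observable, so my plan is to plug in their empirical versions $\hat G$ and $\hat H_i$, computed from the $n$ samples $(Y_j, Z_j)$, and return
\[
    \hat F_i(x) \;:=\; \exp\!\left(-\int_x^1 \frac{d\hat H_i(t)}{\hat G(t)}\right) \;=\; \exp\!\left(-\frac{1}{n}\sum_{j:\, Z_j = i,\; Y_j > x} \frac{1}{\hat G(Y_j)}\right),
\]
which is well-defined since $\hat G(Y_j) \geq 1/n$, and is evaluable in polynomial time. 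Crucially, $\hat F_i$ does not depend on $p$ or $\gamma$, so a single output is analyzed against any admissible pair.

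The analysis proceeds in three steps. First, the DKW inequality applied to the threshold classes $\{\mathbf{1}[Y \leq x]\}_x$ and $\{\mathbf{1}[Y \leq x,\, Z = i]\}_x$, union-bounded over $i \in [k]$, gives $\|\hat G - G\|_\infty$ and $\max_i \|\hat H_i - H_i\|_\infty$ both at most $\eta := O(\sqrt{\log(k/\delta)/n})$ with probability $\geq 1-\delta$. Second, once $\eta \leq \gamma/2$, the hypothesis $G(p) \geq \gamma$ forces $\hat G(t) \geq \gamma/2$ on $[p, 1]$. Third, the exponent's error splits as
\[
    \int_x^1 \frac{dH_i}{G} - \int_x^1 \frac{d\hat H_i}{\hat G} \;=\; \int_x^1 \frac{d(H_i - \hat H_i)}{G} \,+\, \int_x^1 d\hat H_i\!\left(\frac{1}{G} - \frac{1}{\hat G}\right);
\]
integration by parts bounds the first summand by $O(\eta/\gamma)$, using $G(1) = 1$, $G(x) \geq \gamma$, and $\int_x^1 dG/G^2 = 1/G(x) - 1 \leq 1/\gamma$, while the second summand is at most $\int_x^1 d\hat H_i \cdot \sup_{t \geq p} |\hat G - G|/(G \hat G) \leq 4\eta/\gamma^2$. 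Hence $|\log F_i(x) - \log\hat F_i(x)| = O(\eta/\gamma^2)$ uniformly on $[p, 1]$, and since $F_i(x), \hat F_i(x) \in (0, 1]$ the mean-value theorem upgrades this to $|F_i(x) - \hat F_i(x)| \leq |\log F_i(x) - \log \hat F_i(x)|$. Imposing $\eta/\gamma^2 \leq \varepsilon$ yields the claimed $n = \tilde{\Theta}(\log(k/\delta)/(\gamma^4 \varepsilon^2))$.

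The main obstacle I anticipate is controlling the second term in the decomposition: $\hat G$ is vanishingly small near $0$, so the integrand $1/\hat G$ is genuinely unbounded over the full support and no amount of DKW concentration saves it below the level $G(p) \geq \gamma$. Restricting to $[p, 1]$ is exactly what makes $1/\hat G$ uniformly well-conditioned; the $\gamma$ lower bound is then used twice---once through DKW concentration to keep $\hat G$ away from $0$, and once to convert $|\hat G - G|$ into a multiplicative error of order $1/\gamma^2$---which is exactly what produces the $1/\gamma^4$ factor in the sample complexity.
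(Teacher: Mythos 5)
Your proposal is correct and achieves the stated rate, and although it starts from the same identification identity $F_i(x)=\exp\bigl(-\int_x^1 \tfrac{1}{H(t)}\,dH_i(t)\bigr)$ (your $G$ is the paper's $H$) and the same plug-in estimator $\widehat{F}_i(x)=\exp\bigl(-\tfrac1n\sum_{j:Z_j=i,\,Y_j\ge x}1/\widehat{H}(Y_j)\bigr)$, the concentration analysis follows a genuinely different route. The paper applies DKW only to the max-CDF $\widehat{H}$, swaps $1/\widehat{H}$ for $1/H$ in the exponent, and then proves pointwise Hoeffding concentration of the resulting empirical sum on a finite grid $V=\cup_i F_i^{-1}(U_i)$, extending to all of $[p,1]$ via monotonicity of $\widehat{F}_i$ and the small variation of $F_i$ between consecutive grid points. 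You instead invoke uniform (DKW-type) bounds for both $\widehat{H}$ and the sub-CDFs $\widehat{H}_i$, union-bounded over $i$, and then control the exponent error \emph{deterministically}: the term $\int_x^1 \tfrac{1}{H}\,d(H_i-\widehat{H}_i)$ by Riemann--Stieltjes integration by parts (boundary terms plus total variation of $1/H$, which is at most $1/\gamma-1$ on $[p,1]$, giving $O(\eta/\gamma)$), and the term $\int_x^1\bigl(\tfrac1H-\tfrac1{\widehat H}\bigr)d\widehat{H}_i$ by the uniform bound $O(\eta/\gamma^2)$ valid once $\eta\le\gamma/2$; exponentiating with $|e^{-a}-e^{-b}|\le|a-b|$ for $a,b\ge0$ and setting $\eta\asymp\gamma^2\varepsilon$ reproduces $n=\tilde\Theta(\log(k/\delta)/(\gamma^4\varepsilon^2))$. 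What your route buys is uniformity in $x$ for free, with no grid construction or interpolation step (and no use of $F_i^{-1}$ in the analysis); what the paper's route buys is a more elementary toolkit (only DKW on one empirical CDF plus pointwise Hoeffding). Two steps you assert without proof are true but deserve a sentence each: DKW does extend to the sub-CDFs $\widehat{H}_i$ (e.g.\ via the transformation $A_j=Y_j\mathbf{1}\{Z_j=i\}+\mathbf{1}\{Z_j\neq i\}$, exactly the device the paper uses in its second-price analysis), and integration by parts against the signed measure $d(H_i-\widehat{H}_i)$ is legitimate because $1/H$ is continuous and of bounded variation on $[p,1]$ (continuity of the $F_i$ plus $H\ge\gamma$ there); the strict-versus-weak inequality $Y_j>x$ in your estimator is immaterial under continuity. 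With those two remarks added, your argument is a complete and arguably cleaner proof of the theorem.
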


Finally, we establish estimation of the corresponding probability density
functions $\{f_i\}$:
\begin{theorem} \label{thm:1stPrice:density}
  Let $\{(Y_i, Z_i)\}_{i = 1}^n$ be $n$ i.i.d. samples from the same 
  first-price auction as per \cref{def:max_sel_obs} and assume the
  densities $f_i$ of $F_i$ are well-defined and Lipschitz continuous, i.e.,
  \[ \abs*{f_i(x) - f_i(y)} \le L \abs*{x - y} \text{ for all } x, y \in [0, 1]. \]
  Then, there exists a polynomial-time estimation algorithm, that computes
  functions $\hat{f}_i$ for $i \in [k]$, with the following
  guarantee; for every $p, \gamma \ge 0$ such that 
  $\Pr(Y \le p) \ge \gamma$, and for every 
  $\eps \in (0, \gamma/2]$ it holds that 
  \[ \Pr\left( \int_p^1 \abs*{\hat{f}_i(x) - f_i(x)} dx \le \eps\right) \ge 1 - \delta \] 
  for all $i \in [k]$ assuming that 
  $n = \tilde{\Theta}\left({L^2 \cdot \log(k/\delta)}/{(\gamma^4 \cdot \eps^4)}\right)$.
\end{theorem}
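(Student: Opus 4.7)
\textbf{Proof proposal for Theorem~\ref{thm:1stPrice:density}.}

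The plan is to reduce the density estimation problem to the CDF estimation problem already solved by Theorem~\ref{thm:1stPrice:likely}, using a finite--difference (histogram--style) estimator whose bandwidth $h$ is tuned against the Lipschitz constant $L$. Concretely, first invoke Theorem~\ref{thm:1stPrice:likely} with target Kolmogorov accuracy $\eps' \triangleq c\,\eps^2/L$ on the interval $[p,1]$ (for a small absolute constant $c>0$ to be fixed at the end) and failure probability $\delta$. Taking a union bound over $i\in[k]$ (already baked into Theorem~\ref{thm:1stPrice:likely} via the $\log(k/\delta)$ factor), this produces CDF estimators $\hat F_i$ with $\sup_{x\in[p,1]}|\hat F_i(x)-F_i(x)|\le\eps'$ using $n=\tilde\Theta\!\big(\log(k/\delta)/(\gamma^4 (\eps')^2)\big)=\tilde\Theta\!\big(L^2\log(k/\delta)/(\gamma^4\eps^4)\big)$ samples, matching the target. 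One minor check is that $\eps'\le\gamma/2$; this follows since we may assume $L\ge 1$ (absorbing into the absolute constant) and then $\eps'\le c\eps\le \gamma/4$.

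Next, define the estimator $\hat f_i(x) \triangleq (\hat F_i(x+h)-\hat F_i(x))/h$ for $x\in[p,1-h]$, and set $\hat f_i(x)\triangleq 0$ on $[1-h,1]$, with bandwidth $h\triangleq\sqrt{\eps'/L}$. Pointwise on $[p,1-h]$, write
\[
\hat f_i(x)-f_i(x)=\underbrace{\frac{\hat F_i(x+h)-\hat F_i(x)-(F_i(x+h)-F_i(x))}{h}}_{\text{CDF-estimation error}}+\underbrace{\frac{F_i(x+h)-F_i(x)}{h}-f_i(x)}_{\text{bias}}.
\]
The first term is bounded in absolute value by $2\eps'/h$. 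For the second, the mean value theorem gives $(F_i(x+h)-F_i(x))/h=f_i(\xi)$ for some $\xi\in[x,x+h]$, and $L$-Lipschitzness of $f_i$ yields a bias of at most $Lh$. Hence $|\hat f_i(x)-f_i(x)|\le 2\eps'/h+Lh$ on $[p,1-h]$.

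Integrating over $[p,1]$ and treating the endpoint interval $[1-h,1]$ separately:
\[
\int_p^1 |\hat f_i(x)-f_i(x)|\,dx \;\le\; (1-p)\!\left(\tfrac{2\eps'}{h}+Lh\right)+\int_{1-h}^1 f_i(x)\,dx.
\]
Since $\int_0^1 f_i=1$, the average value of $f_i$ on $[0,1]$ is $1$, and $L$-Lipschitzness of $f_i$ implies $\|f_i\|_\infty\le 1+L$, so the last term is at most $(1+L)h$. Using $1-p\le 1$, the total $L^1$ error is $O(\eps'/h+Lh)$. With the choice $h=\sqrt{\eps'/L}$ this is $O(\sqrt{L\eps'})$, and plugging in $\eps'=c\eps^2/L$ with $c$ sufficiently small makes it at most $\eps$, as required.

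The main obstacle, and the reason for the $\eps^{-4}$ rather than $\eps^{-2}$ dependence, is precisely the bias/variance tradeoff forced by only having Kolmogorov--distance control on $\hat F_i$: one loses a factor $1/h$ when differentiating the noisy CDF, so the bandwidth cannot be shrunk below $\sqrt{\eps'/L}$ and the CDF accuracy must therefore be pushed down to $\eps^2/L$. The boundary term at $1-h$ is benign given the Lipschitz bound on $\|f_i\|_\infty$, and the positivity of $\hat f_i$ is not required by the theorem statement (one could additionally clip to $\max(\hat f_i,0)$, which only decreases the $L^1$ error since $f_i\ge 0$).
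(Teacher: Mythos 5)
Your proposal is correct and follows essentially the same route as the paper: invoke Theorem~\ref{thm:1stPrice:likely} with CDF accuracy of order $\eps^2/L$, form the finite-difference estimator $\hat f_i(x)=(\hat F_i(x+h)-\hat F_i(x))/h$, split the $L^1$ error into a $2\eps'/h$ estimation term plus an $Lh$ Lipschitz bias, and optimize $h=\sqrt{\eps'/L}$ to land at the stated $\tilde\Theta(L^2\log(k/\delta)/(\gamma^4\eps^4))$ sample complexity. Your explicit treatment of the boundary interval $[1-h,1]$ and the check that $\eps'\le\gamma/2$ are small bookkeeping details the paper glosses over, but they do not change the argument.
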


Before explaining the formal proofs of the above theorems we give some intuition
behind our estimation algorithm. This intuition is given in a simplified setting 
where: (1) we assume the population model where we have access to infinitely many 
samples from the first-price auction data defined in \cref{def:max_sel_obs}, and
(2) the distributions are smoothed enough so that all the probability density 
functions that are involved are well defined. In this simplified setting we have
access to the following distributions:
\begin{enumerate}
    \item[$\triangleright$] $H_i$ is the cumulative distribution function of 
    $Y$ conditioned on 
    $Z = i$, for $i \in [k]$ and
    \item[$\triangleright$] $H$ is the cumulative distribution function of 
    $Y$ with no conditioning on $Z$.
\end{enumerate}
If we assume that all the $H_i$'s 
have well-defined densities $h_i$, then
\[ h_i(x) = f_i(x) \cdot \prod_{j \neq i} F_{j}(x), \quad H(x) = \prod_{j \in [k]} F_j(x), \quad \text{and} \quad H(x) = \sum_{j \in [k]} H_j(x). \]
Based on the above relations we can solve for the distribution $F_1$ as follows
\[ \frac{d}{dx} \log(F_i(x)) = \frac{f_i(x)}{F_i(x)} = \frac{h_i(x)}{H(x)} \implies F_i(x) = \exp\left( - \int_x^1 \frac{h_i(z)}{H(z)} d z \right). \]

This simple idea summarizes our approach in the population setting where 
infinite samples are available. Moving to the finite sample case an important
observation is that the aforementioned expression of $F_i$ can be also written
as
\[ F_i(x) = \exp\left(- \E_{(y, z) \ts \calP_1}\left[ \left. \frac{\bm{1}\{z = i\}}{H(y)} ~\right|~ y \ge x \right] \right). \]
The above expression allows the expectation involved to be estimated with an 
empirical expectation instead of an integral, assuming that a good estimation of $H(z)$ is 
computed. Towards designing our actual estimation algorithm and
proving its exact sample complexity we face the following additional technical
difficulties:
\begin{enumerate}
  \item in the above outline we assume that all the distributions are smooth 
  enough so that all the densities are well defined---in our main 
  theorem this assumption is not necessary,
  \item the usual estimation of $H$ has an additive error, whereas in the above 
  expression a multiplicative error guarantee is needed,
  \item the term $1/H(z)$ that is crucial in our estimation is not numerically
  stable for $z$ close to $0$ where $H(z)$ can also be very close to $0$ as 
  well.
\end{enumerate}

\subsubsection{Proof of \cref{thm:1stPrice:likely}} 
\label{sec:proof:1stPrice:likely}
We start by considering the effective-support setting.
Our first result will be an
information theoretic result enabling identification of $F_i$ with access to the
function $H$ and the measure $H_i$ (without requiring a density function).

\begin{lemma} \label{lem:it_ident}
    For all $i \in [k]$ and all $x \in (0, 1)$ such that $F_i(x) > 0$ and 
  $H(x) > 0$,
  \begin{equation*}
    F_i (x) = \exp \left(- \int_{x}^1 \frac{1}{H(y)} \ d H_i\right).
  \end{equation*}
\end{lemma}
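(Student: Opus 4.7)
The approach is to compute the integral $\int_x^1 H(y)^{-1}\, dH_i(y)$ directly using the independence structure of the bids, and recognize the result as $-\log F_i(x)$, so that exponentiating gives the desired formula. Everything hinges on pinning down the (sub-probability) measure $H_i$ on $[0,1]$. Since the event $\{Y \le y,\ Z = i\}$ is the same as $\{X_i \le y\} \cap \bigcap_{j \neq i}\{X_j < X_i\}$, conditioning on the value of $X_i$, invoking independence, and using continuity of each $F_j$ (so that $\Pr[X_j < t] = F_j(t)$ and ties occur with probability zero), one obtains the Lebesgue--Stieltjes identity
\[ dH_i(y) \;=\; \prod_{j \neq i} F_j(y)\, dF_i(y). \]

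With this identity in hand, on the set $\{y : H(y) > 0\}$ one has $H(y) = F_i(y)\prod_{j \neq i} F_j(y)$, so the $\prod_{j \neq i}F_j$ factors cancel and
\[ \int_x^1 \frac{1}{H(y)}\, dH_i(y) \;=\; \int_x^1 \frac{1}{F_i(y)}\, dF_i(y). \]
The final step is a change of variables: because $F_i$ is a continuous CDF on $[0,1]$, the pushforward of the measure $dF_i$ along the map $F_i$ is Lebesgue measure on $[0,1]$, so for any Borel $g$, $\int_x^1 g(F_i(y))\, dF_i(y) = \int_{F_i(x)}^{1} g(u)\, du$. Applied to $g(u) = 1/u$ (valid since $F_i(x) > 0$ by hypothesis), this gives $-\log F_i(x)$; exponentiating yields the lemma.

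The main point requiring care will be the change of variables in the last step, which must be carried out without assuming $F_i$ has a density or is strictly increasing --- the hypothesis in Theorem~\ref{thm:1stPrice:likely} is only continuity. The required fact is the classical probability-integral-transform statement that for continuous $F$, if $X \sim F$ then $F(X)$ is uniform on the range of $F$, which reduces the Stieltjes integral to an ordinary Lebesgue integral on $[F_i(x),1]$. A secondary but routine point is verifying that the first displayed identity is genuinely an equality of measures (rather than a heuristic manipulation of densities); this follows from Fubini/Tonelli applied to the joint distribution of $(X_1,\ldots,X_k)$, again using continuity of the $F_j$ to rule out ties.
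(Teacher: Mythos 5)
Your proof is correct, and its skeleton matches the paper's: both arguments reduce the claim to the two facts that $\frac{1}{H(y)}\,dH_i = \frac{1}{F_i(y)}\,dF_i$ on $[x,1]$ (using monotonicity so that $H(y)\ge H(x)>0$ and $F_i(y)\ge F_i(x)>0$ there) and that $\int_x^1 \frac{1}{F_i(y)}\,dF_i = -\log F_i(x)$ since $F_i(1)=1$. The difference is in how these two facts are justified. The paper starts from a chain-rule lemma for Stieltjes integrals (Lemma 3.1 of the cited reference) to obtain $\log F_i(1)-\log F_i(x)=\int_x^1 F_i^{-1}\,dF_i$, and then passes to $\int_x^1 H^{-1}\,dH_i$ by appealing to continuity and ``properties of Riemann--Stieltjes integration.'' You instead make both steps self-contained: you derive the measure identity $dH_i=\prod_{j\neq i}F_j\,dF_i$ explicitly from independence and continuity (ties are null), cancel against $H=\prod_j F_j$, and evaluate $\int_x^1 F_i^{-1}\,dF_i$ via the probability integral transform, which correctly handles CDFs that are continuous but not strictly increasing (flat stretches carry no $dF_i$-mass, and $F_i(X_i)$ is uniform, so the boundary event $\{F_i(X_i)=F_i(x)\}$ is null and $g(u)=1/u$ is bounded on $[F_i(x),1]$). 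What your route buys is independence from the external chain-rule lemma and an explicit verification of the density-free manipulations the paper compresses into one sentence; what the paper's route buys is brevity, outsourcing exactly those verifications to the cited lemma and standard Stieltjes facts.
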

\begin{proof}
  Using Lemma 3.1 from \cite{matematyczny2002chain} we have that
  \[
    \log(F_i(1)) - \log(F_i(x)) = \int_x^1 \frac{1}{F_i(y)} \ d F_i 
            = \int_x^1 \frac{\prod_{j \neq i} F_j(y)}{\prod_{j \in [k]} F_j(y)} \ d F_i 
            = \int_x^1 \frac{1}{H(y)} d H_i,
  \]
  where the last equality follows from the continuity of $F_i$'s and the properties
  of Riemann-Stieltjes integration. The lemma follows by observing that 
  $F_i(1) = 1$.
\end{proof}

We now focus our attention on obtaining good estimates of the quantity within the
exponential on the right hand side in \cref{lem:it_ident}. We introduce the
following notation: 
\begin{gather*}
    \hat{H}(x) \triangleq \frac{1}{n} \sum_{j = 1}^n \bm{1} \lbrb{Y_j \leq x} 
    \qquad 
    \hat{G}_i (x) \triangleq \frac{1}{n} \sum_{j = 1}^n \frac{1}{\hat{H} (Y_j)} \bm{1} \lbrb{Y_j \geq x \text{ and } Z_j = i}
\end{gather*}

\noindent Based on the above definitions we can define our estimate for $F_i$ as
\( \hat{F}_i (x) = \exp \left(-\hat{G}_i(x)\right). \)

\noindent Our next goal is to prove that $\hat{F}_i$ is close to $F_i$ for every
value $y \in [0, 1]$ such that $H(y) \ge \gamma$.

Now we establish 
concentration of $\hat{H}$. 
By the DKW inequality \cite{dkw}:
\begin{equation*}
  \max_{x \in [0, 1]} \abs*{\hat{H}(x) - H (x)} \leq \frac{1}{20} \cdot \gamma^2 \eps
\end{equation*}
with probability at least $1 - \delta / 2$ for our setting of $n$. Conditioning on
the above event and observing that $H(x) \geq \gamma$ for all $x \geq p$, 
\begin{equation} \label{eq:ih_err}
  \max_{x \in [p, 1]} \abs*{{1/H(x)} - {1/\hat{H}(x)}} \leq \frac{\eps}{10}.
\end{equation}
To establish concentration of $\hat{G}_i(x)$, we introduce another quantity
$\wt{G}_i$, defined as follows:
\begin{equation*}
  \wt{G}_i(x) \triangleq \frac{1}{n} \cdot \sum_{j = 1}^n \frac{1}{H(Y_i)} \bm{1} \lbrb{Y_j \geq x \text{ and } Z_j = i}.
\end{equation*}
We have from \eqref{eq:ih_err} that $\abs{\hat{G}_i (x) - \wt{G}_i (x)} \leq
\frac{\eps}{10}$ for all $x \in [p, 1]$.
Thus, it suffices to establish concentration of $\wt{G}$ around $G$. We first
prove concentration on a discrete set of points and interpolate to the rest of 
the interval. Define $U_i$ and $V_i$ as:
\begin{equation*}
  U_i = \left\{\gamma + i \cdot \frac{\eps}{10}: i \in [N] \cup \{0\} \text{ and } \gamma + i \cdot \frac{\eps}{10} \leq 1\right\} \cup \{1\} \text{ and } V_i = F_i^{-1} (U_i).
\end{equation*}
and let $V = \cup_{i \in [k]} V_i$. We have for all $x \in V, i \in [k]$, by
Hoeffding's inequality that:
    \begin{equation}
        \label{eq:gt_bound}
        \abs{\wt{G}_i (x) - G_i (x)} \leq \frac{\eps}{10}
        \quad 
        \text{with probability at least $1 - \delta / 2$.}
    \end{equation}
     We now condition on the above event as well. By combining \cref{eq:ih_err,eq:gt_bound}, we get:
    \begin{align*}
        \forall x \in V, i \in [k]: \abs{\hat{G}_i(x) - G_i(x)} 
        &\leq \eps/5, \text{and so for all $x \in V, i \in [k]$, we have:} \\
        \exp \lbrb{-{\eps/5}}\cdot F_i(x) \leq \hat{F}_i(x) 
        &\leq \exp \lbrb{{\eps/5}}\cdot F_i(x).
    \end{align*}
    We now extend from $V$ to the rest of $[p,1]$. Note that $\hat{G}_i(x)$ is a decreasing function of $x$. Hence, $\hat{F}_i$ is an increasing function of $x$. Now, let $x \in (p, 1) \setminus V$ and $i \in [k]$. We must have $x_l, x_h \in V$ with $x_l < x \leq x_h$ satisfying $F_i(x_h) - F_i(x_l) \leq \eps/10$. We now get:
    \begin{align*}
        &\hat{F}_i(x) \leq \hat{F}_i(x_h) \leq \exp\lbrb{{\eps/5}} 
        \cdot F_i(x_h) \leq \exp\lbrb{{\eps/5}} F_i(x_l) + 
        {\eps}/{8} \leq \exp\lbrb{{\eps/5}} F_i(x) + \eps/8,
        \\
        &\hat{F}_i(x) \geq 
        \exp\lbrb{-{\eps/5}} 
        F_i(x_l) \geq \exp\lbrb{-{\eps/5}} F_i(x_h) + 
        {\eps/10} \geq \exp\lbrb{- {\eps/5}} F_i(x) + {\eps/10}.
    \end{align*}
  The above two inequalities and our condition on $\eps$ conclude the proof. 
  \hfill \qed

\subsubsection{Proof of \cref{thm:1stPrice:fullSupport}} 
\label{sec:proof:1stPrice:fullSupport}
\noindent We now leverage our effective-support recovery result to recover bid
distributions on their full support (in Wasserstein distance). 
Under the ``lower bound on density'' assumption,
\begin{align} \label{eq:lowerBoundH}
  H(\eta) = \prod_{j \in [k]} F_j(\eta) \ge (\lambda \cdot \eta)^k.
\end{align}
Now, setting $\gamma = (\lambda \cdot \eta)^k$ and using
\cref{thm:1stPrice:likely} we have that 
$\tilde{\Theta}\left(\frac{\log(k/\delta)}{\lambda^k \cdot \eta^k \cdot \eta^2}\right)$ 
samples suffice to find estimates $\hat{F}_i$ such that the additive error
between $\hat{F}_i$ and $F_i$ is at most $\eta$ in the interval $[\eta, 1]$. For
every $i$, the maximum possible mass in the interval $[0, \eta]$ with respect to
the measure $F_i$ is $1$. Therefore, any two measures with support $[0, \eta]$ 
mass at most $1$ have a Wasserstein distance of at most $\eta$. Also, in the subset
$[\eta, 1]$ of the support we have that since the longest distance in the support 
is at most $1$ and $\max_{x \in [\eta, 1]} \abs*{\hat{F}_i(x) - F_i(x)} \le \eta$ 
we have that the Wasserstein distance of the measures $\hat{F}_i$ and $F_i$
conditioned on the support $[\eta, 1]$ is at most $\eps \cdot 1$. Thus,
\[ \dw(\hat{F}_i, F_i) \le 2 \cdot \eta. \]
Setting $\eta = \eps/2$ the theorem follows. \hfill \qed

\subsubsection{Proof of \cref{thm:1stPrice:density}} \label{sec:proof:1stPrice:density}

  We are going to use the estimation $\hat{F}_i$ from \cref{thm:1stPrice:likely}
together with the Lipschitzness of $f_i$ to prove this theorem. Let $h > 0$ and
$\epsilon_0 > 0$ be parameters that we will determine later. We define, for
every $x \in [p, 1]$, an density estimate
\[ \hat{f}_i(x) \triangleq \frac{1}{h} (\hat{F}_i(x + h) - \hat{F}_i(x)), \]
where due to \cref{thm:1stPrice:likely} we have 
$\abs{\hat{F}_i(x + h) - F_i(x + h)} \le \eps_0$ and 
$\abs{\hat{F}_i(x) - F_i(x)} \le \eps_0$ for $n =
\tilde{\Theta}\left(\frac{\log(k/\delta)}{\gamma^4 \eps^2}\right)$ samples.
Then,
\begin{align*}
  \int_p^1 \abs*{\hat{f}_i(x) - f_i(x)} \ d x 
    &= \int_p^1 \abs*{\frac{1}{h} (\hat{F}_i(x + h) - \hat{F}_i(x)) - f_i(x)} \ d x \\
    & \le \int_p^1 \abs*{\frac{1}{h} (F_i(x + h) - F_i(x)) - f_i(x)} \ d x + 2 \eps \\
    & = \int_p^1 \abs*{\frac{1}{h} \lprp{\int_{x}^{x + h} f_i(z) \ d z} - f_i(x)} \ d x + \frac{2 \eps}{h} \\
    &\le \int_p^1 \frac{1}{h} \lprp{\int_{x}^{x + h} \abs*{f_i(z) - f_i(x)} \ d z} \ d x + \frac{2 \eps}{h} \\
  \intertext{now due to the Lipschitzness of $f_i$ we have that}
  \int_p^1 \abs*{\hat{f}_i(x) - f_i(x)} \ d x 
    &\le \int_p^1 \frac{1}{h} \lprp{\int_{x}^{x + h} L \cdot \abs*{z - x} \ d z} \ d x + \frac{2 \eps}{h} \\
    & = \int_p^1 \frac{L}{h} \lprp{\frac{(x + h)^2}{2} - \frac{x^2}{2} - h \cdot x} \ d x + \frac{2 \eps}{h} \\ 
    &= \int_p^1 \frac{L}{h} \cdot h^2 \ d x + 2 \eps \le L \cdot h + \frac{2 \eps}{h}.
\end{align*}
Therefore, if we choose $h = \sqrt{\eps_0/L}$ and we also set $\eps = \eps_0^2/(9 L)$ 
the theorem follows.

\subsection{Lower Bound for Full-Support Estimation} \label{sec:1stPrice:lowerBound}

\newcommand{\unif}{\mrm{Unif}}

\noindent Here, we establish lower bounds proving the optimality of \cref{thm:1stPrice:fullSupport}. We prove:
\begin{enumerate}
    \item the exponential dependence on $k$ incurred in \cref{thm:1stPrice:fullSupport} is necessary and
    \item the distributions cannot be recovered in Kolmogorov distance in their whole support.
\end{enumerate}
In both these cases, we will construct a pair of distributions $\{f_i\}_{i = 1}^k$ and $\{f^\prime_i\}_{i = 1}^k$ satisfying the bounded density condition of \cref{thm:1stPrice:fullSupport} such that:
\begin{enumerate}
    \item $f_1$ and $f^\prime_1$ have $\dw (f_1, f^\prime_1) \geq \Omega(\eps)$ and $\dk (f_1, f^\prime_1) \geq 1 / 2$ and 
    \item Fewer than $\Omega((\lambda \eps)^{-(k - 1)})$ fail to distinguish them with large probability.
\end{enumerate}
The main intuition behind our construction 
is that learning the
behavior of any of the densities below $\eps$ requires observing  
$Y \leq \eps$ and this only happens with probability $\eps^{-k}$.  

\begin{theorem}
    \label{thm:main_lb}
    Let $k \in \mb{N}$, and let $\eps, \lambda \in (0, 1/2)$. 
    Then, there exist two tuples of distributions $\mc{D} = \{f_i\}_{i = 1}^k$ and 
    $\mc{D}^\prime = \{f^\prime_i\}_{i = 1}^k$ 
    with the first price auction model (\cref{def:max_sel_obs}) 
    on $\mc{D}, \mc{D}^\prime$ satisfies the density bound condition from
    \cref{thm:1stPrice:fullSupport} such that for any estimator $\hat{\mu}$, we
    have: 
    \begin{equation*}
        \max \lprp{
        \mb{P} \lbrb{\dw \lprp{\hat{\mu} (\lbrb{(Y_i, Z_i)}_{i = 1}^n), \mc{D}} \geq \frac{\eps}{8}}, 
        \mb{P} \lbrb{\dw \lprp{\hat{\mu} (\lbrb{(Y_i^\prime, Z_i^\prime)}_{i = 1}^n), \mc{D}^\prime} \geq \frac{\eps}{8}}} \geq \frac{1}{3}
    \end{equation*}
    where $(Y_i, Z_i)$, $(Y^\prime_i, Z^\prime_i)$ are drawn i.i.d from
    $\mc{D}$ and $\mc{D}^\prime$ respectively if $n \leq
    \frac{1}{10} \cdot (\lambda \eps)^{-(k - 1)}$. 
\end{theorem}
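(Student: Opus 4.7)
The plan is to apply Le Cam's two-point method, constructing tuples $\mathcal{D}, \mathcal{D}'$ that differ only in the first bidder's distribution and only on the ``rarely observed'' region $[0, \eps]$. For $j \neq 1$, take $f_j = f_j'$ saturating the density bound at the bottom: $f_j(y) = \lambda$ on $[0, \eps]$, extended to a valid density $\geq \lambda$ on $[\eps, 1]$; this forces $F_j(y) = \lambda y$ on $[0, \eps]$ and hence $\prod_{j \neq 1} F_j(y) \leq (\lambda \eps)^{k-1}$ there. For bidder~$1$, let $f_1$ and $f_1'$ share a common tail on $[\eps, 1]$ and differ on $[0, \eps]$ by swapping a ``bump'' of height $M = 1/\eps$ between the halves $[0, \eps/2]$ and $[\eps/2, \eps]$. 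Both remain $\geq \lambda$ (using $\lambda \leq 1/2$ to normalize the tail), and an explicit computation gives $\dw(f_1, f_1') = M\eps^2/4 = \eps/4$, so by the triangle inequality any estimator within $\eps/8$ of one tuple must be $\eps/8$-far from the other.

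The joint density of $(Y, Z)$ at $(y, i)$ is $f_i(y) \prod_{j \neq i} F_j(y)$. Since everything agrees outside $[0, \eps]$, the single-sample total variation splits into two channels: (i) when $Z = 1$, the integrand carries $|f_1 - f_1'|$; and (ii) when $Z = i \neq 1$, it carries $|F_1 - F_1'|$, because $F_1$ appears in every joint density. Both integrals are supported on $[0, \eps]$, where $\prod_{j \neq 1} F_j \leq (\lambda\eps)^{k-1}$ and $\prod_{j \neq 1, i} F_j \leq (\lambda\eps)^{k-2}$; combined with $\int |f_1 - f_1'|\,dy \leq 2$ and $\sup |F_1 - F_1'| \leq M\eps/2 = 1/2$, this yields $TV(\mathcal{D}, \mathcal{D}') = O(k)\cdot (\lambda\eps)^{k-1}$. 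Tensorization $TV(P^n, Q^n) \leq n \cdot TV(P, Q)$ and Le Cam's inequality then give a failure probability of at least $(1 - n \cdot TV)/2 > 1/3$ as long as $n$ is at most a constant times $(\lambda\eps)^{-(k-1)}$, matching the stated bound.

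The main obstacle I anticipate is the total-variation computation, specifically channel (ii). A naive analysis only tracks the ``bidder~$1$ wins cheap'' channel, but every observation actually carries information about $F_1$ through the joint density, and one must show this additional leakage is also $O((\lambda\eps)^{k-1})$. The engineered choice of $f_j$ for $j \neq 1$, saturating the density bound on $[0, \eps]$, is precisely what controls it: it pins $F_j(y) \leq \lambda\eps$ throughout the support of $|F_1 - F_1'|$, so both channels contract at rate $(\lambda\eps)^{k-1}$. Aside from this, the Wasserstein separation and the Le Cam reduction are entirely standard.
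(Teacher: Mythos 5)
Your construction is essentially the paper's: both tuples agree for bidders $2,\dots,k$, and bidder $1$'s two densities differ only by a mass-$\tfrac12$ bump moved by $\Theta(\eps)$ inside $[0,\eps]$, giving $\dw(f_1,f_1')=\eps/4$ while keeping every density at least $\lambda$. Where the paper conditions on the event that every observed $Y_i$ exceeds $\eps$ (on which the two observable laws coincide, and whose complement has probability at most $(\lambda\eps)^{k-1}$ per sample), you bound the one-sample total variation of $(Y,Z)$ and tensorize; these are equivalent reductions, so the overall route is the same. The one substantive issue is quantitative: your channel-(ii) accounting bounds each of the $k-1$ terms by its worst case, yielding $\mathrm{TV}=O(k)\,(\lambda\eps)^{k-1}$, which via Le Cam only gives the conclusion for $n\lesssim (\lambda\eps)^{-(k-1)}/k$, not the stated threshold $n\le\tfrac{1}{10}(\lambda\eps)^{-(k-1)}$; the claim that this ``matches the stated bound'' is therefore not justified as written. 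The repair is immediate within your framework: since $\sum_{i\neq 1} f_i\prod_{j\neq 1,i}F_j$ is the derivative of $\prod_{j\neq 1}F_j$, the channel-(ii) terms sum to at most $\sup_{y\in[0,\eps]}\abs{F_1(y)-F_1'(y)}\cdot\prod_{j\neq 1}F_j(\eps)\le \tfrac12(\lambda\eps)^{k-1}$; equivalently, the two observable laws agree outside $\{Y\le\eps\}$, a region of mass at most $(\lambda\eps)^{k-1}$ under both models, so $\mathrm{TV}\le(\lambda\eps)^{k-1}$ with no factor of $k$, and then $n\,\mathrm{TV}\le\tfrac{1}{10}$ gives failure probability at least $(1-n\,\mathrm{TV})/2>1/3$ at the stated sample size.
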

\begin{proof}
    Let $\mc{D}_1 = \{f_1, \dots, f_k\}$ and $\mc{D}_2 = \{f^\prime_1, \dots,
    f^\prime_k\}$ denote the two sets of distributions characterizing our
    first price auction model (\cref{def:max_sel_obs}). 
    We will have $f_i = f^\prime_i$ for all $i > 1$.
    \begin{equation*}
        f^\prime_i = f_i = \lambda \cdot \unif ([0, 1]) + (1 - \lambda) \cdot \unif([3/4, 1])
        \quad 
        \text{ for all } i > 1.
    \end{equation*}
    However, $f_1$ and $f_1^\prime$ will have large Wasserstein and Kolmogorov
    distance:
    \begin{align*}
        f_1 &= \lambda \cdot \unif ([0, 1]) + 
              (1 - \lambda) \cdot \unif ([0,\eps / 4]) \\
        f^\prime_1 &= \lambda \cdot \unif ([0, 1]) + (1 - \lambda) \cdot \unif ([3\eps / 4, \eps]).
    \end{align*}
    Let $(Y, Z)$ and $(Y^\prime, Z^\prime)$ be distributed according to the
    first price auction model with respect to $\mc{D}_1$ and $\mc{D}_2$. 
    We now define the events $E$ and $E'$ on $(Y, Z)$ and $(Y', Z')$ as follows:
    \begin{equation}
      \label{eq:pe_bnd}
        E = \{Y \in (\eps, 1]\} \text{ and } E^\prime = \{Y^\prime \in (\eps, 1]\} \implies \P \lbrb{E} = \P \lbrb{E^\prime} = 1 - (\lambda \eps)^{k - 1}
    \end{equation}
    By construction, $(Y,Z)$ and $(Y^\prime,Z^\prime)$ have the same
    distribution conditioned on $E$ and $E'$.
    
    Now, let $\bm{W} = \{(Y_i, Z_i)\}_{i \in [n]}$ and $\bm{W}^\prime = \{(Y_i,
    Z_i)\}_{i = 1}^n$ be collections of $n$ i.i.d samples from
    $\mc{D}$ and $\mc{D}^\prime$ respectively, and let $\hat{\mu}$ denote any
    estimator of the first price auction model. We show that $\hat{\mu}$ has
    large error on at least one of $\mc{D}$ or $\mc{D}^\prime$. Letting $F$
    (respectively, $F^\prime$) denote the event that $E$ (respectively,
    $E^\prime$) holds for all of the $(Y_i, Z_i)$ (respectively, $(Y^\prime_i,
    Z^\prime_i)$), we have:  
    \begin{align*}
        \mb{P} \lr{\dw(\hat{\mu} (\bm{W}), \mc{D}) \leq \nicefrac{\eps}{8}} 
        = \mb{P}(F) \cdot \mb{P}\lr{\dw(\hat{\mu} (\bm{W}), \mc{D}) \leq
        \nicefrac{\eps}{8} \vert F} 
        + \mb{P} (\bar{F}) \cdot \mb{P} \lr{\dw(\hat{\mu} (\bm{W}), \mc{D}) \leq \nicefrac{\eps}{8} | \bar{F}}.
    \end{align*}
    Now, if $n \leq \frac{1}{10} (\lambda
    \eps)^{-(k-1)}$, we have from \cref{eq:pe_bnd} and a union bound that
    $\mb{P}(F) \geq 9/10$. 
    Furthermore, note that conditioned on $F$ and $F^\prime$,
    $\bm{W}$ and $\bm{W}^\prime$ have the same distribution and $\dw(f_1,
    f_1^\prime) \geq \eps / 4$. Assuming the probability in the above equation
    is greater than $2 / 3$, we may re-arrange the above equation as follows:
    \begin{align*}
        \frac{2}{3} \leq \mb{P} (F) \mb{P} \lbrb{\dw(\hat{\mu} (\bm{W}), \mc{D}) \leq \frac{\eps}{8} \biggr| F} + \frac{1}{10} \leq \mb{P} (F^\prime) \mb{P} \lbrb{\dw(\hat{\mu} (\bm{W}^\prime), \mc{D}^\prime) \geq \frac{\eps}{8} \biggr| F^\prime} + \frac{1}{10}.
    \end{align*}
    By re-arranging the above equation, we have that either:
    \begin{align*}
        \mb{P} \lbrb{\dw(\hat{\mu} (\bm{W}), \mc{D}) \leq \frac{\eps}{8}} \leq \frac{2}{3},
        \text{ or }
        \mb{P} \lbrb{\dw(\hat{\mu} (\bm{W}^\prime), \mc{D}^\prime) \leq \frac{\eps}{8}} \leq \frac{2}{3}
    \end{align*}
    concluding the proof of the theorem.
\end{proof}
Note that the probabilities $1 / 3$ chosen in the above theorem is not a
substantial restriction as any algorithm successfully distinguishing between
$\mc{D}$ and $\mc{D}^\prime$ with probability bounded away from
$\nicefrac{1}{2}$ can be boosted to arbitrarily high probability by simple
repetition. As a simple consequence of this construction, we can 
rule out estimation in Kolmogorov distance:

\begin{theorem}
    \label{thm:lb_kol}
    Let $n \in \mb{N}$ and $\hat{\mu}$ be an estimator for the First-Price-Auction model. Then, for all $\delta > 0$, there exists a First-Price-Auction model characterized by $\mc{D} = \{f_i\}_{i = 1}^k$ satisfying the bounded density condition of \cref{thm:1stPrice:fullSupport} satisfying:
    \begin{equation*}
        \mb{P} \lprp{\dk (\hat{\mu} (\bm{W}), \mc{D}) \leq \frac{1}{4}} \leq \frac{1}{2} + \delta.
    \end{equation*}
    where $\bm{W} = \{(Y_i, Z_i)\}_{i = 1}^n$ are drawn i.i.d from the first
    price auction model on $\mc{D}$.
\end{theorem}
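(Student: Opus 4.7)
The plan is to recycle the pair of instances $\mc{D}, \mc{D}^\prime$ constructed in the proof of Theorem~\ref{thm:main_lb}, only retuning $\eps$ so that the indistinguishability event from that proof holds with probability at least $1 - \delta$ rather than $9/10$. Concretely, fix $\lambda = 1/4$ and pick $\eps$ small enough that $n \cdot (\lambda\eps)^{k-1} \leq \delta$, which is feasible since $k \geq 2$ in our model. Then \eqref{eq:pe_bnd} and a union bound give $\mb{P}(F), \mb{P}(F^\prime) \geq 1 - \delta$, where $F$ (resp.\ $F^\prime$) is the event that none of the $n$ samples from $\mc{D}$ (resp.\ $\mc{D}^\prime$) falls in $[0, \eps]$.

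Next I would verify the Kolmogorov separation $\dk(\mc{D}, \mc{D}^\prime) > 1/2$. Evaluating the CDFs of $f_1$ and $f_1^\prime$ at the point $x = \eps/2$, the full $(1 - \lambda)$ mass of the $\unif([0,\eps/4])$ component in $f_1$ has already accumulated while none of the $\unif([3\eps/4, \eps])$ component in $f_1^\prime$ has, so $\abs{F_1(\eps/2) - F_1^\prime(\eps/2)} \geq 1 - \lambda = 3/4$, which transfers to $\dk(\mc{D}, \mc{D}^\prime)$. The density-bound condition from \cref{thm:1stPrice:fullSupport} is inherited directly from the construction of Theorem~\ref{thm:main_lb} (both $f_1$ and $f_1^\prime$ have density at least $\lambda$ throughout $[0,1]$).

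The finish is a standard two-point (Le Cam) reduction. Conditioned on $F$, the sample $\bm{W}$ drawn from $\mc{D}^{\otimes n}$ has the same distribution $\mu_0$ as the sample $\bm{W}^\prime$ drawn from $(\mc{D}^\prime)^{\otimes n}$ conditioned on $F^\prime$; removing each conditioning costs at most $\delta$. Under $\mu_0$, the triangle inequality combined with $\dk(\mc{D}, \mc{D}^\prime) > 1/2$ makes the two events $\lbrb{\dk(\hat{\mu}(\bm{W}), \mc{D}) \leq 1/4}$ and $\lbrb{\dk(\hat{\mu}(\bm{W}), \mc{D}^\prime) \leq 1/4}$ disjoint, so their $\mu_0$-probabilities sum to at most $1$. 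Summing the two success probabilities therefore yields
\begin{equation*}
    \mb{P}[\dk(\hat{\mu}(\bm{W}), \mc{D}) \leq 1/4] + \mb{P}[\dk(\hat{\mu}(\bm{W}^\prime), \mc{D}^\prime) \leq 1/4] \leq 1 + 2\delta,
\end{equation*}
so whichever of $\mc{D}, \mc{D}^\prime$ attains the smaller term witnesses the theorem (rescale $\delta$ by a factor of $2$ in the tuning of $\eps$ if one wants the stated bound exactly). The key conceptual observation is that the Theorem~\ref{thm:main_lb} construction already supplies $\dk(\mc{D},\mc{D}^\prime) = \Omega(1)$, so hardness in Kolmogorov distance follows essentially for free from hardness in Wasserstein distance; there is no substantial obstacle beyond bookkeeping the conditioning.
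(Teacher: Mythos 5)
Your proof is correct, and it takes a noticeably more direct route than the paper. The paper argues by contradiction with an amplification step: it assumes an estimator succeeding with probability $1/2+\delta$ on every instance, boosts it by repeating the experiment $\Omega(1/\delta^2)$ times and taking a pointwise median of the estimated CDFs, and then derives a contradiction with the $1/3$-failure conclusion inherited from \cref{thm:main_lb} (instantiated with $\eps$ small enough relative to the boosted sample size), using that the constructed instances are $>1/2$ apart in Kolmogorov distance. You instead run a clean two-point (Le Cam) argument directly at the target confidence level: choose $\eps$ so that $n(\lambda\eps)^{k-1}\le\delta$, use the fact (already established in the proof of \cref{thm:main_lb}) that the two models induce identical conditional sample distributions on the event that no observed $Y$ falls in $[0,\eps]$, and observe that because $\dk(\mc{D},\mc{D}^\prime)\ge 3/4>1/2$ the two radius-$1/4$ success events are disjoint under the common conditional law, so the two success probabilities sum to at most $1+2\delta$ and one instance witnesses the bound $1/2+\delta$ exactly. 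What your approach buys is the elimination of the boosting/median step (which in the paper is stated somewhat tersely and requires a small argument that a pointwise median of mostly-accurate CDF estimates stays within $1/4$ in Kolmogorov distance), at the cost of re-deriving the conditioning bookkeeping rather than citing \cref{thm:main_lb} as a black box; both proofs rest on the same construction and the same Kolmogorov separation at $x=\eps/2$, and your density-bound and separation checks are accurate.
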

\begin{proof}
  We will prove the lemma via contradiction. Let $n, \hat{\mu}$ be such that the there exists $\delta > 0$ such that for all First-Price-Auction models, $\mc{D}$, satisfying the bounded density condition:
  \begin{equation*}
      \mb{P} \lprp{\dk (\hat{\mu} (\bm{W}), \mc{D}) \leq \frac{1}{4}} \geq \frac{1}{2} + \delta.
  \end{equation*}
  Note that by repeating the experiment $\Omega(1 / \delta^2)$ times, we may boost the success probability to $9 / 10$ by taking the pointwise median of the resulting estimates. However, from our construction in the proof of \cref{thm:main_lb}, we have by picking $\eps$ small enough in the construction that there exists a distribution, $\mc{D}$ such that:
  \begin{equation*}
      \mb{P} \lbrb{\dk \lprp{\hat{\mu} (\bm{W}), \mc{D}} \geq \frac{1}{4}} \geq \frac{1}{3}
  \end{equation*}
  as all the distributions we construct have Kolmogorov distance greater than $1/2$ between them. This yields the contradiction, proving the theorem.
\end{proof}

\subsection{Estimation from Partial Observations} 
\label{sec:1stPrice:additionalBidder}

  In this section we show how our results in the previous sections can be 
translated to the partial observation model introduced by \cite{blum2015learning} 
defined below.

\begin{definition}[Partial Observation Data] \label{def:max_partial_obs}
    Let $\{F_i\}_{i = 1}^k$ be $k$ cumulative distribution functions with support 
  $[0, 1]$, i.e. $F_i(x) = 0\ \forall x < 0$ and $F_i(1) = 1$. A sample $(r, Y, Z)$ from a
  first-price auction with bid distributions $\{F_i\}_{i = 1}^k$ is generated as
  follows:
  \begin{enumerate}
    \item we, the observer, pick a price $r \in [0, 1]$, and let $X_{k + 1} = r$
    \item generate $X_i \ts F_i$ independently for all $i \in [k]$,
    \item observe a winner $Z = \argmax_{i \in [k+1]} X_i$.
  \end{enumerate}
\end{definition}

At first glance, it seems like the access to partial observation data is more
restrictive than the access to the first-price auction data that we defined in 
\cref{def:max_sel_obs}. Nevertheless, we show that partial observations 
suffice to run the same 
estimation used in \cref{sec:1stPrice:bids}. 

\begin{theorem}[First-Price Auctions -- Partial Observations]
  \label{thm:fp_partial_obs}
  Let $\{Z_i\}_{i = 1}^n$ be $n$ i.i.d. partially observed samples from the same 
  first-price auction as per \cref{def:max_partial_obs} and assume that the cumulative 
  distribution functions $F_i$ are continuous and admit Lipschitz-continuous
  densities $f_i$ with constant $L$. Then, given $p, \gamma \in [0, 1]$ such
  that $\Pr(X_{i \in [k]} \leq p) \geq \gamma$, 
  there exists a polynomial-time
  estimation algorithm, that computes the cumulative distribution functions $\hat{F}_i$
  for $i \in [k]$, so that for every $\eps \in (0, \gamma/2]$ it holds that 
  \[ \Pr\left(\max_{x \in [p, 1]} \abs*{\hat{F}_i(x) - F_i(x)} \le \eps\right) \ge 1 - \delta \] 
  for all $i \in [k]$ assuming that 
  $n = \Theta\lr{
    \frac{k}{\gamma^6\epsilon^5}\log\lr{\frac{k}{\gamma^2\epsilon\alpha}}\log\lr{\frac{L}{\gamma^2 \epsilon}}
  }$
\end{theorem}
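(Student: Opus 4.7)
The plan is to reduce the partial-observation setting to the full-observation setting of \cref{thm:1stPrice:likely} by simulating its estimator using carefully chosen reserves. The key observation is that, when the reserve is set to $r = y$, a single partial observation is a joint categorical sample over $\{1, \dots, k+1\}$ whose outcome probabilities identify both $H(y) = \Pr(Y \le y)$ (via the probability that $Z = k+1$) and, for each $i \in [k]$, $Q_i(y) := \Pr(Y > y,\, Z_{\mathrm{true}} = i) = \Pr(Z = i)$, where $Z_{\mathrm{true}} = \arg\max_{j \in [k]} X_j$. The connection to the quantities appearing in \cref{lem:it_ident} is that $H_i(y_j) - H_i(y_{j-1}) = Q_i(y_{j-1}) - Q_i(y_j)$, so two $Q_i$-estimates at consecutive reserves give us exactly the increments of $H_i$ that we need for the integrand.

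Based on this, I would discretize $[p,1]$ using a grid $p = y_0 < y_1 < \cdots < y_M = 1$, allocate roughly $n_1$ samples per grid point (all run at reserve $y_j$), and form the empirical estimates $\hat{H}(y_j)$ and $\hat{Q}_i(y_j)$. The estimator from \cref{thm:1stPrice:likely} then has the discrete analogue
\[
  \hat{G}_i(x) \;=\; \sum_{j \,:\, y_{j-1} \ge x} \frac{\hat{Q}_i(y_{j-1}) - \hat{Q}_i(y_j)}{\hat{H}(y_{j-1})},
  \qquad
  \hat{F}_i(x) \;=\; \exp\!\bigl(-\hat{G}_i(x)\bigr).
\]
The error $\hat{G}_i(x) - G_i(x)$ splits into a \emph{discretization} contribution and a \emph{concentration} contribution. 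For the former, a term-by-term comparison using $H \ge \gamma$ on $[p,1]$ bounds the error per interval by $(H_i(y_j) - H_i(y_{j-1}))(H(y_j) - H(y_{j-1}))/\gamma^2$, so the telescoped sum is at most $\max_j (H(y_j) - H(y_{j-1}))/\gamma^2$. Lipschitzness of the $f_i$'s (through the standard consequence $f_i \le 1 + L$) bounds the density of $H$, and thus allows either a uniform grid or, more efficiently, a quantile-based grid of size $O(1/(\gamma^2\epsilon))$ whose points are located by binary search on the reserve---the binary-search depth is what contributes the $\log(L/(\gamma^2\epsilon))$ factor appearing in the final complexity.

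The concentration piece is handled by separating the error due to $\hat{H}$ (whose contribution is bounded trivially by $\tau_H/\gamma^2$ using $\sum_j (Q_i(y_{j-1}) - Q_i(y_j)) \le 1$) from the error due to $\hat{Q}_i$, which is handled by Abel summation: writing $e_j := \hat{Q}_i(y_j) - Q_i(y_j)$ and $a_j := 1/\hat{H}(y_{j-1})$ (after enforcing monotonicity of $\hat{H}$ by isotonic projection so that the $a_j$ are monotone), the alternating sum $\sum_{j} (e_{j-1} - e_j) a_j$ rewrites as $e_0 a_1 - e_M a_M + \sum_{j} e_j (a_{j+1} - a_j)$, which is bounded by $O(\max_j |e_j|/\gamma^2)$ since $\sum_{j} |a_{j+1} - a_j| \le 1/\gamma^2$ telescopes. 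Hoeffding with a union bound over the $O(Mk)$ per-grid estimates then controls each $|e_j|$, yielding the claimed complexity. The main obstacle is this Abel summation step---without it one would lose a multiplicative factor $M$ in the per-interval error---together with ensuring that the adaptive (quantile-based) grid location procedure contributes only logarithmically in $L$, rather than polynomially as a uniform grid would.
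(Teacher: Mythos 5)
Your proposal follows essentially the same route as the paper's proof: simulate estimates of $H$ (via the probability the planted bid wins) and of the $H_i$-increments (via differences of the win probability of bidder $i$ at consecutive reserves), locate a quantile grid by binary search on the reserve with depth $\log(L/(\gamma^2\epsilon))$ using Lipschitzness, plug the increments into the discrete Riemann--Stieltjes analogue of \cref{lem:it_ident}, exponentiate, and split the error into discretization and concentration parts — this is exactly the structure of the paper's argument for \cref{thm:fp_partial_obs}. The one place you genuinely diverge is the concentration bookkeeping: the paper bounds the $\hat{H}_i$-error term crudely by $\frac{2|X|\beta}{\gamma}$, i.e.\ it pays the number of grid points $|X|$ multiplicatively and compensates by shrinking the per-point Hoeffding accuracy to $\beta = \frac{\gamma\epsilon^2}{24(k+1)}$, which is precisely where the extra $k$ and $1/\epsilon$ factors in the stated sample complexity come from; you instead remove that loss by summation by parts, using isotonic projection of $\hat{H}$ so that the weights $1/\hat{H}(y_{j-1})$ are monotone and their total variation telescopes to $O(1/\gamma)$, so only $\max_j|e_j|$ enters. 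A second small simplification is that you grid only on quantiles of $H$, which suffices because $H=\sum_j H_j$ makes every $H_i$-increment dominated by the corresponding $H$-increment, whereas the paper merges quantiles of $H$ and $H_i$. Both refinements are sound (the isotonic-projection step is genuinely needed for the Abel argument, and you flagged it), and they would in fact yield a somewhat better dependence on $k$ and $\epsilon$ than the bound claimed in the theorem, which your argument certainly meets; the remaining omissions (handling $x$ strictly between grid points via monotonicity and the $\gamma^2\epsilon$ quantile spacing, and including the binary-search queries in the union bound) are routine and handled the same way as in the paper.
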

\begin{proof}
Note that under the partial observation model, we can estimate
$\Pr(Y \geq x)$ for any fixed $x$ by setting the reserve price to $x$ (i.e., bidding
$x$) and counting the number of times that the planted bid wins the auction.
More precisely, we can define %
\[
   \hat{H}(x) = 1 - \frac{1}{n_1} \sum_{i=1}^{n_1} \mathbf{1} \lbrb{Z_i = k+1}.
\]

We can similarly define, for any given agent $i$, an estimator for the
probability that the agent wins with price less than or equal to $x$:
\[
    \hat{H}_i (x) \triangleq 
    \frac{1}{n_2} \sum_{j=1}^{n_2} \bm{1}
        \lbrb{Z_j = i \text{ at reserve price } 0}
    - 
    \frac{1}{n_2} \sum_{j = 1}^{n_2} \bm{1} 
        \lbrb{Z_j = i \text{ at reserve price $x$}}.
\]

\newcommand{\ygap}{\delta}

By construction $\hat{H}_i \rightarrow H_i$ and $\hat{H} \rightarrow H$ where
$H_i$ and $H$ are as defined earlier. Similarly to our strategy in the proof of
\cref{thm:1stPrice:likely}, let
\begin{equation*}
    U = \left\{\gamma + i \cdot \ygap : i \in \mathbb{N} \cup \{0\} \text{ and } \gamma + i \cdot \ygap \leq 1\right\} \cup \{1\} 
    \text{ and } V = H^{-1} (U)
    \text{ and } W = H_i^{-1} (U)
\end{equation*}

\newcommand{\numsteps}{T}
\newcommand{\numquantiles}{N}
\newcommand{\Hquants}[1]{v_{{#1}}}
\newcommand{\estHquants}[1]{\hat{v}_{{#1}}}
\newcommand{\Hiquants}[2]{w_{{#2}}}
\newcommand{\estHiquants}[2]{\hat{w}_{{#2}}}
\newcommand{\yquants}[1]{u_{{#1}}}
\newcommand{\hoeffgap}{\beta}
\newcommand{\binsearchgap}{\epsilon_1}
\newcommand{\binHstepgap}{\epsilon_1/2}
\newcommand{\binHistepgap}{\epsilon_1/2}

For convenience, define $\numquantiles = |U| \leq \ygap^{-1}$, and recall that
$k$ is the number of agents in the auction.
Our first goal is to obtain a set of estimates $\estHquants{j} \approx
\Hquants{j}$ for the quantiles of $H$ and another set $\estHiquants{i}{j}
\approx \Hiquants{i}{j}$ for the quantiles of each $H_i$.
To accomplish this, we will run, for each $\yquants{j} \in U$, 
$\numsteps$ iterations of binary search between $0$ and $1$. In particular, 
we initialize $\estHquants{j}^{(0)} = 1$ then, for each successive iteration
$t$, a Hoeffding bound shows that 
\begin{align}
\label{eq:fp_pointwise_conc}
\Pr\lr{ \abs*{\hat{H}(\estHquants{j}^{(t)}) - H(\estHquants{j}^{(t)})} \geq \binHstepgap} 
< 2\exp\lbrb{-2 \lr{\binHstepgap}^2 n}. \\
\Pr\lr{ \abs*{\hat{H}_i(\estHiquants{i}{j}^{(t)}) - H_i(\estHiquants{i}{j}^{(t)})} \geq \binHistepgap} 
< 2\exp\lbrb{-\lr{\binHistepgap}^2 n/2}.
\end{align}
We condition on the above events by taking a union bound over all agents, all
search steps, and all points $u_i \in U$.
Now, for each iteration $t$ of the binary search:
\begin{enumerate}
    \item If $|\hat{H}(\hat{v}_i^{(t)}) - u_t| \leq \binHstepgap$,
    then $|H(\hat{v}_i^{(t)}) - u_t| \leq \binsearchgap$, so we terminate
    and set $\hat{v}_i = \hat{v}_i^{(t)}$,
    \item otherwise if $\hat{H}(\hat{v}_i^{(t)}) - u_t > \binHstepgap$ then
    $H(\hat{v}_i^{(t)}) > u_t$ and we search the upper interval,
    \item otherwise $\hat{H}(\hat{v}_i^{(t)}) - u_t < -\binHstepgap$ and so
    $H(\hat{v}_i^{(t)}) < u_t$ and we search the lower interval.
\end{enumerate}
We perform an analogous process to find the $\estHiquants{i}{j}$.
This ensures the correctness of the binary search, and
setting $\numsteps = \log(2L/\binsearchgap)$, where $L$ is a Lipschitz
constant of $H$ and all $H_i$, guarantees that after performing
this search for each $u_i$, we will find $\hat{V}$ and $\hat{W}$ such that
\begin{align}
    \label{eq:cond1_partial_info}
    |H(\hat{v}_j) - u_j| &\leq \binsearchgap \qquad 
        \text{for all}\qquad j \in [|U|], \text{ and} \\
    |H_i(\estHiquants{i}{j}) - u_j| &\leq \binsearchgap \qquad 
        \text{for all}\qquad j \in [|U|] \text{ and } i \in [k] \\
    \nonumber
    \text{w.p.} \qquad & 1 - 2 \numsteps N \exp \lbrb{-2(\binHstepgap)^2} - 2k \numsteps N \exp \lbrb{-(\binHistepgap)^2/2} 
\end{align}

\noindent In order to define our approximation of $G_i$, we will
consider the list of indices $X = V \cup W_i$, i.e., the union of the
estimated quantiles of $H$ and $H_i$. Using Hoeffding's inequality,
\begin{align*}
    |H(x_j) - \hat{H}(x_j)| &\leq \hoeffgap \qquad \text{for all}\qquad j \in [|X|], \text{ and} \\
    |H_i(x_j) - \hat{H}_i(x_j)| &\leq \hoeffgap \qquad \text{for all}\qquad j \in [|X|], \text{ and} \\
    \nonumber
    \text{w.p.} \qquad & 1 - 4kN \exp\lbrb{2n \hoeffgap^2} \geq 1 - \frac{4k}{\delta} \exp\lbrb{2n \hoeffgap^2}
\end{align*}

We further condition on the above and define an estimate of $G_i(x_j) 
= \int_{x_j}^1 \frac{1}{H(z)}\, dH_i(z)$,
\[
    \hat{G}_i(x_j) = \sum_{s=j}^{|X|-1} \lr{\hat{H}_i(x_{s+1}) - \hat{H}_i(x_s)}/{\hat{H}(x_s)}.
\]
Using the mean value theorem (see Appendix \ref{sec:fp_partial_info_extras} for more detail), 
\begin{align*}
    \abs{G_i(\hat{v}_t) - \hat{G}_i(\hat{v}_t)} 
    &\leq 
    \frac{2}{\gamma} \cdot \sum_{s=1}^{|X|} \abs{H_i(x_{s}) - \hat{H}_i(x_{s})}
    + \max_{s \in [|X|]}\ \ \abs*{
        \frac{1}{H(x_{s+1})} - \frac{1}{\hat{H}(x_s)}
    } \\
    &\leq 
    \frac{2|X|\hoeffgap}{\gamma} + \frac{\ygap + 2 \cdot \binsearchgap + \hoeffgap}{\gamma^2}
\end{align*}

We now extend our approximation from the set of points $\{x_i\}$ to the entire
interval $[\rho, 1]$. Note that for any $x$ in this interval, there exists an
$x_h, x_{h+1} \in X$ such that $x \in (x_h, x_{h+1}]$. Furthermore, both $G_i$
and $\hat{G}_i$ are monotonic in $x$ by construction, and 
\[
    \abs*{G_i(x_{h+1}) - G_i(x_h)} = \abs*{\int_{x_h}^{x_{h+1}} \frac{1}{H(z)}\, dH_i} \leq \frac{1}{\gamma} \cdot (\ygap + 2\binsearchgap),
\]
since the $x_i$ are at least as close as the quantiles of $H_i$, while
$\frac{1}{H(z)} \leq \frac{1}{\gamma}$ by assumption. Using this inequality and
the monotonicity of $G_i$ yields: 
\begin{align*}
    G_i(x) \geq G_i(x_h) 
        &\geq \hat{G}_i(x_{h+1}) - \frac{1}{\gamma} (\ygap + 2\binsearchgap) - \frac{2|X|\hoeffgap \gamma + \ygap + 2 \binsearchgap + \hoeffgap}{\gamma^2} \\
           &\geq \hat{G}_i(x_{h+1}) - \frac{(2|X| \gamma + 1)\hoeffgap + (\gamma + 1)(\ygap + 2\binsearchgap)}{\gamma^2} \\
           &\geq \hat{G}_i(x) - \frac{4(k+1) \gamma \hoeffgap/\ygap + 2\ygap + 4\binsearchgap}{\gamma^2} \\
    \text{Analogously, } G_i(x) &\leq \hat{G}_i(x) + \frac{4(k+1) \gamma \hoeffgap / \ygap + 2\ygap + 4\binsearchgap}{\gamma^2}
\end{align*}
Now, set:
\(
    \ygap = \frac{\gamma^2\epsilon}{6},
    \binsearchgap = \frac{\gamma^2 \epsilon}{24},
\) and \(
    \hoeffgap = \frac{\gamma\epsilon^2}{24(k+1)},
\)
so that
\(
    |\hat{G}_i(x) - G_i(x)| \leq \frac{\epsilon}{2}
\)
with probability 
\begin{align*}
    &1 - \frac{12}{\gamma^2\epsilon}\log\lr{\frac{48L}{\gamma^2 \epsilon}}\exp \lbrb{- \frac{\gamma^4\epsilon^2}{1152} n}
      - \frac{12k}{\gamma^2\epsilon}\log\lr{\frac{48L}{\gamma^2 \epsilon}}\exp \lbrb{- \frac{\gamma^4\epsilon^2}{4608} n}
      - \frac{24k}{\gamma^2\epsilon} \exp\lbrb{-\frac{\gamma^2 \epsilon^4}{288(k+1)^2} n}.
\end{align*}
Thus, setting
$$n = \frac{4608(k+1)^2}{\gamma^4\epsilon^4}\log\lr{\frac{3}{\alpha}
\frac{24k}{\gamma^2\epsilon} \log\lr{\frac{48L}{\gamma^2 \epsilon}}}$$
makes this probability $1 - \alpha$. Now, the total number of samples required
for this approach is $O(N \cdot k \cdot T \cdot n)$, concluding the proof.
\end{proof}
\begin{remark}[Inserting bids may change equilibria]
  As we highlighted above, using access to the partial observation data we can
  estimate the distributions $H_i$ to within $\epsilon$ error and thus apply a
  similar algorithm to the ordinary first-price setting. 
  Observe, however, that by inserting arbitrary bids to get good
  estimates of the functions $F_i$, the econometrician can affect the bidding
  strategy of the agents and thus interfere with the equilibrium point 
  of the first-price auction. 
  This is not true for our model in
  \cref{def:max_sel_obs}, where the econometrician is a passive observer (in
  particular, observations do not interfere with the 
  equilibrium of the agents) and hence the bid distributions can lead
  to an estimation of the value distributions as well (as we show in
  \cref{sec:value_estimation}). 
\end{remark}

\subsection{Estimation of Value Distributions}
\label{sec:value_estimation}
Theorems \ref{thm:1stPrice:fullSupport}, \ref{thm:1stPrice:likely}, and
\ref{thm:1stPrice:density} establish recovery results for the bid distribution
of each agent. In a first-price auction at (Bayes-Nash) equilibrium, however,
these bid distributions do not correspond to agents' value distributions.
Instead, at equilibrium, each agent draws a value $v_i \sim G_i(\cdot)$ and bids
the best responses to other agents, i.e.,  
\begin{equation}
    \label{eq:bid_argmax} 
    \beta_i(v_i) = \arg\max_b u_i(b; v_i) \coloneqq \arg\max_b\ (v_i - b) \prod_{j \neq i} F_i(b).
\end{equation}
As discussed in the introduction, in our asymmetric IPV setting,
we (a) cannot write an explicit form for the optimal
bid for a given value; and (b) cannot derive smoothness results for the bid
distribution from smoothness assumptions on the value distribution. In fact, a 
unique Bayes-Nash equilibrium is not even guaranteed to exist.

\paragraph{Approach.} \citet{lebrun2006uniqueness} provides the following
characterization of Bayes-Nash equilibria for asymmetric first-price auctions,
and shows that such equilibria exist and are unique under
under some (relatively mild) assumptions:

\renewcommand{\bi}[1]{\beta_i(#1)}
\newcommand{\estbi}[1]{\widehat{\beta_i}(#1)}
\renewcommand{\bj}[1]{\beta_j(#1)}
\newcommand{\vi}[1]{\alpha_i(#1)}
\newcommand{\vj}[1]{\alpha_j(#1)}
\newcommand{\vip}[1]{\alpha_i'(#1)}
\newcommand{\vjp}[1]{\alpha_j'(#1)}
\newcommand{\Gj}[1]{G_j(\vj{#1})}
\newcommand{\Gi}[1]{G_i(\vi{#1})}
\newcommand{\gj}[1]{g_j(\vi{#1})}
\newcommand{\gi}[1]{g_i(\vi{#1})}
\newcommand{\qi}{\frac{d}{db} \log\, G_i(\vi{b})}
\newcommand{\qj}{\frac{d}{db} \log\, G_j(\vj{b})}

\begin{lemma}[\citet{lebrun2006uniqueness}]
    \label{lemma:characterization_equilibrium}
    Suppose the agents' values are distributed according to the right-continuous
    cumulative distribution functions $G_i(\cdot)$ with support $[0, 1]$ and
    whose derivatives (i.e., the value density functions) $g_i(\cdot)$ are
    locally bounded away from zero. Then, a set of strategies (bid functions)
    $\vi{\cdot}: [0, 1] \rightarrow [0, \infty)$ is a Bayesian equilibrium if and only
    if there exists an $\eta \in [0, 1]$ such that the inverses $\vi{\cdot} =
    \beta_i^{-1}(\cdot)$ exist, are strictly increasing, and form a solution over
    $[0, \eta]$ of the following system of differential equations:
    \begin{align}
        \label{eq:equilibrium_characterization}
        \qi &= \frac{1}{n-1} \lr{\frac{-(n-2)}{\vi{b} - b} + \sum_{j \neq i} \frac{1}{\vj{b} - b}},\ 
        \vi{0} = 0,\ \vi{\eta} = 1.
    \end{align}
    If bidders are not permitted to bid above their values, and if one of the
    following two conditions are met, then the set of strategies $\bi{\cdot}$
    represents a unique equilibrium: 
    \begin{enumerate}
        \item[(i)] The value distributions have an atom at zero, i.e., $G_i(0) > 0$, 
        \item[(ii)] There exists $\delta > 0$ such that the cumulative density
        function of the $i$-th agent's value is strictly log-concave over $(0,
        \delta)$ for all $i$. 
    \end{enumerate}
\end{lemma}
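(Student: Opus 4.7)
The plan is to prove this in two stages: first the equivalence between Bayesian equilibria and the ODE system, and then the uniqueness under conditions (i) or (ii). For the forward direction of the characterization, I would start from each bidder's expected utility $u_i(b; v) = (v - b) \prod_{j \neq i} F_j(b)$, where $F_j(b) = G_j(\alpha_j(b))$ is the bid distribution induced by the opponents' strategies. Since any equilibrium strategy must be a pointwise best response, the first-order condition at $b = \beta_i(v)$ gives
\begin{equation*}
(\alpha_i(b) - b) \sum_{j \neq i} \frac{g_j(\alpha_j(b))\, \alpha_j'(b)}{G_j(\alpha_j(b))} \;=\; 1,
\end{equation*}
i.e., $\sum_{j \neq i} q_j(b) = 1/(\alpha_i(b) - b)$ where $q_j(b) \coloneqq \frac{d}{db}\log G_j(\alpha_j(b))$. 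Summing these $k$ equations and subtracting the $i$-th one from the total then yields exactly the expression in \eqref{eq:equilibrium_characterization}. The boundary conditions $\alpha_i(0)=0$ and $\alpha_i(\eta)=1$ come from the fact that a bidder with value $0$ must bid $0$ (no bidder bids above value) and the highest type bids the common upper endpoint $\eta$ of the supports of the $F_i$.

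For the backward direction, suppose $\{\alpha_i\}$ solve the ODE system with the given boundary conditions and are strictly increasing. I would need to check that the induced strategies $\beta_i = \alpha_i^{-1}$ actually constitute a Bayesian equilibrium, not just satisfy first-order conditions. The key is to verify that $(v - b)\prod_{j\neq i} G_j(\alpha_j(b))$ is quasiconcave in $b$ on the relevant range, so that the stationary point identified by the ODE is in fact a global maximum. This reduces to a standard single-crossing argument: differentiating $\log u_i(b;v)$ in $b$ gives a term that is strictly decreasing in $b$ for each fixed $v$, which one verifies using the ODE together with strict monotonicity of each $\alpha_j$ and the $\alpha_i(b) > b$ inequality (no overbidding).

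The hard part, and the one where I would follow Lebrun most closely, is uniqueness under either condition (i) or (ii). The natural approach is to argue by contradiction: suppose two distinct equilibria $\{\alpha_i\}$ and $\{\tilde\alpha_i\}$ exist with corresponding upper endpoints $\eta, \tilde\eta$; WLOG $\eta \le \tilde\eta$. One then shows that the two trajectories cannot cross in the interior of $[0,\eta]$ by a comparison argument on the ODE system: at a hypothetical crossing point the right-hand sides of \eqref{eq:equilibrium_characterization} can be ordered, forcing a contradiction with the ordering of $\alpha_i$ vs.\ $\tilde\alpha_i$ just before the crossing. The remaining obstacle is the behavior at $b = 0$, where both sides of the ODE generally blow up; this is precisely where the two alternative hypotheses come in. Under (i), the atom at zero means $G_i(0) > 0$, so $\log G_i(\alpha_i(b))$ is finite at $0$ and the boundary condition pins down the trajectories unambiguously. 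Under (ii), strict log-concavity of $G_i$ near $0$ controls the rate of blow-up of $\frac{d}{db}\log G_i(\alpha_i(b))$ and lets one run the standard Picard-type uniqueness argument by showing that any two solutions must agree to leading order at $0$. The main technical obstacle is managing this singular boundary behavior; once that is done, the interior comparison argument closes the proof.
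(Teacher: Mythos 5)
This lemma is not proved in the paper at all: it is imported verbatim from \citet{lebrun2006uniqueness} and used as a black box, so there is no in-paper argument to compare your proposal against. Judged on its own terms, the first half of your sketch is essentially the standard derivation and is correct: writing the first-order condition $\sum_{j\neq i} \frac{d}{db}\log G_j(\alpha_j(b)) = 1/(\alpha_i(b)-b)$, summing over $i$ and eliminating gives exactly \eqref{eq:equilibrium_characterization}, and the sufficiency direction does reduce to the single-crossing observation that, along the candidate solution, $\frac{\partial}{\partial b}\log u_i(b;v) = \frac{1}{\alpha_i(b)-b} - \frac{1}{v-b}$ changes sign exactly once at $b=\beta_i(v)$ (your phrasing ``a term strictly decreasing in $b$'' is not quite the mechanism, but the intended argument is the right one).

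The genuine gap is that everything hard is stated as a plan rather than an argument. First, the necessity direction presupposes that equilibrium strategies are strictly increasing and that the inverses $\alpha_i$ are differentiable (indeed absolutely continuous) so that the FOC and the chain rule for $\log G_i(\alpha_i(b))$ are legitimate; establishing these regularity and monotonicity properties of arbitrary Bayesian equilibria, together with the fact that all bid distributions share a common upper endpoint $\eta$ and that $\alpha_i(0)=0$, is a substantial part of Lebrun's work and is simply asserted here. Second, the uniqueness half under (i) or (ii) is precisely where the ODE system is singular at $b=0$ (the right-hand side blows up as $\alpha_j(b)-b \to 0$), so neither a Picard-type argument nor a naive crossing/comparison argument applies off the shelf; your sketch acknowledges this but resolves it only with ``follow Lebrun'' and ``any two solutions must agree to leading order at $0$,'' which is the claim to be proved, not a proof. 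Since the paper itself only cites the result, deferring to Lebrun is acceptable practice, but as a standalone proof your proposal establishes the ODE characterization modulo regularity and does not establish uniqueness.
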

\begin{corollary}
    \label{cor:convenient_equilibrium}
    Rearranging Equation
    \eqref{eq:equilibrium_characterization} from
    \cref{lemma:characterization_equilibrium} yields
    \begin{align}
        \label{eq:log_deriv_diff}
        \sum_{j \neq i} \qj = \frac{1}{\vi{b} - b},
        \quad
        \text{ and in turn }
        \quad 
        \sum_{j \neq i} \frac{f_j(b)}{F_j(b)} = \frac{1}{\vi{b} - b}.
    \end{align}
\end{corollary}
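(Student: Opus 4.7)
The plan is to reduce the first claim to a small linear system in the quantities $T_i := 1/(\vi{b}-b)$, treating the logarithmic derivatives $Q_i := \qi$ as knowns. The equilibrium characterization \eqref{eq:equilibrium_characterization} from Lemma~\ref{lemma:characterization_equilibrium} reads
\[
  (n-1) Q_i = -(n-2) T_i + \sum_{j \neq i} T_j \quad \text{for each } i \in [n].
\]
Introducing $S := \sum_j T_j$, the right-hand side collapses to $S - (n-1) T_i$, so $T_i = S/(n-1) - Q_i$. Summing this identity over $i$ yields $S = nS/(n-1) - \sum_i Q_i$, which rearranges to $S = (n-1) \sum_i Q_i$. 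Substituting back gives $T_i = \sum_i Q_i - Q_i = \sum_{j \neq i} Q_j$, which is exactly the first claim $1/(\vi{b}-b) = \sum_{j \neq i} \qj$.

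For the second equality, the plan is to invoke the identity $F_j(b) = G_j(\vj{b})$. This holds because agent $j$'s equilibrium strategy $\bj{\cdot}$ is strictly increasing on its effective support (by Lemma~\ref{lemma:characterization_equilibrium}), so the event $\bj{v_j} \le b$ coincides with $v_j \le \vj{b}$. Differentiating in $b$ gives $f_j(b) = g_j(\vj{b}) \cdot \vjp{b}$, and hence by the chain rule applied to $\log G_j(\vj{b})$,
\[
  \frac{f_j(b)}{F_j(b)} = \frac{g_j(\vj{b}) \vjp{b}}{G_j(\vj{b})} = \qj.
\]
Combined with the first display, this yields the second claim.

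I do not anticipate a substantive obstacle. The first step is a rank-one inversion (the coefficient matrix on the $T_i$'s is $J - (n-1) I$, with $J$ the all-ones matrix, whose inverse can be written out explicitly), and the second step is just the chain rule applied to the bid--value link $F_j = G_j \circ \vj{\cdot}$. The only minor care point is ensuring that $\vjp{b}$ exists almost everywhere on the relevant interval so the differentiation step is legitimate; this is guaranteed by the hypotheses of Lemma~\ref{lemma:characterization_equilibrium} (strict monotonicity of the inverses together with the value densities being locally bounded away from zero).
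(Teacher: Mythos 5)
Your derivation is correct and is exactly the unpacking that the paper's terse ``Rearranging Equation \eqref{eq:equilibrium_characterization} yields'' is asking for. The rank-one inversion of the coefficient matrix $J-(n-1)I$ (equivalently, your $S=(n-1)\sum_i Q_i$ step and substitution) gives the first equality, and the second follows immediately from the bid--value link $F_j(b)=G_j(\vj{b})$, which the paper records just after the corollary; in fact, once one notes $\log F_j(b)=\log G_j(\vj{b})$, the second equality is the first one rewritten, so no further chain-rule care is needed beyond the differentiability already guaranteed by \cref{lemma:characterization_equilibrium}.
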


Since Lemma \ref{lemma:characterization_equilibrium} guarantees that the inverse
bidding strategies are strictly increasing at equilibrium, the inverse mapping
theorem dictates that 
    $
    G_i(v) = F_i(\bi{v}).
    $
If the equilibrium strategies $\bi{\cdot}$ were known, we could apply
our results for the bid distributions to estimate $G_i(v)$ directly.
In our setting, however, we do not have access to
the strategies $\bi{\cdot}$---in fact, a general closed form does not
exist for asymmetric auctions.

Instead, we will use the characterization given by Equation
\eqref{eq:bid_argmax} of the equilibrium bid as the best response to other
bidders. In particular, since we have accurate estimates for each $F_i(\cdot)$,
we can define the following empirical versions of each quantity
introduced so far, including $\widehat{G_i}$, an estimate for the cumulative
distribution functions of each agent's value:
\begin{align}
    \label{eq:approx_utility}
    \widehat{u_i}(b; v_i) = (v_i - b)\, \prod_{j \neq i} \widehat{F_j}(b),
    \quad
    \estbi{v} = \arg\max_{b} \widehat{u_i}(b; v_i),
    \quad
    \widehat{G_i}(v) = \widehat{F_i}(\estbi{v}).
\end{align}
Turning this into a formal argument requires tackling the following technical
challenges:
\begin{enumerate}
    \item Approximating the utility function via $\widehat{u_i}(\cdot; v_i)$ and
    efficiently maximizing the estimated utility function to find the optimal bid.
    \item Showing that the maximizer of the
    $\widehat{u_i}(\cdot, v_i)$ is close to that of the true utility.
    \item Bounding the combined error incurred from our empirical approximations.
\end{enumerate}

We will start by tackling the above challenges in the 
effective-support regime. The following characterizes the setup as
well as the additional assumptions used to estimate the value distribution:
\begin{assumption}[Value Estimation]
    \label{assumption:value_est_1}
    We assume the preconditions of \cref{lemma:characterization_equilibrium}, as
    well as an upper bound on the density of the value distributions, i.e., 
    $g_i(b) \leq \zeta \text{ for all } i \in [k].$
\end{assumption}
\begin{definition}[Effective Support]
    \label{label:value_assumption}
    In the effective-support setting,
    we are given $(p, \gamma) \in [0, 1]$ such that 
    \(
        \prod_{i \in [k]} F_i(p) \geq \gamma.
    \)
    This is identical to the effective-support setting for bid estimation, with
    the addition that $p$ is pre-defined, since it is part of the
    estimation algorithm.  
\end{definition}

We are now prepared to tackle recovery of the valuation distribution in the
effective-support regime. Our main result is captured in
\cref{thm:value_estimation} below. After proving the result, we will show how it
straightforwardly extends to full-support estimation, in a similar manner to our
bid estimation results.
\begin{theorem}[Estimation of Value Distributions -- Effective Support]
    \label{thm:value_estimation}
    Let $\{(Y_i, Z_i)\}_{i=1}^n$ be $n$ i.i.d. samples from the same first-price
    auction as per \cref{def:max_sel_obs}. 
    Under the setup of \cref{label:value_assumption}, there exists a
    polynomial-time estimation algorithm that computes cumulative distribution 
    functions $\hat{G}_i(\cdot)$ for $i \in [k]$ with the following guarantee:
    \begin{align*}
        \sup_{v \in [p, 1]} \abs*{\hat{G}_i(v) - G_i(v)} &\leq \epsilon
        &\text{ if } n = 
        \tilde{\Theta}\lr{\frac{k^2 \zeta^2 L^6 \log(1/\delta)}{\gamma^{10} \epsilon^6}}
        \text{ and } F_i \text{ is $L$-Lipschitz} \\
        D_L\lr{
            \hat{G}_i \cdot \bm{1}_{[p, 1]}, G_i \cdot \bm{1}_{[p, 1]}
        } &\leq \epsilon
        &\text{ if } n = 
        \tilde{\Theta}\lr{\frac{k^2 \zeta^2 \log(1/\delta)}{\gamma^{16} \epsilon^{12}}}
        \phantom{an \text{ and } F_i \text{is $L$-Lipschitz}} 
    \end{align*}
    for all $i \in [k]$, where $D_L(\cdot, \cdot)$ is distance in the L\'evy metric.
\end{theorem}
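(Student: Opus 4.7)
The plan follows the roadmap sketched in Equation~(2): obtain estimates $\widehat{F_j}$ of each bid distribution via Theorem~\ref{thm:1stPrice:likely}, then set $\widehat{\beta_i}(v) = \arg\max_b \widehat{u_i}(b;v)$ and $\widehat{G_i}(v) = \widehat{F_i}(\widehat{\beta_i}(v))$. The auction-equilibrium structure (\cref{lemma:characterization_equilibrium} and \cref{cor:convenient_equilibrium}) will enter only in the stability analysis of the empirical best response; every other step is a mechanical propagation of CDF-estimation error.

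First I would invoke \cref{thm:1stPrice:likely} with target error $\epsilon'$ (to be fixed in terms of $\epsilon$) on an interval containing the range $\{\beta_i(v):v\in[p,1]\}$; the hypothesis $\prod_j F_j(p)\ge\gamma$ together with the shading inequality $\beta_i(v)\le v$ ensures the effective-support condition of that theorem holds, and a union bound over agents gives $\sup_b |\widehat{F_j}(b) - F_j(b)| \le \epsilon'$ uniformly in $j$. Propagating this to the utility surfaces is immediate: since $u_i(b;v) = (v-b)\prod_{j\neq i}F_j(b)$ and the factors lie in $[0,1]$, the telescoping identity gives $\|\widehat{u_i}(\cdot;v) - u_i(\cdot;v)\|_\infty \le k\epsilon'$. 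Consequently $\widehat{\beta_i}(v)$ is a $2k\epsilon'$-approximate maximizer of the true utility:
\[
u_i(\widehat{\beta_i}(v);v) \ge \widehat{u_i}(\widehat{\beta_i}(v);v) - k\epsilon' \ge \widehat{u_i}(\beta_i(v);v) - k\epsilon' \ge u_i(\beta_i(v);v) - 2k\epsilon'.
\]

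The technical crux---and the step I expect to be the main obstacle---is converting approximate utility-optimality into a quantitative bound on $|\widehat{\beta_i}(v) - \beta_i(v)|$. I plan to extract a local curvature bound on $u_i(\cdot;v)$ at its maximizer: differentiate twice, use the FOC from \cref{cor:convenient_equilibrium} to eliminate $v$, and invoke the upper bound $g_i\le\zeta$ together with Lebrun's uniqueness hypothesis to show $|u_i''(\beta_i(v);v)|\ge\mu$ for some $\mu=\mu(\gamma,\zeta,k)>0$; a standard quadratic-growth argument then yields $|\widehat{\beta_i}(v)-\beta_i(v)|\le C\sqrt{k\epsilon'/\mu}$. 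For the Kolmogorov bound, Lipschitzness of $F_i$ (i.e., $f_i\le L$) finishes the job:
\[
|\widehat{G_i}(v)-G_i(v)| \le |\widehat{F_i}(\widehat{\beta_i}(v)) - F_i(\widehat{\beta_i}(v))| + L\,|\widehat{\beta_i}(v)-\beta_i(v)| \le \epsilon' + CL\sqrt{k\epsilon'/\mu},
\]
and choosing $\epsilon'$ of order $\mu\epsilon^2/(kL^2)$ balances the two terms; combining with the $1/\epsilon'^2$ cost of \cref{thm:1stPrice:likely} and tracking the polynomial dependence of $\mu$ on $\gamma$ and $\zeta$ yields the stated sample complexity.

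For the L\'evy bound the Lipschitz step is unavailable, but L\'evy distance permits a horizontal $\epsilon$-shift in the argument, so the error $|\widehat{\beta_i}(v)-\beta_i(v)|$ can be absorbed as a shift of the estimated CDF rather than converted through a Lipschitz constant. This trades the $L^6$ factor for a worse exponent in $\epsilon$: the shift itself must be of order $\epsilon$, forcing $\sqrt{k\epsilon'/\mu}\le\epsilon$, and a further square-root loss arises because L\'evy distance relates to the stronger metrics only through a square root (as noted in Section~\ref{sec:prelims}). The principal obstacle in both regimes remains the curvature bound: unlike the mechanical utility-approximation and Lipschitz-combination steps, it genuinely uses the equilibrium structure of \cref{lemma:characterization_equilibrium} and governs how cleanly the empirical best response recovers the true bid function, so the dependence of $\mu$ on the model parameters ultimately determines the exponents in the final sample complexity.
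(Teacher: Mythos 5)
Your high-level skeleton (estimate the $F_j$'s via \cref{thm:1stPrice:likely}, form the empirical utility, take $\widehat{G_i}(v)=\widehat{F_i}(\widehat{\beta_i}(v))$, and reduce everything to bounding $\Delta=|\widehat{\beta_i}(v)-\beta_i(v)|$ from approximate utility-optimality) matches the paper, but the step you yourself identify as the crux is where the proposal breaks. You claim a uniform curvature bound $|u_i''(\beta_i(v);v)|\ge\mu(\gamma,\zeta,k)$ and a "standard quadratic-growth argument" giving $\Delta\lesssim\sqrt{k\epsilon'/\mu}$. Quadratic growth needs curvature control on the whole segment between $\widehat{\beta_i}(v)$ and $\beta_i(v)$, not at the maximizer alone; writing $u_i'(b)=(\alpha_i(b^*)-\alpha_i(b))\cdot\prod_{j\neq i}F_j(b)/(\alpha_i(b)-b)$ (as in \cref{lem:utility_derivative} and \cref{cor:convenient_equilibrium}), the second derivative away from $b^*$ carries the term $(\alpha_i(b^*)-\alpha_i(b))\cdot\frac{d}{db}\bigl[\prod_{j\neq i}F_j(b)/(\alpha_i(b)-b)\bigr]$, which involves the bid densities (not assumed to exist in the second case, and only bounded, not differentiable, in the first) and $\alpha_j'$, for which no upper bound follows from $g_i\le\zeta$. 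Moreover the markup $\alpha_i(b)-b$ is only bounded below by $\gamma(b-p)$ and vanishes at the edge of the effective support, so $\alpha_i'$ (lower-bounded via \cref{lem:qi_lb} only proportionally to the markup) can be tiny near $p$ and the utility genuinely flattens there: no $\mu$ depending only on $(\gamma,\zeta,k)$ makes the quadratic-growth inequality hold on all of $[p,1]$. The paper's argument instead integrates $u_i'$ along the segment, lower-bounding the integrand by $\frac{\gamma^2\Delta}{2(k-1)^2\zeta^2}|x-\bar b|$, which yields the \emph{cubic} relation $\epsilon_0\asymp\gamma^3\Delta^3/(k^2\zeta^2)$; that cubic relation is precisely what produces the $L^6/\epsilon^6$ and $\epsilon^{12}$ exponents in the statement (your quadratic claim would give $L^4/\epsilon^4$, a signal that it is extracting more than the assumptions supply).

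The second genuine gap is the L\'evy case. A bound on $|\widehat{\beta_i}(v)-\beta_i(v)|$ in \emph{bid} space cannot simply be "absorbed as a horizontal shift" of $G_i$, because the L\'evy shift must be taken in \emph{value} space: controlling $|\widehat{G_i}(v)-G_i(v\pm\epsilon)|$ requires comparing $F_i(\widehat{\beta_i}(v))$ with $F_i(\beta_i(v))$, i.e., exactly the Lipschitz step you no longer have, or equivalently a modulus-of-continuity bound on $\alpha_i$, which the assumptions (upper bound $g_i\le\zeta$ only, no lower bound on $g_i$, no Lipschitz $F_i$) do not provide. The paper's route is different and is needed here: restrict to $[p+d,1]$, use bid optimality plus the first-order condition $\sum_{j\neq i}f_j(b)/F_j(b)=1/(\alpha_i(b)-b)$ to deduce that $f_j(b)\le 1/(d\gamma)$ on that interval (so the bid CDFs are automatically $1/(d\gamma)$-Lipschitz there), rerun Case 1 with $L=1/(d\gamma)$, and take $d=\epsilon$ so that only the sliver $[p,p+d]$ is charged to the horizontal L\'evy slack. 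Without this (or some substitute), your second bound does not follow. The remaining steps of your outline (error propagation to the utility, efficient exact maximization of the piecewise-constant empirical utility) are consistent with the paper and unproblematic.
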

\begin{remark}[Testing Lipschitzness]
    The first guarantee given by \cref{thm:value_estimation} depends on the
    (global) Lipschitzness of the bid CDFs $F_i$.
    While efficiently testing global Lipschitzness of $F_i$ from samples is
    impossible, if we have a specific $\epsilon_0 > 0$ in mind, we can instead
    test 
    \[
        \widehat{L} = \max_{|x - y| = \epsilon_0}\, \frac{1}{\epsilon_0} \lr{\abs*{\widehat{F}_i(x) - \widehat{F}_i(y)} + 2\epsilon}.
    \]
    This maximization can be done efficiently in $n$ steps, since
    $\widehat{F}_i$ is piecewise constant. 
    Since we condition on accurate bid CDF estimation, we have that for any
    $x$ and $y$,  
    \begin{align*}
        \abs*{\widehat{F}_i(x) - \widehat{F}_i(y)} \geq \abs*{F_i(x) - F_i(y)} - 2\epsilon,
    \end{align*}
    and so $\widehat{L} \geq \max_{|x - y| = \epsilon_0}
    \frac{1}{\epsilon_0}\abs{F_i(x) - F_i(y)}$, and so we can use
    $\widehat{L}$ in place of $L$ in the theorem.
\end{remark}

\begin{proof}[Proof of Theorem \ref{thm:value_estimation}]
In this effective-support setup,
\cref{thm:1stPrice:likely} guarantees that with probability $1 - \delta$, we can
learn the bid CDFs on the interval $[\bi{\rho}, 1]$ up to additive error
$\epsilon_0$, for any $\epsilon_0 > 0$, in  
\(
    n = \tilde{\Theta}\lr{\frac{\log(k/\delta)}{\gamma^4 \epsilon_0^2}}
\)
queries. We thus assume that we have CDF estimates $\widehat{F_i}(\cdot)$ that
are within $\epsilon_0$ of the corresponding true CDFs in the effective
support. 

Our point of start is to show that we can estimate the approximate utility
function efficiently.
For each agent $i$, we can relabel each observed
data point $(Y, Z)$ as $(Y, \bm{1}_{Z = i})$ and run our estimation
procedure on the corresponding two-agent auction to get piecewise-constant
$\epsilon_0$-approximations of $\prod_{j \neq i} F_j(b)$ for all $i \in [k]$.
Since $v_i, b \in [0, 1]$, we can condition on the event that
$\widehat{u_i}(\cdot, v_i)$ is an $\epsilon_0$-approximate estimate of the true
utility function.

Now, the form of our estimate for $\prod_{j \neq i} F_j(b)$ is
piecewise constant (with $n$ pieces, where $n$ is the sample complexity of
\cref{thm:1stPrice:likely}) 
and monotonically increasing in $b$.
Meanwhile, $(v_i - b)$ is strictly decreasing in $b$ along any
interval. 
We can thus
exactly maximize $\widehat{u_i}$ by evaluating it at $n$ locations (i.e., the
beginning of each piecewise-constant interval).

Next, define $b^*$ and $\widehat{b}$ to be the maximizer of $u_i(\cdot;
v_i)$ and $\widehat{u_i}(\cdot; v_i)$ respectively over the interval $[p, 1]$.
Since we have an $\epsilon_0$-approximation of utility within this interval,
\begin{align}
    \label{eq:uspace-closeness}
    \abs*{
        u_i(b^*; v_i) - u_i(\widehat{b}; v_i)
    } &\leq 2\epsilon_0.
\end{align}

Our next goal is to translate this proximity in utility-space to proximity in
parameter-space, i.e., to show that $b^* \approx \widehat{b}$. To do so, we use
the derivative of the utility function with respect to the bid, which is given
by the following result:
    \begin{lemma}[Derivative of utility function]
        \label{lem:utility_derivative}
        Fix any $v_i \in [0, 1]$, and let $b^* = \bi{v_i}$ be the equilibrium bid
        for the $i$-th agent corresponding to value $v_i$. Let $u_i(\cdot; v_i)$
        denote the utility function as in \eqref{eq:bid_argmax}. Then, 
        \begin{align*}
            \frac{d}{db} u_i(b; v) = \lr{\vi{b^*} - \vi{b}} \sum_{j \neq i} \frac{f_j(b)}{F_j(b)} \prod_{j \neq i} F_j(b).
        \end{align*}
    \end{lemma}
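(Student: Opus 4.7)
}
The plan is to compute $\tfrac{d}{db} u_i(b;v_i)$ directly by the product rule and then rewrite the resulting expression using the equilibrium characterization from Corollary~\ref{cor:convenient_equilibrium}. Starting from $u_i(b;v_i) = (v_i - b)\prod_{j \neq i} F_j(b)$, differentiation yields
\[
\frac{d}{db} u_i(b;v_i) = -\prod_{j\neq i} F_j(b) + (v_i - b)\sum_{j \neq i} f_j(b)\prod_{\ell \neq i,j} F_\ell(b) = \prod_{j\neq i} F_j(b)\left[-1 + (v_i - b) \sum_{j\neq i} \frac{f_j(b)}{F_j(b)}\right].
\]
This already pulls out the factor $\prod_{j\neq i}F_j(b)\sum_{j\neq i} \tfrac{f_j(b)}{F_j(b)}$ that appears in the target expression, so what remains is to match the bracketed scalar with $\vi{b^*} - \vi{b}$.

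Next, I would substitute $v_i = \vi{b^*}$, which holds by definition since $b^* = \bi{v_i}$ and $\vi{\cdot} = \beta_i^{-1}(\cdot)$. This turns the bracket into $-1 + (\vi{b^*} - b)\sum_{j\neq i} \tfrac{f_j(b)}{F_j(b)}$. The only remaining discrepancy between this and $(\vi{b^*} - \vi{b})\sum_{j\neq i} \tfrac{f_j(b)}{F_j(b)}$ is the replacement of $-b$ by $-\vi{b}$ together with the stray $-1$, i.e.\ we need
\[
-1 = (b - \vi{b})\sum_{j\neq i} \frac{f_j(b)}{F_j(b)},
\]
which is precisely Corollary~\ref{cor:convenient_equilibrium} (in rearranged form), valid at any $b$ in the range of $\beta_i$. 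Plugging this identity in and combining terms yields
\[
\frac{d}{db} u_i(b;v_i) = \prod_{j\neq i}F_j(b)\,(\vi{b^*} - \vi{b})\sum_{j \neq i}\frac{f_j(b)}{F_j(b)},
\]
which is the claimed expression.

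\paragraph{What is and isn't hard.}
There is no real obstacle here: the lemma is purely an algebraic rewriting that trades the ``$-1$'' from the product rule for the equilibrium first-order condition. The only subtlety to be careful about is the domain on which $\vi{\cdot}$ is well defined and strictly increasing, which is precisely what Lemma~\ref{lemma:characterization_equilibrium} guarantees under the stated hypotheses; in particular, invoking Corollary~\ref{cor:convenient_equilibrium} to substitute for $-1$ is justified only at bids $b$ lying in the image of some $\beta_j$, so a sentence noting this (and that the derivative expression we ultimately use is applied at such bids) is the one non-mechanical thing to include.
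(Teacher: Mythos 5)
Your proof is correct and follows essentially the same route as the paper's: compute $\tfrac{d}{db}u_i$ by the product rule, substitute $v_i = \vi{b^*}$, and use Corollary~\ref{cor:convenient_equilibrium} to absorb the $-1$ term. The only cosmetic difference is that the paper decomposes $(v-b) = (v - \vi{b}) + (\vi{b} - b)$ before invoking the corollary, while you substitute directly for $-1$; your added remark about the domain of $\vi{\cdot}$ is a reasonable precision the paper leaves implicit.
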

    \begin{proof}
        Recalling the definition of the utility function from \eqref{eq:bid_argmax},
        \begin{align*}
            u_i(b; v) &= (v - b) \cdot \prod_{j\neq i} F_j(b) \\
            \frac{d}{db} u_i(b; v) &= -\prod_{j \neq i} F_j(b) + (v - b) \sum_{j \neq i} f_j(b) \prod_{k \neq j,i} F_k(b) \\
            &= \left(
                    (v - \vi{b}) \sum_{j \neq i} \frac{f_j(b)}{F_j(b)} +
                    (\vi{b} - b) \sum_{j \neq i} \frac{f_j(b)}{F_j(b)} - 1
                \right) \prod_{j \neq i} F_j(b).
        \end{align*}
        Observing that $v_i = \vi{b^*}$ and using
        \cref{cor:convenient_equilibrium} concludes the proof. 
    \end{proof}
    Now, returning to \eqref{eq:uspace-closeness},
    \begin{align}
        \label{eq:eps_bound_ve}
        2\epsilon_0 &\geq 
        \abs*{\int_{\widehat{b}}^{b^*} \lr{\vi{b^*} - \vi{x}} 
            \cdot \frac{1}{\vi{x} - x} \cdot \prod_{j \neq i} F_j(x)\, dx}.
        &\text{(\cref{cor:convenient_equilibrium})}
    \end{align}
    In order to bound $\abs{b^* - \widehat{b}},$ we need
    the following lower bound on the derivative $\vip{x}$:
    \begin{restatable}{lemma}{qilb}
        \label{lem:qi_lb}
        Under the conditions of \cref{assumption:value_est_1},
        for all $b \in [\rho, 1]$,
        \begin{align*}
            \qi > L(b) \qquad \text{ where }
            L(b) \coloneqq \frac{\vi{b} - b}{(k-1)^2 \cdot \zeta}.
        \end{align*}
    \end{restatable}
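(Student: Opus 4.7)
The plan is to translate the claim about $\qi$ into an inequality on the bid-level ratio $f_i(b)/F_i(b)$, and then use the equilibrium first-order conditions together with the density upper bound to extract the required lower bound.

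Because $\alpha_i = \beta_i^{-1}$ is strictly increasing at equilibrium, $G_i(\vi{b}) = F_i(b)$, and differentiating in $b$ gives $g_i(\vi{b})\,\vip{b} = f_i(b)$. Consequently
\[
  \qi \;=\; \frac{g_i(\vi{b})\,\vip{b}}{G_i(\vi{b})} \;=\; \frac{f_i(b)}{F_i(b)} \;=:\; s_i,
\]
and, writing $t_j := 1/(\vj{b} - b)$, the claim of the lemma is equivalent to $s_i \cdot t_i > 1/((k-1)^2\zeta)$.

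Next I would collect two equilibrium identities. Corollary~\ref{cor:convenient_equilibrium} states $\sum_{j \neq i} s_j = t_i$; summing this over $i$ gives $\sum_j s_j = \frac{1}{k-1}\sum_j t_j$, and subtracting the former from the latter produces the structural relation $s_i = \frac{1}{k-1}\sum_j t_j - t_i$, equivalent to the Lebrun ODE from Lemma~\ref{lemma:characterization_equilibrium}. Separately, Assumption~\ref{assumption:value_est_1} together with $G_j(0) = 0$ integrates to $F_j(b) = G_j(\vj{b}) \leq \zeta \vj{b} \leq \zeta$ for every $j$, so $f_j(b) \leq \zeta s_j$; and because $\vj{b} \leq 1$ we have $t_j \geq 1$ for every $j$.

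The main obstacle is to combine these ingredients so that the product $s_i t_i$ is lower bounded by exactly $1/((k-1)^2\zeta)$. My intended route is to apply AM--GM to the first order identity, giving $t_i = \sum_{j \neq i} s_j \geq (k-1) \bigl( \prod_{j \neq i} s_j \bigr)^{1/(k-1)}$, hence $\prod_{j \neq i} s_j \leq (t_i/(k-1))^{k-1}$, and therefore (multiplying by $s_i$)
\[
  s_i \;\geq\; \frac{(k-1)^{k-1}\,\prod_j s_j}{t_i^{k-1}}.
\]
Using the density bound $F_j(b) \leq \zeta$ to rewrite $\prod_j s_j = \prod_j f_j(b)/\prod_j F_j(b)$ in a form controlled by $\zeta$, and then applying Corollary~\ref{cor:convenient_equilibrium} to every agent $j$ to re-express the resulting $\prod_j f_j(b)$ in terms of the $t_j$'s (using $t_j \geq 1$ to bound cross-terms uniformly), should yield the constant $1/((k-1)^2\zeta)$. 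The delicate part of the calculation is the bookkeeping of the $(k-1)$ factors and the single factor of $\zeta^{-1}$, which I expect to require tracking how the upper bound on $G_j$ at zero trades off against the equilibrium constraints coupling different agents.
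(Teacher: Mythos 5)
There is a genuine gap, and it sits exactly where you flag the ``delicate part.'' Your reduction $\frac{d}{db}\log G_i(\alpha_i(b)) = f_i(b)/F_i(b) =: s_i$ and the identities $\sum_{j\neq i}s_j = t_i$ with $t_j := 1/(\alpha_j(b)-b)\geq 1$ are correct, but the remaining ingredients you permit yourself --- AM--GM, $F_j(b)\leq \zeta$ (vacuous, since any density bound on $[0,1]$ forces $\zeta\geq 1$), and $f_j(b)\leq \zeta s_j$ (again just $F_j\leq\zeta$) --- are all pointwise consequences of the first-order conditions, and those conditions admit configurations that violate the conclusion. For instance, with $k=3$ take $s_1=\varepsilon$ and $s_2=s_3=1$: then $t_1=s_2+s_3=2$ and $t_2=t_3=1+\varepsilon\geq 1$, so every constraint you invoke is satisfied, yet $s_1t_1=2\varepsilon$ can be made smaller than $1/((k-1)^2\zeta)$. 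Hence no bookkeeping of $(k-1)$ factors and $\zeta$'s along your route can close the argument: the AM--GM step only lower-bounds $s_i$ relative to $\prod_j s_j$, and there is no quantitative lower bound on the bid densities $f_j$ available from Assumption~\ref{assumption:value_est_1} (nor may one enter, since the constant in $L(b)$ contains no such parameter).

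What the lemma actually requires is second-order/global information, which is how the paper proves it (following Lebrun's Lemma A-1): define $b_i'$ as the infimum of the points above which the desired inequality holds, pick the agent $i$ maximizing $b_i'$, and suppose $b_i'\geq\rho$; at this extremal point equality holds, and differentiating the equilibrium identity $(\alpha_i(b)-b)\,\frac{d}{db}\log G_i(\alpha_i(b)) = \frac{1}{k-1}\bigl(-(k-2)+\sum_{j\neq i}\frac{\alpha_i(b)-b}{\alpha_j(b)-b}\bigr)$ yields a derivative $D(b_i')$ that is bounded above using $\alpha_i(b_i')\leq\alpha_j(b_i')$ (a consequence of the extremal choice of $i$) and, crucially, the effective-support bound $G_j(\alpha_j(b_i'))\geq\gamma$ combined with $g_j\leq\zeta$, producing a strictly negative term of order $\gamma/((k-1)\zeta)$; comparing with the lower bound on $D(b_i')$ implied by the definition of $L$ forces $\frac{d^2}{db^2}\log G_i(\alpha_i(b))<0$ at $b_i'$ and hence a contradiction with the definition of $b_i'$. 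Your proposal never uses the restriction $b\in[\rho,1]$ (equivalently $F_j(b)\geq\gamma$) and never differentiates the ODE, and these are precisely the inputs that make the statement true; the missing idea is this extremal-point contradiction, not a cleverer rearrangement of the pointwise first-order identities.
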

    We defer the proof of \cref{lem:qi_lb} to the Online Appendix (the proof is
    nearly identical to that of Lemma A-1 in \citep{lebrun2006uniqueness}, with
    the exception that we keep better track of constants to get a non-zero
    lower bound), and state the corollary:
    \begin{corollary}
        Under the conditions of \cref{lem:qi_lb},
        \begin{align*}
            \vip{b} = \frac{\gamma}{\zeta} \cdot \lr{ \qi } \geq \frac{\gamma}{(k-1)^2 \cdot \zeta^2} \cdot (\vi{b} - b).
        \end{align*}
    \end{corollary}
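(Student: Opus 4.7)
The plan is to derive the stated identity for $\vip{b}$ by a direct chain-rule computation and then chain it with \cref{lem:qi_lb} to obtain the lower bound. The whole corollary is essentially a bookkeeping step once the lemma is in hand, so I expect no substantive obstacle.

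Step one is to expand the logarithmic derivative $\qi$ using the chain rule. Since $G_i$ has density $g_i$,
\[
\qi \;=\; \frac{d}{db} \log G_i(\vi{b}) \;=\; \frac{g_i(\vi{b})}{G_i(\vi{b})} \cdot \vip{b},
\]
which rearranges to the identity $\vip{b} = \frac{G_i(\vi{b})}{g_i(\vi{b})} \cdot \qi$. This is the exact form underlying the equality in the statement.

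Step two is to control the prefactor $G_i(\vi{b}) / g_i(\vi{b})$. The denominator is upper-bounded by $\zeta$ by \cref{assumption:value_est_1}. For the numerator, the strict monotonicity of the equilibrium bid function established in \cref{lemma:characterization_equilibrium} together with the inverse mapping theorem gives the identity $G_i(\vi{b}) = F_i(b)$. Since $b \geq p$ sits inside the effective support and each $F_j$ is monotone, the assumption $\prod_{j \in [k]} F_j(p) \geq \gamma$ from \cref{label:value_assumption}, combined with the trivial bound $\prod_{j \neq i} F_j(p) \leq 1$, yields $F_i(p) \geq \gamma$ and hence $G_i(\vi{b}) \geq \gamma$. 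Combining these two bounds produces $\vip{b} \geq (\gamma/\zeta) \cdot \qi$.

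The final step is to substitute the bound $\qi \geq (\vi{b} - b)/((k-1)^2 \zeta)$ from \cref{lem:qi_lb} into the previous inequality, which immediately yields the right-hand bound claimed in the corollary. The only point requiring any care is identifying the source of the lower bound $\gamma$ on $G_i(\vi{b})$, and this is precisely where the effective-support condition is used; everything else is routine chain-rule manipulation.
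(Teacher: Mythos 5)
Your proof is correct and is essentially the intended argument: the paper states this corollary without proof, and the chain-rule identity $\vip{b} = \frac{G_i(\vi{b})}{g_i(\vi{b})}\cdot\qi$ combined with $g_i \le \zeta$, $G_i(\vi{b}) = F_i(b) \ge F_i(p) \ge \gamma$ on the effective support, and \cref{lem:qi_lb} is exactly the bookkeeping the paper relies on. Note that your derivation rightly yields ``$\geq$'' rather than the ``$=$'' written in the corollary's first relation, which is an imprecision in the statement rather than a flaw in your argument.
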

Now, let $\bar{b} = (\widehat{b} + b^*)/2.$ By positivity of the integrand in \eqref{eq:eps_bound_ve},
    \begin{align*}
        2\epsilon_0 
        &\geq \abs*{\int_{\bar{b}}^{\widehat{b}} 
            \lr{\vi{x} - \vi{\bar{b}}}
            \cdot \frac{1}{\vi{x} - x} \cdot \prod_{j \neq i} F_j(x)\, dx}.
    \end{align*}
    Using the intermediate value theorem, there exists $z \in [\min(x, \bar{b}), \max(x, \bar{b})]$ so that 
    \begin{align*}
        2\epsilon_0 &\geq \abs*{\int_{\bar{b}}^{\widehat{b}} 
            \frac{\lr{x - \bar{b}} \vip{z}}{\vi{x} - x} \cdot \prod_{j \neq i} F_j(x)\, dx}
        \geq \int_{\bar{b}}^{\widehat{b}} 
            \frac{\gamma\abs{x-\bar{b}}}{(k-1)^2 \cdot \zeta^2}
            \cdot \frac{\vi{z} - z}{\vi{x} - x} \cdot \prod_{j \neq i} F_j(x)\, dx.
    \end{align*}
    Using our effective support definition and bid optimality,
    \begin{align*}
        \lr{\vi{z} - z} \cdot \prod_{j \neq i} F_j(z) 
            &\geq \lr{\vi{z} - p} \cdot \gamma 
            \geq \lr{z - p} \cdot \gamma 
            \geq \lr{\Delta/2} \cdot \gamma,
    \end{align*} 
    where $\Delta = \abs{\widehat{b} - b^*}$. Thus,
    since $\vi{x} - x \leq 1$,
    \begin{align*}
        \frac{\vi{z} - z}{\vi{x} - x} \cdot \prod_{j \neq i} F_j(x) 
        &\geq \lr{\Delta/2} \cdot \gamma \cdot \lr{\prod_{j \neq i} F_j(x) / \prod_{j \neq i} F_j(z)} 
        \geq \lr{\Delta/2} \cdot \gamma^2 
    \end{align*}
    Returning to the integral,
    \begin{align}
        2\epsilon_0
        &\geq \frac{\Delta \gamma^3}{2 (k-1)^2 \cdot \zeta^2} \cdot \int_{\bar{b}}^{\widehat{b}} 
            \abs{x - \bar{b}} \, dx 
        \label{eq:case2}
        \geq \frac{\Delta \gamma^3}{2 (k-1)^2 \cdot \zeta^2} \cdot \lr{\frac{\Delta^2}{8}}.
    \end{align}
    Thus, for any $\epsilon_1 > 0$, setting 
    \(
        \epsilon_0 = \frac{\epsilon_1^3 \Gamma^3}{32k^2\zeta^2}
    \)
    implies that $\Delta < \epsilon_1$.

    We are now ready to bound the error in our estimate of the valuation
    distribution. We consider two cases: first, when the bid CDFs $F_i(\cdot)$
    satisfy a Lipschitz-like constraint; second, a more general setting where we
    only require a lower bound on the valuation densities $g_i(\cdot)$. In the
    first case, we learn the valuation distributions in Kolmogorov distance over the
    interval $[\rho, 1]$, whereas in the second case we learn in 1-Wasserstein
    distance. 

    In both cases, our estimate will be given by:
    \begin{align*}
        \widehat{G}_i(v) \coloneqq \bm{1}\lbrb{v \in [p, 1]}\cdot \widehat{F}_i(\widehat{b}(v))  
    \end{align*}

    \paragraph{{\bf Case 1}: Lipschitz bid CDF} In the first case, we
    assume that the cumulative distribution function of each bid distribution is
    $L$-Lipschitz continuous
    (note that any bid distribution with density bounded by $L$ satisfies this).
    Then, for any $v \in [p, 1]$, 
    \begin{align*}
            \abs*{G_i(v) - \widehat{F}_i(\widehat{b}(v))} 
            &\leq \abs*{G_i(v) - F_i(b(v))} + \abs*{F_i(\widehat{b}(v)) - F_i(b(v))} + \epsilon_0 
            \leq L \cdot \epsilon_1 + \epsilon_0 < 2L\epsilon_1,
    \end{align*}
    where we recall that $\widehat{G}_i(v) \coloneqq
    \widehat{F}_i(\widehat{b}(v))$. 
    Thus, we define $\epsilon_1 = \epsilon / (2L)$ so that 
    \[
        \sup_{v \in [p, 1]} \abs*{G_i(v) - \widehat{G}_i(v)} \leq \epsilon
        \text{ using }
        n 
        = \tilde{\Theta}\lr{\frac{k^2 \zeta^2 \log(1/\delta)}{\gamma^{10} \epsilon_1^6}}
        = \tilde{\Theta}\lr{\frac{k^2 \zeta^2 L^6 \log(1/\delta)}{\gamma^{10} \epsilon^6}}
    \]
    samples, which concludes the proof. 

    \paragraph{{\bf Case 2}: General case.} We can also obtain a convergence
    guarantee that is independent of the Lipschitz continuity of $F_i$ by
    considering a slightly smaller interval $[p + d, 1]$ for some $d > 0$. In
    this setting, the ``best response'' property of bids implies that, for any
    $b \in [p + d, 1]$,
    \begin{align*}
        \lr{\vi{b} - b} \cdot \prod_{j \neq i} F_j(b) 
            &\geq \lr{\vi{b} - p} \cdot \gamma 
            \geq \lr{b - p} \cdot \gamma 
            \geq d \cdot \gamma.
    \end{align*}
    Thus from \cref{cor:convenient_equilibrium},
    \begin{align*}
        \sum_{j \neq i} \frac{f_j(b)}{F_j(b)} 
            = \frac{1}{\vi{b} - b} 
            \leq \frac{1}{d \cdot \gamma}
        \implies
        f_j(b) \leq \frac{1}{d\cdot \gamma} \text{ for all $j \in [k]$.}
    \end{align*}
    Using the same method as Case I (with $L = 1/(d\cdot
    \gamma)$), we can guarantee that for $d > 0$,
    \begin{align}
        \label{eq:value_nolipschitz_bd}
        \sup_{v \in [p + d, 1]}\ \abs*{\widehat{G}_i(v) - G_i(v)} < \epsilon,
        \text{ as long as }
        n \in \tilde{\Theta}\lr{\frac{k^2 \zeta^2 \log(1/\delta)}{\gamma^{16} d^6 \epsilon^6}}.
    \end{align}
    Setting $d = \epsilon$ concludes the proof of the Theorem.
\end{proof}
As a consequence of \cref{thm:value_estimation}, we can accurately estimate
valuation distributions in Wasserstein distance when the bid densities are lower
bounded:

\begin{theorem}[Estimation of Value Distributions -- Full Support]
    \label{thm:value_estimation_full}
    Let $\{(Y_i, Z_i)\}_{i=1}^n$ be $n$ i.i.d. samples from the same first-price
    auction as per \cref{def:max_sel_obs}. 
    Under the setup of \cref{label:value_assumption} and assuming
    \( \abs{F_i(x) - F_i(y)} \ge \lambda \abs{x - y} \text{ for all } x, y \in [0, 1], \)
    there exists a polynomial-time estimation algorithm that
    computes cumulative distribution functions $\hat{G}_i(\cdot)$ for $i \in
    [k]$ with the following guarantee for all $i \in [k]$:  
    \begin{align*}
        \dw\lr{\hat{G}_i(v), G_i(v)} &\leq \epsilon
        &\text{ if } n = 
        \tilde{\Theta}\lr{\lr{\frac{1024}{\lambda^{10} \epsilon^{10}}}^k {\zeta^2 L^6 \log(1/\delta)}}
        \text{ and } F_i \text{ is $L$-Lipschitz,} \\
        \dw\lr{\hat{G}_i(v), G_i(v)} &\leq \epsilon
        &\text{ if } n = 
        \tilde{\Theta}\lr{\lr{\frac{2048^2 \cdot (11/8)^{16}\cdot (11/3)^6}{\lambda^{16} \epsilon^{22}}}^k \zeta^2 \log(1/\delta)}
        \text{ otherwise.}
    \end{align*}
\end{theorem}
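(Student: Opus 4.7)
The plan is to reduce full-support Wasserstein estimation to the effective-support estimation result of \cref{thm:value_estimation}, following exactly the same template that \cref{sec:proof:1stPrice:fullSupport} uses to reduce \cref{thm:1stPrice:fullSupport} to \cref{thm:1stPrice:likely}. The key ingredient is the density lower bound $|F_i(x) - F_i(y)| \geq \lambda|x - y|$: combined with $F_i(0) = 0$, this yields $F_j(\eta) \geq \lambda \eta$ for each $j \in [k]$, so for any threshold $\eta \in (0,1)$ we have, exactly as in \eqref{eq:lowerBoundH},
\[ H(\eta) = \prod_{j \in [k]} F_j(\eta) \geq (\lambda \eta)^k. \]
This certifies that the hypotheses of \cref{thm:value_estimation} are met with effective-support parameters $p = \eta$ and $\gamma = (\lambda\eta)^k$.

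Next, I would invoke the appropriate case of \cref{thm:value_estimation} on the interval $[\eta, 1]$. When $F_i$ is $L$-Lipschitz (Case 1 of that theorem), we obtain $\sup_{v \in [\eta, 1]} |\hat{G}_i(v) - G_i(v)| \leq \epsilon_K$ at sample cost $\tilde{\Theta}(k^2 \zeta^2 L^6 \log(1/\delta)/((\lambda\eta)^{10k} \epsilon_K^6))$. In the general case (Case 2), the internal estimate \eqref{eq:value_nolipschitz_bd} from the proof of \cref{thm:value_estimation} gives $\sup_{v \in [\eta + d, 1]} |\hat{G}_i(v) - G_i(v)| \leq \epsilon_K$ using $\tilde{\Theta}(k^2 \zeta^2 \log(1/\delta)/((\lambda\eta)^{16k} d^6 \epsilon_K^6))$ samples, for any trimming margin $d > 0$ we choose.

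To pass from effective-support Kolmogorov error to full-support Wasserstein error, I would write $\dw(\hat{G}_i, G_i) = \int_0^1 |\hat{F}_i(x) - F_i(x)|\,dx$ and split the domain. Since both CDFs take values in $[0, 1]$, the contribution of $[0, \eta]$ (respectively $[0, \eta + d]$) is trivially at most $\eta$ (respectively $\eta + d$), while the remaining contribution is at most $\epsilon_K$. This yields $\dw(\hat{G}_i, G_i) \leq \eta + \epsilon_K$ in Case 1 and $\dw(\hat{G}_i, G_i) \leq \eta + d + \epsilon_K$ in Case 2.

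Finally, I would balance the free parameters to hit a target Wasserstein error $\epsilon$. In Case 1, the simple choice $\eta = \epsilon_K = \epsilon/2$ immediately gives the claimed complexity via $(2/(\lambda\epsilon))^{10k} = (1024/(\lambda\epsilon)^{10})^k$, since $2^{10} = 1024$. In Case 2 a three-way split of $\epsilon$ across $\eta$, $d$, and $\epsilon_K$ is required; the stated prefactor $2^{22}$ together with the $(11/8)^{16}$ and $(11/3)^{6}$ constants suggests the choice $\eta = (8/11)\epsilon$ with $d, \epsilon_K$ summing to $(3/11)\epsilon$ and split so as to maximize $d^6 \epsilon_K^6$. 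The main obstacle is precisely this three-way trade-off in Case 2: the interplay between the exponents $16k$ on $\eta$, $6$ on $d$, and $6$ on $\epsilon_K$ needs some care to match the stated absolute constants, although once the split is fixed the remaining computation is mechanical and the overall argument is a verbatim adaptation of \cref{sec:proof:1stPrice:fullSupport}.
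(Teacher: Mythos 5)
Your proposal is correct and follows essentially the same route as the paper: use the density lower bound to get $H(\eta) \ge (\lambda\eta)^k$, invoke Case 1 of \cref{thm:value_estimation} (respectively the internal bound \eqref{eq:value_nolipschitz_bd}) on the effective support, pay at most the excluded mass in Wasserstein, and set $\eta = \epsilon_K = \epsilon/2$ in Case 1. For Case 2 the paper fixes $p = 8\eta/11$, $d = 3\eta/11$, $\gamma = (8\lambda\eta/11)^k$ with $\eta = \epsilon/2$, which is the same three-way split you sketch up to the exact constants hidden in the $\tilde{\Theta}$.
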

\begin{proof}
We proceed identically to the proof of
\cref{thm:1stPrice:fullSupport}. In particular, for any $\eta > 0$,
\[
    H(\eta) = \prod_{j \in [k]} F_j(\eta) \ge (\lambda \cdot \eta)^k.
\]
Set $\gamma = (\lambda \cdot \eta)^k$ and we 
use the first case of theorem \cref{thm:value_estimation}, such that with
$$\tilde{\Theta}\left(\frac{k^2 \zeta^2 L^6 \log(1/\delta)}{\lambda^{10k} \cdot \eta^{10k} \cdot \eta^6}\right)$$ 
samples we can find estimates $\hat{G}_i$ for all $i$ such that the additive error
between $\hat{G}_i$ and $G_i$ is at most $\eta$ in the interval $[\eta, 1]$.
From here an identical argument to that of the proof of
\cref{thm:1stPrice:fullSupport} (i.e.,
\cref{sec:proof:1stPrice:fullSupport}) shows that
\( \dw(\hat{G}_i, G_i) \le 2 \cdot \eta, \)
after which setting $\eta = \eps/2$ the first case in the theorem follows.
For the second case, we use \eqref{eq:value_nolipschitz_bd} with 
$p = 8\eta/11$, $d = 3\eta/11$, $\gamma = (8\lambda \eta/11)^k$, and $\eta =
\epsilon/2$. 
\end{proof}

\section{Estimation from Second-price Auction Data}
\label{sec:sp_sel_alg}
In this section, we will state and prove our main result for the estimation of
bid distributions from second-price auction observations. Unlike the
first-price-auction setting, our main result in this setting involves estimating
the bid distributions under the 
\emph{full-support} regime where we aim to obtain distributions approximating
$F_i$ up to small error in Kolmogorov distance. As in the first-price-setting,
this incurs an exponential dependence on $k$. We leave the problem of estimation
in the \emph{effective-support} regime as an open problem for future work.
(We do show in \cref{sec:sp_bid_insert} that if the econometrician can
insert bids of their own, then even just the identity of the winner and an
indicator of whether the reserve price was paid suffice to estimate bid
distributions over the effective support [\cref{thm:sp_bid_insert}], using a
very simple algorithm.) 
The identification of the cumulative density functions, $F_i$, given access to the
distribution of $(Y, W)$ was previously established in
\cite{athey2002identification} building on techniques from reliability theory
\cite{meilijson1981estimation}. However, this is to our knowledge, the first
result establishing non-parametric finite-sample recovery from observations of
second-price-auction data. We now formalize our considered observation model:
\begin{definition}
    \label{def:second_price_sel}
    Let $\{f_i\}_{i = 1}^k$ be $k$ probability density functions on $[0, 1]$. An
    observation from the second-price selection model on $\{f_i\}_{i = 1}^k$ is
    defined as follows:
    \begin{enumerate}
        \item First generate $X_i \ts f_i$ independently for $i \in [k]$
        \item Define $W \coloneqq \argmax_{i \in [k]} X_i$
        \item Observe the tuple $(Y, W)$ where $Y \coloneqq \max_{i \in [k]
        \setminus \{W\}} X_i$.
    \end{enumerate}
\end{definition}

We now state the assumptions on $F_i$ required for our guarantees to hold:
\begin{assumption}
    \label{as:second_price_bdd}
    The distributions $F_i$ admit densities $f_i$ with
    $\alpha \leq f_i \leq \eta$ for constants $\alpha, \eta > 0$.
\end{assumption}
Note that in comparison to \cref{thm:1stPrice:fullSupport}, we require an upper
bound on the densities, $f_i$, in addition to the lower bound property used
previously. Our main result in this setting is the following theorem where we
establish efficient, finite sample recovery guarantees from second-price-auction
data satisfying \cref{as:second_price_bdd}:
\begin{theorem}
    \label{thm:sec_price_fin_samp}
    Let $\eps \in (0, 1)$ and $\bm{X} = \{(Y_i, W_i)\}_{i = 1}^n$ denote $n$ i.i.d observations from a Second-Price-Selection model (\cref{def:second_price_sel}) satisfying \cref{as:second_price_bdd}. Then, it is possible to learn in polynomial time cumulative distribution functions $\hat{F}_i$ satisfying:
    \begin{equation*}
        \sup_{x \in [0, 1]} \abs{\hat{F}_i (x) - F_i (x)} \leq \eps 
        \text{ with probability at least } 1 - \rho
    \end{equation*}
    as long as $\eps \leq e^{- C^1_{\eta, \alpha} k}$ and $n \geq
    \lprp{\frac{1}{\eps}}^{C^2_{\eta, \alpha} k} \log 1 / \rho$ for some
   absolute constants $C^1_{\eta, \alpha}, C^2_{\eta, \alpha}$.
\end{theorem}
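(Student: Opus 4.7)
The plan is to invert the map from bid CDFs to the observable joint distribution via a fixed-point system of integral equations, and then solve that system on the empirical data. Writing $G_i(y) := \Pr[Y \le y,\, W = i]$ and $F_{-i}(y) := \prod_{j \ne i} F_j(y)$, a direct conditioning calculation on the event $\{W = i,\, Y \in dy\}$ gives the population identity $g_i(y) = (1 - F_i(y))\, f_{-i}(y)$: $\max_{j \ne i} X_j$ contributes $f_{-i}(y)$ and $X_i \ge y$ contributes $1 - F_i(y)$. Rearranging and integrating yields the Volterra-type equation
\[
F_{-i}(y) \;=\; \int_0^y \frac{dG_i(z)}{1 - F_i(z)} \qquad \text{for each } i \in [k],
\]
which, combined with the algebraic identity $F_i = P/F_{-i}$ and $P(y) = \prod_j F_j(y) = \bigl(\prod_\ell F_{-\ell}(y)\bigr)^{1/(k-1)}$, is a closed nonlinear system for $(F_1,\ldots,F_k)$ driven entirely by the observable CDFs $G_i$. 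Crucially, only CDFs appear on the right-hand side, so the inputs can be estimated at the DKW rate with no density estimation.

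The algorithm I would use is a Picard iteration on this system. Plug in the empirical $\hat G_i$, initialize $\hat F_i^{(0)}$ reasonably (say from the quantiles of $\hat H = \sum_i \hat G_i$), and update
\[
\hat F_{-i}^{(t+1)}(y) = \int_0^y \frac{d\hat G_i(z)}{1 - \hat F_i^{(t)}(z)}, \qquad \hat P^{(t+1)} = \Bigl(\prod_\ell \hat F_{-\ell}^{(t+1)}\Bigr)^{1/(k-1)}, \qquad \hat F_i^{(t+1)} = \hat P^{(t+1)}/\hat F_{-i}^{(t+1)}.
\]
A DKW bound applied to each of the $k$ observable marginals gives $\|\hat G_i - G_i\|_\infty \le \tau := \tilde{O}(\sqrt{\log(k/\rho)/n})$ uniformly in $i$ with probability $1 - \rho$. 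The goal is to show that the fixed point of the empirical iteration lies within $\varepsilon$ of the true $F_i$ in sup-norm.

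The main obstacle, and the source of the $(1/\varepsilon)^{O(k)}$ rate, is the quantitative stability of this map. The kernel $(1 - F_i(z))^{-1}$ is singular at $z = 1$, and near $y = 0$ the denominator $F_{-i}(y) \asymp y^{k-1}$ is vanishingly small, so a $\tau$-perturbation of $\hat G_i$ can be amplified by roughly $\tau^{-O(k)}$ when propagated through the iteration. I would handle this by truncating to $[\delta, 1-\delta]$ with $\delta = \varepsilon^{\Theta(1)}$, using the density bounds of \cref{as:second_price_bdd} to conclude $\alpha\delta \le F_i \le 1 - \alpha\delta$ on this interval, and verifying by linearization around the true solution that the iteration operator is a sup-norm contraction with Lipschitz constant explicit in $\alpha,\eta,k,\delta$; outside the window, the trivial bounds $F_i(y) \in [\alpha y, \eta y]$ and $1 - F_i(y) \in [\alpha(1-y),\eta(1-y)]$ let me replace $F_i$ by a constant within $O(\delta)$. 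Balancing truncation bias against DKW noise amplified by $\delta^{-O(k)}$, and running $O(\log(1/\varepsilon))$ iterations, then yields the claimed sample bound; the restriction $\varepsilon \le e^{-C^1_{\eta,\alpha}k}$ ensures that the empirical initialization lies in the basin of attraction of the true fixed point. If the raw iteration is not contractive throughout a whole neighborhood, natural fallbacks are a damped/under-relaxed update, a Newton linearization, or backward integration from the well-behaved boundary $y = 1$ of the associated ODE $L_i'(y) = S'(y) - g_i(y) F_i(y)/\bigl((1-F_i(y))P(y)\bigr)$ in the log-CDF variables $L_i = \log F_i$, $S = \log P$.
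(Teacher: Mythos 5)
Your identification step and fixed-point formulation coincide with the paper's starting point: the identity $g_i = (1-F_i)\,f_{-i}$, the Volterra equation $F_{-i}(y)=\int_0^y dG_i(z)/(1-F_i(z))$, and the reparametrization through $F_i = P/F_{-i}$ with $P=(\prod_\ell F_{-\ell})^{1/(k-1)}$ are exactly the system \eqref{eq:fixed_point} that the paper solves, with DKW giving the only probabilistic input. The gap is in the step you treat as a verification detail: you assert that, after truncating to $[\delta,1-\delta]$, the Picard operator is a sup-norm contraction around the true solution with an explicit Lipschitz constant in $\alpha,\eta,k,\delta$. The paper explicitly finds that this is false --- the iteration is \emph{not} contractive over the whole estimation range. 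The Jacobian bound (\cref{lem:contr}) is of the form $\max_i \, C(\eta/\alpha)^6\,\hat{U}_i(x_{\tau-1})^{-1}\sum_m \frac{x_{\tau,m}}{(1-x_{\tau,m})^2}\Delta^{(\tau)}_{i,m}$, which is below $1$ only when the interval over which one iterates is short relative to the mass $U^*_i$ already accumulated at its left endpoint; since $U^*_i(y)\asymp (\alpha y)^{k-1}$ near $0$, no single truncation level $\delta=\eps^{\Theta(1)}$ makes the operator contractive on all of $[\delta,1-\delta]$. This is why the paper's algorithm is stagewise: macro-intervals are chosen \emph{adaptively from the data} (\cref{alg:macro_ints}) so that each restricted iteration \eqref{eq:fp_fin_def} is $1/4$-contractive, and Banach's theorem is applied separately on each stage with the previous stage's output as boundary data.

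Your fallbacks (damping, Newton, backward integration from $y=1$) are named but not analyzed, and none of them addresses the second, equally essential, difficulty that your proposal omits entirely: errors committed near $y\approx 0$, where $U^*_i$ is exponentially small in $k$, are \emph{amplified multiplicatively} as the solution is propagated rightward. In the paper this shows up as the $2^{T}$ factor in \cref{lem:err_prop} ($T$ = number of macro-intervals), and the proof only closes because (i) the starting point $\nu$ is chosen far smaller than the truncation level $\theta$ so the initial error $(2\eta\nu)^k$ can absorb that amplification (\eqref{eq:par_set}), and (ii) $T$ itself is bounded by a potential-function argument exploiting the doubly-exponential growth $U^*_j(x_{\tau})\ge U^*_j(x_{\tau-1})^{1-\frac{1}{2(k-1)}}$ near the origin (\cref{lem:T_bnd}). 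Your accounting (``DKW noise amplified by $\delta^{-O(k)}$, $O(\log(1/\eps))$ iterations'') has no counterpart to this stage-count control, and the claim that $\eps\le e^{-C^1_{\eta,\alpha}k}$ ``ensures the initialization lies in the basin of attraction'' is not substantiated --- there is no global basin because there is no global contraction. So the proposal correctly sets up the estimating equations but is missing the central mechanism (data-driven interval construction, per-stage contraction, and the bound on the number of stages) that makes the finite-sample theorem true.
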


In the rest of the section, we prove \cref{thm:sec_price_fin_samp}. In \cref{ssec:sec_price_approach}, we present a high level overview of our proof strategy where analogously to the first-price case, we derive a differential equation relating the densities, $f_i$ to the distribution of $(Y, W)$ (\cref{def:second_price_sel}) leading to a fixed point equation satisfied by the cumulative functions $\{F_i\}$. Subsequently, in \cref{ssec:fp_def}, we present a discretized version of the fixed point iteration that we analyze to prove our theorem. We carry out the formal analysis of our in \cref{ssec:sec_price_proof}. 

\subsection{Approach}
\label{ssec:sec_price_approach}
We now provide a high-level overview of our proof of
\cref{thm:sec_price_fin_samp}. For clarity, we will first outline the proof in
the idealized population (infinite-sample) setting. In the next section, we
formalize this outline and establish finite-sample guarantees.

We start by deriving a fixed point equation which plays a central part in our
analysis. 
Note that in the population setting, we have access to the functions 
\begin{equation}
    \label{eq:sp_gi_def}
    G_i(x) = \Pr\lr{W = i, Y \leq x}.
\end{equation}
For each $G_i(x)$, we can use the independence of the bid distributions
$F_i$ and some simple calculations to define a valid corresponding density:
\[
    g_i(x) = \lr{1 - F_i(x)} \sum_{j \neq i} f_j(x) \prod_{l\neq i,j} F_l(x).
\]
Rearranging and taking advantage of the product rule allows us to simplify this
as
\[
    \prod_{j \neq i} F_j(x) = \int_0^x \frac{1}{1-F_i(z)} g_i(z)\, dz.
\]
Re-parameterizing the above by letting $U^*_i =
\prod_{j \neq i} F_i$, we obtain the fixed point equation:
\[
    U^*_i(x) = \int_0^x \frac{1}{1-H_i(U^* (z))} g_i(z)\, dz \  \text{ where }
        H_{i}(v) = \frac{\lprp{{\prod_{j\neq i} v_j^{1/(k-1)}}} }{v_i^{(k-2)/(k-1)}} \\
\]

We divide the domain into smaller intervals and approximate $\wt{F}_i$ by a piecewise-constant function on each. 
The Banach fixed-point theorem \citep{banach} is crucial to our analysis:
\begin{theorem}[Banach Fixed-Point Theorem \citep{banach,patafpt}]
    \label{thm:banach}
    Let $(X, d)$ be a complete metric space with mapping $T: X \rightarrow X$ contractive with respect to $d$; i.e there exists $\rho < 1$ such that: 
    \begin{equation*}
        \forall x, y \in X: d(T(x), T(y)) \leq \rho \cdot d(x, y).
    \end{equation*}
    Then, $T$ admits a unique fixed-point $x^* \in X$. Furthermore, for any $x_0 \in X$, the sequence $\{x_n\}_{n \in \mb{N}}$, defined by $x_n \coloneqq T(x_{n - 1})$ for all $n \geq 1$, satisfies:
    \begin{equation*}
        \forall n \geq 0: d(x_n, x^*) \leq \rho^n \cdot d(x_0, x^*).
    \end{equation*} 
\end{theorem}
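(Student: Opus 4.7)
The plan is to produce the standard constructive proof via Picard iteration from an arbitrary starting point, then read off uniqueness and the contraction-rate bound as simple consequences. Fix any $x_0 \in X$ and define the sequence $x_n \coloneqq T(x_{n-1})$ as in the statement. The first step is to control consecutive differences: by contractivity applied inductively, $d(x_{n+1}, x_n) = d(T(x_n), T(x_{n-1})) \leq \rho \cdot d(x_n, x_{n-1})$, so $d(x_{n+1}, x_n) \leq \rho^n \cdot d(x_1, x_0)$.

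Next I would chain these bounds through the triangle inequality to show that $\{x_n\}$ is Cauchy. For $m > n$, summing a geometric series gives
\[
d(x_m, x_n) \leq \sum_{i=n}^{m-1} d(x_{i+1}, x_i) \leq d(x_1, x_0) \sum_{i=n}^{m-1} \rho^i \leq \frac{\rho^n}{1 - \rho} \cdot d(x_1, x_0),
\]
which tends to $0$ as $n \to \infty$ since $\rho < 1$. By completeness of $(X, d)$, there exists a limit $x^* \in X$ with $x_n \to x^*$. To verify that $x^*$ is a fixed point, I use the fact that any contraction is (Lipschitz) continuous: $d(T(x^*), x^*) \leq d(T(x^*), T(x_n)) + d(x_{n+1}, x^*) \leq \rho \cdot d(x^*, x_n) + d(x_{n+1}, x^*)$, and both terms go to $0$, so $T(x^*) = x^*$.

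Uniqueness is then immediate: if $y^*$ is any other fixed point, $d(x^*, y^*) = d(T(x^*), T(y^*)) \leq \rho \cdot d(x^*, y^*)$, and since $\rho < 1$ this forces $d(x^*, y^*) = 0$. Finally, the quantitative bound claimed in the theorem follows by another direct induction on $n$, using that $x^*$ is a fixed point: $d(x_n, x^*) = d(T(x_{n-1}), T(x^*)) \leq \rho \cdot d(x_{n-1}, x^*)$, iterating which gives $d(x_n, x^*) \leq \rho^n \cdot d(x_0, x^*)$.

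There is no real obstacle here; the only delicate point is making sure the limit actually lies in $X$ (handled by completeness) and that $T$ is continuous enough to commute with the limit (handled by observing contractions are $\rho$-Lipschitz). The whole argument is short and self-contained, which is why the theorem is typically quoted rather than re-proved.
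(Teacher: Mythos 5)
Your proof is correct and complete: it is the standard Picard-iteration argument (geometric decay of consecutive differences, Cauchy plus completeness for existence, $\rho$-Lipschitz continuity to pass to the limit, and contractivity for uniqueness and the rate $d(x_n,x^*)\leq \rho^n\, d(x_0,x^*)$). The paper does not prove this theorem at all --- it quotes it as a classical result with citations --- so there is no alternative argument to compare against; your write-up is exactly the proof one would supply if it were spelled out.
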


Unfortunately, it turns out that the fixed point iteration just described is {\em not} contractive with respect to its input---instead, we proceed iteratively, starting with the origin and estimating $\wt{F}_i$ for each successive interval in turn. After solving for a set of intervals, we treat them as fixed, and construct a new fixed-point iteration for the next interval. We make this argument precise in the following sections. In doing so, 
    the key technical difficulty faced by our approach is in the \emph{amplification} of errors incurred at earlier stages of the algorithm into later stages. As we shall see, errors due to approximation, sampling or computation are compounded exponentially over the running of the algorithm (\cref{lem:err_prop}). Therefore, we use a careful data-based approach where samples are used to decide the widths of successive intervals, ensuring both that the fixed point equation remains contractive and crucially, that there are not too many stages where successive intervals are constructed (\cref{lem:T_bnd}).

\newcommand{\secsampcomp}{C \lprp{\frac{\eta}{\alpha}}^{4k^2} \frac{1}{\eps^{10k}} k (\log \eta / \alpha + \log 1 / \delta \eps)}
\newcommand{\tq}{\wt{Q}}
\newcommand{\tv}{{V^*}}

\newcommand{\hg}{\hat{G}}
\newcommand{\hgerr}{\frac{\eps}{100}}
\newcommand{\hq}{\widehat{Q}}
\newcommand{\finsampint}{\alpha \xi}
\newcommand{\finsampintx}{\xi}
\newcommand{\finphi}{\widehat{\phi}}
\newcommand{\finfix}{\hat{\bm{v}}}

\newcommand{\prevest}{\widehat{V}^{(t-1)}}
\newcommand{\prevtrue}{V^{(t-1)}}
\newcommand{\vst}{\bm{v}^*}

\subsection{Fixed Point Definition}
\label{ssec:fp_def}

Here, we formally define the version of the fixed point iteration used in our algorithm. 
Recall that the CDFs of the bid generating distributions, $F_i$, satisfy:
\begin{equation*}
    \forall i \in [k]: U^*_i (x) \coloneqq \prod_{j \neq i} F_j (x) = \int_{0}^x \frac{1}{1 - F_i (z)} \cdot g_i (z) dz. 
\end{equation*}
Recasting the above equation in terms of the functions, $U^*_i$, we obtain:
\begin{equation}
    \forall i \in [k]: U^*_i (x) \coloneqq \prod_{j \neq i} F_j (x) = \int_{0}^x \frac{g_i (z)}{1 - H_i (z)} dz \text{ with } H_i (z) = \frac{\prod_{j \neq i} (U^*_j (z))^{1 / (k - 1)}}{(U^*_i (z))^{(k - 2) / (k - 1)}}.
    \tag{FP-CONT}
    \label{eq:fixed_point}
\end{equation}
In our algorithm, we approximate the functions, $U^*_i$, by piecewise constant functions on intervals of width $\delta$ (a parameter which we set later, in \ref{eq:par_set}) and approximate solutions to the above fixed-point equation. We will prove that the approximation errors as well as errors due to computational and statistical  constraints remain small despite these choices.

In the finite sample setting, we do not have access to the precise population functions, $G_i$ and hence, approximate them with their empirical counterparts:
\begin{equation*}
    \forall x \in [0, 1]: \hat{G}_i (x) \coloneqq \frac{1}{n} \cdot \sum_{j = 1}^n \bm{1} \lbrb{W_j = i, Y_j \leq x}
\end{equation*}

Before running our main algorithm, we first compute (very) coarse approximations
$\smash{\hat{U}_i(\cdot)}$ to the functions $\smash{U_i^*(\cdot)}$; it turns out
(see \cref{alg:est_u} and \cref{lem:u_appx})
that we can compute such loose approximations efficiently as a simple
consequence of the DKW inequality \citep{dkw}.

Our algorithm then operates in stages: we divide the interval $[\nu, 1 - \theta]$
(we set $\nu$ and $\theta$ later, in \eqref{eq:par_set}) into 
a finite number of ``macro-intervals'', each of which
contains a number of micro-intervals of width $\delta$. The macro-intervals are
defined by endpoints $\nu \coloneqq x_{0} < x_{1} \dots < x_{T} \leq 1 - \theta / 2$, where $T$ and the width of each macro-interval are chosen dynamically based on observed data. Within each macro-interval $[x_{\tau - 1}, x_{\tau}]$ are $\ell^{(\tau)}$ micro-intervals of width $\delta$, defined by endpoints $x_{\tau - 1} \coloneqq x_{\tau,0} < x_{\tau,1} < \ldots < x_{\tau,\ell^{(\tau)}} \eqqcolon x_{\tau}$. Picking the endpoints $\{x_{\tau}\}_{\tau \in [T]}$ requires balancing contractivity of the fixed-point iteration with the overall number of macro-intervals which determine the accuracy of the final solution. We formally describe our algorithm for computing the endpoints in \cref{alg:macro_ints}. To ease notation in the remainder of the section, we use $\Delta^{(\tau)}_{i, m}$ to denote differentials of $\hat{G}_i$ across the grid points and extend the definition of $x_{\tau, \ell}$:
\begin{gather*}
    \forall i \in [k], \tau \in [T], l \in [\ell^{(\tau)}]: \Delta^{(\tau)}_{i, l} \coloneqq \hat{G}_i (x_{\tau, l}) - \hat{G}_i (x_{\tau, l - 1}) \\
    \forall \tau \in [T], \ell \in \mb{N} \cup \{0\}: x_{\tau, \ell} \coloneqq x_{\tau - 1} + \ell \cdot \delta.
\end{gather*}

The resulting intervals define a discretization of the fixed-point equation
\eqref{eq:fixed_point}. As we previously discussed, our algorithm operates in
stages: at stage $\tau$, we assume access to an estimate $\smash{V_i^{(\tau)}
\approx U^*_i(x_{\tau,0})}$ from the previous stage 
(recall that $\smash{x_{\tau,0} = x_{\tau-1,\ell^{(\tau-1)}}}$ by definition).
We instantiate $\smash{U^{(\tau)} \in \mathbb{R}^{k \times \ell^{(\tau)}}}$ 
where $\smash{U^{(\tau)}_{i, l}}$ is designed to approximate $U^*_i (x_{\tau,
l})$.
We estimate the elements of $\smash{U^{(\tau)}}$ via the following discretized
fixed-point iteration: 
\begin{align*}
    \phi^{(\tau)}_{i, l} (U^{(\tau)}) 
        &= \mathrm{clip} \lprp{
                \sum_{m = 1}^l \frac{1}{1 - H^{(\tau)}_{i, m}(U^{(\tau)})} \cdot \Delta^{(\tau)}_{i, m} + V^{(\tau)}_{i}
                ,\frac{1}{2\eta} \cdot \hat{U}_i (x_{\tau, l}), \frac{2}{\alpha} \cdot \hat{U}_i (x_{\tau, l})} \\
    H^{(\tau)}_{i, m} (U^{(\tau)}) &= \mathrm{clip} \lprp{
        \frac{\prod_{j \neq i} (U^{(\tau)}_{j, m})^{\frac{1}{(k - 1)}}}{(U^{(\tau)}_{i, m})^{\frac{(k - 2)}{(k - 1)}}}
        ,\alpha x_{\tau, m}, \min \lprp{1 - \alpha (1 - x_{\tau, m}), \eta x_{\tau, m}}}. \tag{FP} \label{eq:fp_fin_def}
\end{align*}
In \eqref{eq:fp_fin_def}, $\hat{U}_i(\cdot)$ are the aforementioned coarse
approximations to $U_i^*(\cdot)$, and $V_i^{(\tau)}$ is the estimate of $U^*_i
(x_{\tau - 1})$ obtained by running $L$ iterations of the fixed point
iteration $\phi^{(\tau - 1)}$ on the previous macro-interval (where $L$ is set in
\eqref{eq:par_set}). We initialize the algorithm with
\[
    V_i^{(0)} \coloneqq \hat{G}_i (\nu).
\]

We present a summary of the entire algorithm corresponding to
\cref{thm:sec_price_fin_samp} in \cref{alg:sp_estimation}, and spend the
remainder of this section proving that this algorithm yields an
$\eps$-approximation to $U^*_i(\cdot)$ for all $i \in [k]$: 
The precise choices of our parameters are provided below:
\begin{gather*}
    \theta \coloneqq \frac{\eps}{16 \eta},
    \quad \delta \coloneqq \lprp{\frac{\alpha}{8\eta} \nu}^{32k}, 
    \quad \eps_g \coloneqq \delta \cdot \lprp{\frac{\alpha \nu}{8 \eta}}^{32k}, 
    \quad L \coloneqq \log (4 / \eps_g) \\
    \nu \coloneqq \min 
    \lbrb{\lprp{\frac{\alpha}{2\eta}}^{256}, \exp \lprp{- 2^{32} k \lprp{\frac{\eta}{\alpha}} \log \lprp{\frac{2\eta}{\alpha}}}, \lprp{\frac{\theta}{2}}^{\lprp{\frac{4\eta}{\alpha}}^{16}}, \lprp{\frac{\alpha \eps}{32 \eta}}^{24}
    } \tag{PAR} \label{eq:par_set}
\end{gather*}
The parameter $\theta$ is chosen to be suitable small such that $F_i (x)$ is well approximated by $0$ when $x \leq \theta$ and $1$ when $x \geq 1 - \theta$. Hence, we only need to obtain good estimates in the range $[\theta, 1 - \theta]$. However, as shown in \cref{sssec:err_prop}, our estimation errors grow exponentially in the number of macro-intervals, $T$ but drop rapidly as our start point decreases (\cref{lem:err_prop}). Consequently, $\nu$ is picked to be significantly smaller than $\theta$ to ensure our starting error is small. The parameters $\delta$ and $\eps_g$ are then chosen small with respect to $\nu$ to ensure the approximation errors caused by the discretization of our fixed point iteration and statistical errors from approximating the functions $G_i(\cdot)$ with $\hat{G}_i$ remain small. Finally, the parameter which determines how many times we iterate our fixed point equation, $L$, is picked to control errors due to computational limits.

\begin{algorithm}[h]
    \caption{Our algorithm for estimating value distributions in second-price auctions.}
    \label{alg:sp_estimation}
    \begin{algorithmic}[1]
    \Procedure{Estimate-Second-Price}{$\{(Y_j, Z_j)\}_{j=1}^n, \delta, \alpha, \eta, \nu, \theta$}
       \State $\{x_{\tau, m}: \tau \in [T], m \in [\ell^{(\tau)}]\} \gets $
       \Call{Construct-Macro-Intervals}{$\{(Y_j, Z_j)\}_{j=1}^n, \delta, \alpha,
       \eta, \nu, \theta$}
       \For{$i \in \{1,\ldots,k\}$}
       \State $V_i^{(0)} \coloneqq \hat{G}_i (\nu)$
       \State $\hat{U}_i(\cdot) \gets $ \Call{Coarse-Approximate-$U$}{
        $\{(Y_j, Z_j)\}_{j=1}^n$, $i$} \Comment{See \cref{lem:u_appx}}
        \State $U_i(x) \gets \hat{G}_i (\nu) \cdot \bm{1} \lbrb{x \leq \nu}$ \Comment{Estimate of $U^*_i$}
       \EndFor
       \For{$\tau \in \{1,\ldots T\}$}
       \State Initialize $U^{(\tau)} \in \mathbb{R}^{k \times \ell^{(\tau)}}$
       such that $U^{(\tau)}_{i,m} \gets \hat{U}_i(x_{\tau,m})$
       \Comment{Initialize with coarse approx.}
       \State Define $\phi^{(\tau)}: \mathbb{R}^{k \times \ell^{(\tau)}} \to
       \mathbb{R}^{k \times \ell^{(\tau)}}$ as in \eqref{eq:fp_fin_def}
       \State \textbf{repeat $L$ times}: $U^{(\tau)} \gets \phi^{(\tau)}(U^{(\tau)})$
       \State $U_i(x) \gets U_i(x) + \sum_{m=1}^{\ell^{(\tau)}} U^{(\tau)}_{i,m} \cdot
       \bm{1}\lbrb{x \in [x_{\tau,m-1}, x_{\tau,m})}$
       \Comment{Extend our estimate}
       \State $V^{(\tau+1)}_{i} \gets U^{(\tau)}_{i,\ell^{(\tau)}}$
       \EndFor
       \State \Return $\{U_i\}_{i=1}^k$
    \EndProcedure
    \end{algorithmic}
\end{algorithm}

\subsection{Proof of \cref{thm:sec_price_fin_samp}}
\label{ssec:sec_price_proof}
In this subsection, we prove \cref{thm:sec_price_fin_samp}. The sole
probabilistic condition we require is the empirical concentration of the
functions $\hat{G}_i$ around $G_i$ (defined in \eqref{eq:sp_gi_def}), 
which we establish below:
\begin{lemma}
    \label{lem:epsg_lem}
    We have, for $\eps_g$ as defined in \eqref{eq:par_set},
    as long as $n \geq \log (2 k / \rho) / (2\eps_g^2)$,
    \begin{equation*}
        \mb{P} \lbrb{\forall i \in [k]: \norm{\hat{G}_i - G_i}_\infty \leq \eps_g} \geq 1 - \rho.
    \end{equation*}
\end{lemma}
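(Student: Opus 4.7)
\medskip

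\noindent\textbf{Proof plan.} The statement is a uniform concentration bound for $k$ empirical processes, so the plan is the standard DKW plus union bound argument, with the only subtlety being that each $G_i$ is a \emph{sub}-distribution (it satisfies $G_i(1)=\Pr(W=i)<1$) rather than a genuine CDF.

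\medskip

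\noindent First I would fix $i \in [k]$ and re-express $\hat G_i$ as an empirical CDF of an honest random variable. Concretely, for each sample $(Y_j,W_j)$ define
\[
  \tilde Y_j^{(i)} \;\coloneqq\; Y_j \cdot \mathbf{1}\{W_j = i\} + (+\infty)\cdot \mathbf{1}\{W_j\neq i\},
\]
so that $\tilde Y_j^{(i)}$ are i.i.d.\ on $\mathbb{R}\cup\{+\infty\}$ and the CDF of $\tilde Y_j^{(i)}$ at finite $x$ equals $G_i(x)$. The empirical CDF of these augmented variables at a finite $x$ coincides exactly with $\hat G_i(x)$. Therefore the classical Dvoretzky--Kiefer--Wolfowitz inequality (in the Massart form) applies and yields
\[
  \Pr\!\left[\,\sup_{x \in [0,1]} \bigl|\hat G_i(x) - G_i(x)\bigr| > \eps_g\,\right]
  \;\leq\; 2\exp(-2 n\, \eps_g^{2}).
\]

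\medskip

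\noindent Then I would take a union bound over the $k$ agents:
\[
  \Pr\!\left[\,\exists\, i\in[k]:\ \lVert \hat G_i - G_i\rVert_\infty > \eps_g\,\right]
  \;\leq\; 2k\exp(-2 n\, \eps_g^{2}).
\]
The hypothesis $n \geq \log(2k/\rho)/(2\eps_g^{2})$ makes the right-hand side at most $\rho$, giving the claimed bound. No step here is really hard; the only thing worth being careful about is the sub-distribution issue in the first step, but the $+\infty$ augmentation trick (or equivalently, directly invoking DKW for empirical measures of half-infinite rectangles, which is a $1$-dimensional VC class of VC dimension $1$) disposes of it cleanly. $\hfill\qed$
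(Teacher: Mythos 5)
Your proof is correct and takes essentially the same route as the paper: re-express $\hat G_i$ as an empirical CDF of auxiliary i.i.d.\ random variables, apply DKW (in Massart's form), and union bound over $i \in [k]$. The only cosmetic difference is your choice of the ``garbage'' value: you send the non-$i$ samples to $+\infty$, whereas the paper sets $A^i_j = Y_j\cdot\bm{1}\{W_j=i\} + \bm{1}\{W_j\neq i\}$, i.e.\ sends them to $1$. Your choice is marginally cleaner, since it avoids the (harmless, limiting) mismatch at $x=1$ between the empirical CDF of $A^i_j$ and $\hat G_i$; otherwise the two arguments are the same.
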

\begin{proof}
    For each $i \in [k]$ and $j \in [n]$, define the random variables, 
    $\smash{A^i_j \coloneqq Y_j \cdot \bm{1} \{Z_j = i\} + \bm{1} \{Z_j \neq i\}}$. 
    The CDFs of the $A^i_j$ variables correspond to $G_i$, while their empirical
    CDFs correspond to $\hat{G}_i$. The lemma thus follows from the DKW
    inequality \citep{dkw} and a union bound over $i$.   
\end{proof}

\noindent For the the rest of the proof, we condition on the result of
\cref{lem:epsg_lem} and proceed as follows:
\begin{itemize}
    \item In \cref{sssec:init_est_sp}, we show the functions $\hat{U}_i$ in the
    definition of $\phi^{(\tau)}$ \eqref{eq:fp_fin_def} (i.e., the coarse
    approximations of $U^*_i$) may be efficiently estimated from data.
    \item In \cref{ssec:macro_ints}, we show how to construct the
    aforementioned data-driven micro- and macro-intervals using the functions
    $\hat{U}_i$.
    \item Subsequently, in \cref{sssec:contr}, we analyze the contractivity
    properties of $\phi^{(\tau)}$ over each macro-interval, allowing application
    of \cref{thm:banach}. 
    \item In \cref{sssec:err_prop}, we show how errors incurred in early stages
    of the procedure are exponentially compounded for each new
    \emph{macro-interval} requiring careful control over the number of such
    intervals, $T$. 
    \item Finally, we bound $T$ and prove \cref{thm:sec_price_fin_samp} in
    \cref{sssec:iter_count}. 
\end{itemize}

\subsubsection{Approximate Estimation}
\label{sssec:init_est_sp}
\begin{algorithm}[h]
    \caption{Constructing macro- and micro-intervals for the fixed-point iteration.}
    \label{alg:est_u}
    \begin{algorithmic}[1]
    \Procedure{Coarse-Approximate-$U$}{$\{(Y_j, Z_j)\}_{j=1}^n, i$}
       \State \Return $x \mapsto \frac{1}{n} \sum_{j = 1}^n \frac{1}{1 - Y_j} \cdot \bm{1} \lbrb{Z_j = i, Y_j \leq x}$
    \EndProcedure
    \end{algorithmic}
\end{algorithm}

Here we describe the construction of $\hat{U}_i$ used in \eqref{eq:fp_fin_def}, ensuring the truncation range contains the true parameter values; i.e. we establish the following lemma:
\begin{lemma}
    \label{lem:u_appx}
    Let $\hat{U}_i: [0, 1 - \theta / 4] \to \mb{R}$ be monotonic functions defined as follows:
    \begin{equation*}
        \hat{U}_i (x) \coloneqq \frac{1}{n} \sum_{j = 1}^n \frac{1}{1 - Y_j} \cdot \bm{1} \lbrb{W_j = i, Y_j \leq x}.
    \end{equation*}
    Then, for all $x, y \in [0, 1 - (\theta / 4)]$ such that $\max \lbrb{U^*_i(x) - U^*_i (y), \hat{U}_i (x) - \hat{U}_i (y)} \geq \lprp{\frac{\alpha \nu}{8\eta}}^{32 k}$, we have:
    \[
        \frac{\alpha}{2} (U^*_i (x) - U^*_i (y)) \leq \hat{U}_i (x) - \hat{U}_i (y) \leq 2 \eta (U^*_i (x) - U^*_i (y)).
    \]
\end{lemma}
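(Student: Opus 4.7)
\medskip

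\noindent\textbf{Proof proposal.} The plan is to bridge the data-dependent quantity $\hat{U}_i$ to the target product $U_i^*(x) = \prod_{j \neq i} F_j(x)$ in two moves: (i) compute $\mathbb{E}[\hat{U}_i]$ exactly and sandwich it between scalar multiples of $U_i^*$ using the density bounds $\alpha \le f_i \le \eta$; and (ii) control the deviation $\hat{U}_i - \mathbb{E}[\hat{U}_i]$ uniformly in $x$ via Stieltjes integration by parts applied to the DKW event $\|\hat{G}_i - G_i\|_\infty \le \eps_g$ from \cref{lem:epsg_lem}. Assembling these two pieces yields a two-sided linear comparison between $\hat{U}_i(x) - \hat{U}_i(y)$ and $U_i^*(x) - U_i^*(y)$, at which point the hypothesis that the maximum of those two increments exceeds $(\alpha\nu/(8\eta))^{32k}$ lets us absorb the additive slack produced by concentration into a multiplicative one.

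\noindent\textbf{Key steps in order.} First I would observe that $\hat{U}_i(x) = \int_0^x \frac{1}{1-z}\, d\hat{G}_i(z)$ because $\hat{G}_i$ is the empirical CDF of $(W_j, Y_j)$, so
\[
    \mathbb{E}[\hat{U}_i(x)] = \int_0^x \frac{g_i(z)}{1-z}\, dz,
\]
where $g_i(z) = (1 - F_i(z))\sum_{l\neq i} f_l(z)\prod_{j\neq i,l} F_j(z)$ is the density of $(W=i,Y=z)$. Second, since $1 - F_i(z) = \int_z^1 f_i(t)\,dt \in [\alpha(1-z), \eta(1-z)]$, substituting into the identity $U_i^*(x) - U_i^*(y) = \int_y^x g_i(z)/(1 - F_i(z))\,dz$ (derived in Section~\ref{ssec:sec_price_approach}) gives the population sandwich
\[
    \alpha\,(U_i^*(x) - U_i^*(y)) \;\le\; \mathbb{E}[\hat{U}_i(x) - \hat{U}_i(y)] \;\le\; \eta\,(U_i^*(x) - U_i^*(y)).
\]
Third, I would write $\hat{U}_i(x) - \hat{U}_i(y) - \mathbb{E}[\hat{U}_i(x) - \hat{U}_i(y)] = \int_y^x \frac{1}{1-z}\,d(\hat{G}_i - G_i)(z)$ and integrate by parts; the boundary terms and the $\int (\hat{G}_i - G_i)/(1-z)^2\,dz$ term combine, using $|\hat{G}_i - G_i| \le \eps_g$ and $x,y \le 1 - \theta/4$, to give the clean bound
\[
    \bigl|\hat{U}_i(x) - \hat{U}_i(y) - \mathbb{E}[\hat{U}_i(x) - \hat{U}_i(y)]\bigr| \;\le\; \frac{2\eps_g}{1-x} \;\le\; \frac{8\eps_g}{\theta}.
\]
Chaining with the sandwich gives, with $\Delta^* = U_i^*(x) - U_i^*(y)$ and $\hat{\Delta} = \hat{U}_i(x) - \hat{U}_i(y)$,
\[
    \alpha\,\Delta^* - 8\eps_g/\theta \;\le\; \hat{\Delta} \;\le\; \eta\,\Delta^* + 8\eps_g/\theta.
\]

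\noindent\textbf{Closing the argument and main obstacle.} Finally, given $\max(\Delta^*, \hat{\Delta}) \ge (\alpha\nu/(8\eta))^{32k}$, I would use the parameter choices from \eqref{eq:par_set}---in particular $\eps_g = \delta\,(\alpha\nu/(8\eta))^{32k} = (\alpha\nu/(8\eta))^{64k}$ and $\theta = \eps/(16\eta)$, with $\nu \le (\alpha\eps/(32\eta))^{24}$---to verify that $8\eps_g/\theta$ is much smaller than $\frac{\alpha}{4}(\alpha\nu/(8\eta))^{32k}$. When the maximum is attained by $\Delta^*$, this immediately converts the additive slack into the desired multiplicative bounds $(\alpha/2)\Delta^* \le \hat{\Delta} \le 2\eta\,\Delta^*$; when it is attained by $\hat{\Delta}$, the upper bound $\hat{\Delta} \le \eta\Delta^* + 8\eps_g/\theta$ first propagates the largeness to $\Delta^* \ge (\alpha\nu/(8\eta))^{32k}/(2\eta)$, and the same conclusion then follows. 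The main obstacle is the concentration step, since the weight $1/(1-z)$ blows up near $z=1$: restricting the statement to $x \le 1 - \theta/4$ is exactly what keeps both boundary terms of the integration-by-parts identity and the $\int 1/(1-z)^2\,dz$ term bounded by $O(1/\theta)$, and it is what forces $\eps_g$ to be chosen so aggressively small relative to $\nu$ and $\theta$ in \eqref{eq:par_set}.
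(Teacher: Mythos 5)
Your proposal is correct and follows essentially the same route as the paper's proof: compute $\E[\hat{U}_i(x)-\hat{U}_i(y)]$ and sandwich it between $\alpha$ and $\eta$ times $U_i^*(x)-U_i^*(y)$ via the density bounds, control the deviation uniformly through integration by parts together with the DKW event of \cref{lem:epsg_lem}, and then absorb the additive slack using \eqref{eq:par_set} with a case split on which increment attains the assumed largeness. The only (immaterial) difference is that your concentration bound $2\eps_g/(1-x)\le 8\eps_g/\theta$ is slightly sharper than the paper's $32\eps_g/\theta^2$.
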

\begin{proof}
    Fix $x, y$ satisfying the required constraints.
    We have:
    \begin{align*}
        \E [\hat{U}^i(x) - \hat{U}^i(y)] 
        &= \E \left[\frac{1}{1 - Y} \cdot \bm{1} \lbrb{W = i, y < Y \leq x}\right] \\
        &= \int_{x}^y \frac{1}{(1 - z)} \cdot (1 - F_i (z)) \sum_{j \neq i} f_j (z) \prod_{k \neq j, i} F_k (z) dz,
    \end{align*}
    and hence (since $\alpha z \leq F_i(z) \leq \eta z$), we get:
    \begin{equation*}
        \alpha (U^*_i (x) - U^*_i (y)) \leq 
        \E [\hat{U}^i(x) - \hat{U}^i(y)] 
        \leq \eta (U^*_i (x) - U^*_i (y)).
    \end{equation*}
    Now, we will show that the estimate $U^i$ can be uniformly estimated for all $i, x, y$. For empirical analysis, we have by the integration by parts formula:
    \begin{align*}
        \hat{U}_i(x) - \hat{U}_i(y) &\coloneqq \frac{1}{n} \cdot \sum_{j = 1}^n \frac{1}{1 - Y_j} \cdot \bm{1} \lbrb{W_j = i, y < Y_j \leq x} 
        \\
        &= \lprp{\frac{\hat{G}_i (y)}{1 - y} - \frac{\hat{G}_i (x)}{1 - x}} - \int_{y}^x \frac{1}{(1 - z)^2} \hat{G}_i (z) dz.
    \end{align*}
    Similarly, we have for the population counterparts:
    \begin{equation*}
        \E [\hat{U}^i(x) - \hat{U}^i(y)] 
        = \lprp{\frac{G_i (y)}{1 - y} - \frac{G_i (x)}{1 - x}} - \int_{y}^x \frac{1}{(1 - z)^2} \cdot G_i (z) dz.
    \end{equation*}
    From the previous two displays (and conditioning on \cref{lem:epsg_lem}), we get:
    \begin{align*}
        \abs{\hat{U}_i (x) - \hat{U}_i (y) - \E [\hat{U}^i(x) - \hat{U}_i(y)]} 
            &\leq 2 \cdot \frac{\eps_g}{\theta/4} +  
                 \int_{y}^x \frac{\eps_g}{(\theta/4)^2} dz \\
            &\leq (\theta/2) \cdot \frac{16\eps_g}{\theta^2} +  
                 (x - y) \frac{16 \eps_g}{\theta^2} \leq \frac{32\eps_g}{\theta^2}.
    \end{align*}
    Hence, we get:
    \begin{equation*}
        \alpha (U^*_i (x) - U^*_i (y)) - \frac{32 \eps_g}{\theta^2} \leq \hat{U}_i (x) - \hat{U}_i (y) \leq \eta (U^*_i (x) - U^*_i (y)) + \frac{32 \eps_g}{\theta^2}
    \end{equation*}
    By \ref{eq:par_set}, this concludes the proof when $U^*_i (x) - U^*_i (y) \geq \lprp{\nicefrac{\alpha \nu}{8 \eta}}^{32k}$. In the alternative case (i.e when $\hat{U}_i (x) - \hat{U}_i (y) \geq \lprp{\nicefrac{\alpha \nu}{8 \eta}}^{32k}$), we again get from the above equation:
    \begin{gather*}
        \alpha (U^*_i (x) - U^*_i (y)) \leq \hat{U}_i (x) - \hat{U}_i (y) + \frac{32 \eps_g}{\theta^2} \leq 2 \cdot \lprp{\hat{U}_i (x) - \hat{U}_i (y)} \\
        \eta (U^*_i (x) - U^*_i (y)) \geq \hat{U}_i (x) - \hat{U}_i (y) - \frac{32 \eps_g}{\theta^2} \geq \frac{1}{2} \cdot \lprp{\hat{U}_i (x) - \hat{U}_i (y)}
    \end{gather*}
    completing the proof in this case as well.
\end{proof}

\subsubsection{Defining Macro- and Micro-Intervals}
\label{ssec:macro_ints}
We now describe our procedure for splitting up our estimation interval $[\nu,
1-\theta]$ into macro- and micro-intervals. Recall that the micro-intervals (of
fixed length $\delta$) define the intervals on which our approximation will be
piecewise-constant, while the macro-intervals (consisting of collections of
micro-intervals whose size is dynamically determined by data) define which
groups of micro-intervals are estimated simultaneously in each stage (i.e., by
the same fixed-point iteration \eqref{eq:fp_fin_def}).

We construct these macro- and micro-intervals according to a few different
considerations: 
\begin{itemize}
    \item The length of the micro-intervals $\delta$ must be small enough to
    bound the approximation error of our estimates (we set $\delta$ in
    \eqref{eq:par_set});
    \item We set the length of the macro-intervals (dynamically) to ensure that
    the discretized version of fixed-point iteration remains contractive
    \item At the same time, we must ensure that the total number of macro-intervals,
    $T$, does not grow too rapidly, as estimation errors incurred in earlier
    stages of the algorithm are exponentially amplified in later stages. 
\end{itemize}

We construct our micro- and macro-intervals in a recursive fashion, detailed in \cref{alg:macro_ints}. The algorithm assumes access to the coarse approximations $\hat{U}_i$ to $U^*_i$ for any $i \in [k]$---we established access to such approximations in the previous section (\cref{lem:u_appx}). In particular, starting from $\nu$ as the first endpoint, we set the macro-interval length based on a worst-case estimate of the resulting fixed-point contractivity, obtained using the coarse approximations. This quantity which plays a crucial role in our algorithm design and subsequent analysis is defined below:

\begin{equation*}
    \forall \tau \in [T], \ell \in \mathbb{N} \text{ s.t } x_{\tau, \ell} \leq 1 - \frac{\theta}{2}: \gamma^{(\tau)}_\ell \coloneqq  \max_{i \in [k]} 16 \cdot \lprp{\frac{\eta}{\alpha}}^6  \cdot \frac{1}{\hat{U}_i (x_{\tau - 1})} \sum_{m = 1}^{\ell} \frac{x_{\tau, m}}{(1 - x_{\tau, m})^2} \cdot \Delta^{(\tau)}_{i, m}.
\end{equation*}

\begin{algorithm}[H]
    \caption{Constructing macro- and micro-intervals for the fixed-point iteration.}
    \label{alg:macro_ints}
    \begin{algorithmic}[1]
    \Procedure{Construct-Macro-Intervals}{$\{(Y_j, Z_j)\}_{j=1}^n, \delta,
    \alpha, \eta, \nu, \theta$}
       \State $x_{0} \gets \nu,\ \tau \gets 1$
       \State Let $\hat{U}_i(\cdot) \gets $ \Call{Coarse-Approximate-$U$}{
           $\{(Y_j, Z_j)\}_{j=1}^n$, $i$} for all $i \in [k]$ \Comment{See \cref{lem:u_appx}}
       \While{$x_{\tau - 1} < 1 - \theta$}
        \State Number of micro-intervals in this macro-interval:
        \State $\ell^{(\tau)} \gets \max \lbrb{\ell \in \mb{N}: x_{\tau,
        \ell} \leq \min(2 \cdot x_{\tau - 1}, 1 - \theta / 2) \text{ and
        } \gamma^{(\tau)}_\ell \leq 1/4}.$
        \State $x_{\tau} \gets x_{\tau, \ell^{(\tau)}}, \tau \gets \tau + 1$
       \EndWhile
       \State $T \gets \tau - 1$
       \State \Return $\{x_{t}\}_{t \in [T]}$
    \EndProcedure
    \end{algorithmic}
\end{algorithm}

\subsubsection{Contractivity Analysis}
\label{sssec:contr}

We now establish contractivity of the mappings, $\phi^{(\tau)}$ in
\cref{lem:contr}, allowing us to apply Banach's fixed point theorem
(\cref{thm:banach}). In particular, we establish contractivity in the
infinity-norm; i.e, for some $\rho < 1$: 
\begin{equation*}
    \norm*{\phi^{(\tau)} (U) - \phi^{(\tau)} (U')}_\infty \leq \rho \norm{U - U'}_\infty \text{ where } \norm{M}_\infty = \max_{i, j} \abs{M_{i, j}}.
\end{equation*}
Denoting the Jacobian of $\phi^{(\tau)}$ by $J_{\phi^{(\tau)}} (\cdot)$, we consequently bound its $1$-norm defined below:
\begin{equation*}
    \norm*{J_{\phi^{(\tau)}} (U^{(\tau)})}_1 \coloneqq \max_{i, l} \sum_{\substack{j \in [k] \\ m \in [\ell^{(\tau)}]}} \lprp{J_{\phi^{(\tau)}} (U^{(\tau)})}_{(i, l), (j, m)} = \max_{i, l} \sum_{\substack{j \in [k] \\ m \in [\ell^{(\tau)}]}} \abs*{\frac{\partial \phi^{(\tau)}_{i, l} (U^{(\tau)})}{\partial U^{(\tau)}_{j, m}}}
\end{equation*}

Before proceeding further, we first establish that $\phi^{(\tau)}$ is a valid
fixed-point iteration, in the sense that it maps a compact set $S$ to itself:
\begin{lemma}
\label{lem:phi_monotone}
The mapping $\phi^{(\tau)}$ maps the set $S^{(\tau)}$ defined as follows
onto itself:
\begin{equation*}
    S^{(\tau)} \coloneqq \lbrb{U^{(\tau)} \in \mb{R}^{k \times \ell^{(\tau)}}: 
    \frac{1}{2\eta} \cdot \hat{U}_i (x_{\tau, l}) \leq U^{(\tau)}_{i, l} 
    \leq \frac{2}{\alpha} \cdot \hat{U}_i (x_{\tau, l}) \text{ and } U^{(\tau)}_{i, l} \leq U^{(\tau)}_{i, l + 1}}.
\end{equation*}
\end{lemma}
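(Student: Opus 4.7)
The plan is to verify the two defining conditions of $S^{(\tau)}$ separately and independently. The box constraint $\tfrac{1}{2\eta}\hat{U}_i(x_{\tau,l}) \leq \phi^{(\tau)}_{i,l}(U^{(\tau)}) \leq \tfrac{2}{\alpha}\hat{U}_i(x_{\tau,l})$ is immediate from the outer $\mathrm{clip}(\cdot,\frac{1}{2\eta}\hat U_i(x_{\tau,l}),\frac{2}{\alpha}\hat U_i(x_{\tau,l}))$ appearing in the definition of $\phi^{(\tau)}$ in \eqref{eq:fp_fin_def}. The only thing one needs to check to make this clip well-defined is that $\tfrac{1}{2\eta}\hat U_i(x_{\tau,l}) \leq \tfrac{2}{\alpha}\hat U_i(x_{\tau,l})$, which is automatic from \cref{as:second_price_bdd} (since $\alpha \leq \eta$ and $\hat U_i \geq 0$). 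So the whole content of the lemma is the monotonicity condition $\phi^{(\tau)}_{i,l}(U^{(\tau)}) \leq \phi^{(\tau)}_{i,l+1}(U^{(\tau)})$ for every $i \in [k]$ and every $l < \ell^{(\tau)}$.

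To prove the monotonicity, the first step would be to record a soft elementary fact: if $a_l \leq a_{l+1}$, $L_l \leq L_{l+1}$, $R_l \leq R_{l+1}$, and $L_l \leq R_l$, $L_{l+1} \leq R_{l+1}$, then $\mathrm{clip}(a_l,L_l,R_l) \leq \mathrm{clip}(a_{l+1},L_{l+1},R_{l+1})$. This follows from writing $\mathrm{clip}(a,L,R) = \max(L,\min(R,a))$ and using that $\min$ and $\max$ are monotone coordinatewise. Applying this to the outer clip reduces the lemma to two facts: (i) the endpoints $\tfrac{1}{2\eta}\hat{U}_i(x_{\tau,l})$ and $\tfrac{2}{\alpha}\hat{U}_i(x_{\tau,l})$ are non-decreasing in $l$, and (ii) the pre-clipped expression $S^{(\tau)}_{i,l} \coloneqq \sum_{m=1}^{l} \frac{\Delta^{(\tau)}_{i,m}}{1-H^{(\tau)}_{i,m}(U^{(\tau)})} + V^{(\tau)}_i$ is non-decreasing in $l$.

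Fact (i) is immediate: $\hat{U}_i(x)$, by its definition in \cref{lem:u_appx}, equals $\tfrac{1}{n}\sum_{j=1}^n \tfrac{1}{1-Y_j}\bm{1}\{W_j=i,\,Y_j\leq x\}$, which is a sum of non-negative indicators in $x$ and is therefore non-decreasing; combined with $x_{\tau,l}$ being strictly increasing in $l$, both endpoints increase with $l$. Fact (ii) reduces in turn to showing that every summand $\Delta^{(\tau)}_{i,m}/(1-H^{(\tau)}_{i,m}(U^{(\tau)}))$ is non-negative. Here $\Delta^{(\tau)}_{i,m} = \hat{G}_i(x_{\tau,m}) - \hat{G}_i(x_{\tau,m-1}) \geq 0$ because $\hat{G}_i$ is the monotone non-decreasing empirical measure from \cref{lem:epsg_lem}. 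For the denominator, the inner clip in the definition of $H^{(\tau)}_{i,m}$ forces $H^{(\tau)}_{i,m}(U^{(\tau)}) \leq 1 - \alpha(1-x_{\tau,m})$, so $1 - H^{(\tau)}_{i,m}(U^{(\tau)}) \geq \alpha(1-x_{\tau,m}) > 0$ uniformly over the relevant range, since \cref{alg:macro_ints} only creates grid points with $x_{\tau,m} \leq 1 - \theta/2 < 1$.

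Combining (i) and (ii) via the elementary clip monotonicity lemma gives $\phi^{(\tau)}_{i,l}(U^{(\tau)}) \leq \phi^{(\tau)}_{i,l+1}(U^{(\tau)})$ for all $i,l$, and together with the box constraint already observed, this places $\phi^{(\tau)}(U^{(\tau)}) \in S^{(\tau)}$, proving the lemma. There is no real obstacle here: everything is a consequence of the fact that both clips in \eqref{eq:fp_fin_def} were designed precisely to make the image sit inside $S^{(\tau)}$, and the monotonicity of $\hat{G}_i$ and $\hat{U}_i$. The most technical piece to watch is simply making sure that the upper cap on $H^{(\tau)}_{i,m}$ keeps $1-H^{(\tau)}_{i,m}$ bounded away from zero, and this is guaranteed by our choice to cut off macro-intervals at $1-\theta/2$ in \cref{alg:macro_ints}.
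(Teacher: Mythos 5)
Your proposal is correct and follows essentially the same route as the paper's own (one-line) proof: the box constraint is immediate from the outer clip in \eqref{eq:fp_fin_def}, and the monotonicity in $l$ follows because $\phi^{(\tau)}$ is a clipping of a monotone quantity (a sum of non-negative terms, since $\Delta^{(\tau)}_{i,m}\ge 0$ and the inner clip keeps $1-H^{(\tau)}_{i,m}\ge \alpha(1-x_{\tau,m})>0$) onto a monotonically growing range given by $\hat{U}_i(x_{\tau,l})$. You simply spell out the details (the clip-monotonicity fact and the positivity of the denominators) that the paper leaves implicit, which is fine.
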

\begin{proof}
    The first constraint follows from \ref{eq:fp_fin_def} while the second
    follows from the fact that $\phi^{\tau}$ is a clipping of a monotonic
    function onto a monotonically growing range (\cref{lem:u_appx}).  
\end{proof}

Having established the existence of a set $S^{(\tau)}$ such that $\phi^{(\tau)}:
S^{(\tau)} \to S^{(\tau)}$, we can proceed to bounding the contractivity of the
fixed-point iteration itself:
\begin{lemma}
    \label{lem:contr}
    We have, for all $U^{(\tau)} \in S^{(\tau)}$,
    \begin{equation*}
        \norm*{J_{\phi^{(\tau)}} (U^{(\tau)})}_1 \leq \max_{i \in [k]} 16 \cdot \lprp{\frac{\eta}{\alpha}}^6  \cdot \frac{1}{\hat{U}_i (x_{\tau - 1})} \sum_{m = 1}^{\ell^{(\tau)}} \frac{x_{\tau, m}}{(1 - x_{\tau, m})^2} \cdot \Delta^{(\tau)}_{i, m}.
    \end{equation*}
\end{lemma}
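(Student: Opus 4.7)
The plan is to apply the chain rule through the two layers of clipping in \eqref{eq:fp_fin_def} and then explicitly differentiate the unclipped version of $H^{(\tau)}_{i,m}$. Since $\mathrm{clip}(\cdot, a, b)$ has derivative at most $1$ in magnitude almost everywhere, bounding $\|J_{\phi^{(\tau)}}\|_1$ reduces to bounding the Jacobian of the unclipped map, with the crucial caveat that we still get to assume the clipping ranges hold, i.e., $\alpha x_{\tau,m} \leq H^{(\tau)}_{i,m} \leq \min(1-\alpha(1-x_{\tau,m}), \eta x_{\tau,m})$ and $U^{(\tau)} \in S^{(\tau)}$ via \cref{lem:phi_monotone}. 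A second simplification is that $\phi^{(\tau)}_{i,l}$ depends on $U^{(\tau)}_{j,m}$ only through $H^{(\tau)}_{i,m}$: the sum in \eqref{eq:fp_fin_def} decouples across $m$, and the first index of $H^{(\tau)}_{i,m}$ is fixed to match the $i$ on the left. So the first step gives
\[
\left|\frac{\partial \phi^{(\tau)}_{i,l}}{\partial U^{(\tau)}_{j,m}}\right| \;\leq\; \frac{\Delta^{(\tau)}_{i,m}}{(1-H^{(\tau)}_{i,m})^2} \cdot \left|\frac{\partial H^{(\tau)}_{i,m}}{\partial U^{(\tau)}_{j,m}}\right|
\]
for $m \leq l$ and zero otherwise.

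Differentiating $\log H^{(\tau)}_{i,m}$ with respect to $U^{(\tau)}_{j,m}$ gives $\frac{1}{(k-1) U^{(\tau)}_{j,m}}$ if $j \neq i$ and $-\frac{k-2}{(k-1) U^{(\tau)}_{i,m}}$ if $j=i$, so after summing over all $j$ the total coefficient is bounded by $2 H^{(\tau)}_{i,m}/\min_j U^{(\tau)}_{j,m}$ (the absolute coefficients sum to $(2k-3)/(k-1)<2$). Combining this with the clip-induced bounds $H^{(\tau)}_{i,m} \leq \eta x_{\tau,m}$ and $1 - H^{(\tau)}_{i,m} \geq \alpha(1-x_{\tau,m})$ produces a prefactor of $\frac{2\eta x_{\tau,m}}{\alpha^2 (1-x_{\tau,m})^2}$ multiplying $\Delta^{(\tau)}_{i,m}/\min_j U^{(\tau)}_{j,m}$; all that is left is to lower-bound $\min_j U^{(\tau)}_{j,m}$ in terms of $\hat{U}_i(x_{\tau-1})$.

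For this last step I would chain three inequalities. First, membership in $S^{(\tau)}$ (\cref{lem:phi_monotone}) yields $U^{(\tau)}_{j,m} \geq \frac{1}{2\eta} \hat{U}_j(x_{\tau,m})$. Second, monotonicity of $\hat{U}_j$ (which is clear from its definition as a non-decreasing empirical sum in \cref{alg:est_u}) gives $\hat{U}_j(x_{\tau,m}) \geq \hat{U}_j(x_{\tau-1})$. Third, applying \cref{lem:u_appx} with $y = 0$ (so $\hat{U}_j(0) = 0$) converts to $U^*$-scale on both sides, and the density bounds $\alpha x \leq F_j(x) \leq \eta x$ give $U^*_j(x)/U^*_i(x) = F_i(x)/F_j(x) \in [\alpha/\eta, \eta/\alpha]$ uniformly in $x$; together these yield $\hat{U}_j(x_{\tau-1}) \geq (\alpha^2/(4\eta^2)) \hat{U}_i(x_{\tau-1})$ for every $j$. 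Chaining then delivers $\min_j U^{(\tau)}_{j,m} \geq (\alpha^2/(8\eta^3)) \hat{U}_i(x_{\tau-1})$. Plugging back in, summing over $m \leq l$, and maximizing over $i, l$ gives the claimed bound with constants absorbed into the $16(\eta/\alpha)^6$ prefactor (which is loose by a factor of $(\eta/\alpha)^2$).

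The main obstacle is the cross-index comparison in the last step: \cref{lem:u_appx} is naturally an index-by-index approximation, so the uniform lower bound on $\hat{U}_j$ in terms of $\hat{U}_i$ has to be routed through the population functions $U^*$ and exploit the density-ratio bound on $F_i/F_j$. One also has to ensure that the evaluation point $x_{\tau-1} \geq \nu$ is large enough that the additive error threshold in \cref{lem:u_appx} is satisfied; this is precisely where the parameter choice \eqref{eq:par_set} for $\nu$ enters. Everything else (the $x_{\tau,m}\to x_{\tau-1}$ translation, the arithmetic of collapsing the sum over $j$) is routine given the setup.
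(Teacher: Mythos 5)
Your proposal is correct and follows essentially the same route as the paper's proof: bound the Jacobian entry of the unclipped map, use the explicit derivative of $H^{(\tau)}_{i,m}$, invoke the clipping range to bound $H^{(\tau)}_{i,m}/(1-H^{(\tau)}_{i,m})^2$ by $\eta x_{\tau,m}/(\alpha^2(1-x_{\tau,m})^2)$, lower-bound $U^{(\tau)}_{j,m}$ via $S^{(\tau)}$ and \cref{lem:u_appx}, translate across agents and from $x_{\tau,m}$ to $x_{\tau-1}$, and sum over $j,m$ using that the absolute coefficients sum to at most $2$. The only (harmless) deviation is that you handle the cross-agent comparison via the direct ratio $U^*_j/U^*_i=F_i/F_j\in[\alpha/\eta,\eta/\alpha]$ rather than the paper's \cref{lem:ui_uj_comp}, which yields an even tighter constant than the stated $16(\eta/\alpha)^6$.
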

\begin{proof}
    First, define the function $\Gamma^{(\tau)}_{i,l}$ as 
    \begin{align*}
        \Gamma_{i,l}^{(\tau)}(U^{(\tau)}) \coloneqq 
            \sum_{m = 0}^l \frac{1}{1 - H^{(\tau)}_{i, m}(U^{(\tau)})} \cdot \Delta^{(\tau)}_{i, m} + V^{(\tau)}_{i},
    \end{align*}
    so that  $\phi^{(\tau)}_{i,l}(U^{(\tau)}) = \text{clip}(
        \Gamma_{i,l}^{(\tau)},
        \frac{1}{2\eta} \cdot \hat{U}_i (x_{\tau, l}), 
        \frac{2}{\alpha} \cdot \hat{U}_i (x_{\tau, l}) )$.
    Direct calculation shows for any $i, j, l, m$,
    \begin{align}
        \label{eq:gamma_deriv_sp}
        \frac{\partial}{\partial U^{(\tau)}_{j, m}} \Gamma^{(\tau)}_{i,l}(U^{(\tau)}) 
        &=  \frac{1}{\lr{1 - H^{(\tau)}_{i, m}(U^{(\tau)})}^2} \cdot \Delta^{(\tau)}_{i, m}
            \cdot
            \frac{\partial H^{(\tau)}_{i,m}(U^{(\tau)})}{\partial U^{(\tau)}_{j, m}}.
    \end{align}
    It remains to compute the partial derivative ${\partial
    H^{(\tau)}_{i,m}(U^{(\tau)})} / {\partial U^{(\tau)}_{j, m}}$, which (from
    \eqref{eq:fp_fin_def}) corresponds to
    \begin{align}
        \label{eq:h_deriv_sp}
        \frac{\partial H^{(\tau)}_{i,m}(U^{(\tau)})}{\partial U^{(\tau)}_{j, m}} 
        &= \lr{\frac{1}{k-1} - \bm{1}\lbrb{i = j}} \frac{1}{U^{(\tau)}_{j,m}} H^{(\tau)}_{i,m}(U^{(\tau)}).
    \end{align}
    Now, combining \eqref{eq:gamma_deriv_sp} and \eqref{eq:h_deriv_sp} allows us
    to consider a single term in the Jacobian of $\phi^{(\tau)}$:
    \begin{align*}
        \abs*{\frac{\partial \phi^{(\tau)}_{i, l} (U^{(\tau)})}{\partial U^{(\tau)}_{j, m}}}
        \leq 
        \abs*{\frac{\partial \Gamma^{(\tau)}_{i, l} (U^{(\tau)})}{\partial U^{(\tau)}_{j, m}}}
        \leq 
        \abs*{\frac{1}{k-1} - \bm{1}\lbrb{i = j}}
        \cdot
        \abs*{\frac{H^{(\tau)}_{i,m}(U^{(\tau)})}{\lr{1 - H^{(\tau)}_{i, m}(U^{(\tau)})}^2} \cdot
        \Delta^{(\tau)}_{i, m}
            \cdot
            \frac{1}{U^{(\tau)}_{j,m}} 
        }.
    \end{align*}
    By construction (i.e., by the clipping in \eqref{eq:fp_fin_def}), 
    $H^{(\tau)}_{i, m}(U^{(\tau)}) \leq \min(\eta \cdot x_{\tau, m}, 1 - \alpha(1 - x_{\tau,m}))$, 
    and so:
    \begin{align*}
        \abs*{\frac{\partial \phi^{(\tau)}_{i, l} (U^{(\tau)})}{\partial U^{(\tau)}_{j, m}}}
        &\leq \frac{\eta}{\alpha^2} \cdot \abs*{\frac{1}{k - 1} - \bm{1} \lbrb{i = j}} \cdot \abs*{\frac{x_{\tau, m}}{(1 - x_{\tau, m})^2} \cdot \frac{1}{U^{(\tau)}_{j, m}} \cdot \Delta^{(\tau)}_{i, m}}.
    \end{align*}
    Now, note that $\smash{U^{(\tau)}_{j, m}}$ is either equal to
    $\smash{\hat{U}_j(x_{\tau, m})}$ (at initialization), or it is the output of
    $\phi^{(\tau)}_{j, m}$. As a result (again, by the
    clipping construction in \eqref{eq:fp_fin_def}), we must have (also using
    our approximation result from \cref{lem:u_appx})
    \[
        U^{(\tau)}_{j, m} \geq \frac{1}{2\eta} \hat{U}_j(x_{\tau, m})
                          \geq \frac{\alpha}{4\eta} U^*_j(x_{\tau, m}).
    \]
    We will also need the following Lemma relating the functions $U_i^*(\cdot)$
    across agents:
    \begin{restatable}{lemma}{uiujcomp}
        \label{lem:ui_uj_comp}
        Let $U^*_i(\cdot)$ be as defined in \eqref{eq:fixed_point}. Then, for
        all $i, j \in [k]$ and for any $x, y \in [0, 1]$ with $x > y$, we have
        that:
        \begin{equation*}
            \lprp{\frac{\alpha}{\eta}}^3 (U^*_j (x) - U^*_j (y)) \leq U^*_i (x) - U^*_i (y) \leq \lprp{\frac{\eta}{\alpha}}^3 (U^*_j (x) - U^*_j (y)).
        \end{equation*}
    \end{restatable}
    \begin{proof}
        Algebraic manipulation of \eqref{eq:fixed_point} and
        \cref{as:second_price_bdd}---full proof in \cref{ssec:misc_res_sec_price}.
    \end{proof}
    \noindent Applying both of these results to the Jacobian entry and applying
    \cref{lem:u_appx} again,
    \begin{align*}
        \abs*{\frac{\partial \phi^{(\tau)}_{i, l} (U^{(\tau)})}{\partial U^{(\tau)}_{j, m}}}
        &\leq 4 \cdot \frac{\eta}{\alpha^2} \cdot \lprp{\frac{\eta}{\alpha}}^4 \cdot \abs*{\frac{1}{k - 1} - \bm{1} \lbrb{i = j}} \cdot \abs*{\frac{x_{\tau, m}}{(1 - x_{\tau, m})^2} \cdot \frac{1}{U^*_i (x_{\tau, m})} \cdot \Delta^{(\tau)}_{i, m}} \\
        &\leq 8 \cdot \lprp{\frac{\eta}{\alpha}}^6 \cdot \abs*{\frac{1}{k - 1} - \bm{1} \lbrb{i = j}} \cdot \abs*{\frac{x_{\tau, m}}{(1 - x_{\tau, m})^2} \cdot \frac{1}{\hat{U}_i (x_{\tau - 1})} \cdot \Delta^{(\tau)}_{i, m}}.
    \end{align*}
    \noindent Finally, summing the previous equation over all $j, m$ yields:
    \begin{align*}
        \norm*{J_{\phi^{(\tau)}}}_{1} \leq \max_{i \in [k]} 16 \cdot \lprp{\frac{\eta}{\alpha}}^6  \cdot \frac{1}{\hat{U}_i (x_{\tau - 1})} \sum_{m = 1}^{\ell^{(\tau)}} \frac{x_{\tau, m}}{(1 - x_{\tau, m})^2} \cdot \Delta^{(\tau)}_{i, m}.
    \end{align*}
\end{proof}
As a direct consequence of this Lemma, we can show that $\phi^{(\tau)}(\cdot)$
is $\nicefrac{1}{4}$-contractive for our particular setting of $\delta$:
\begin{corollary}
    \label{cor:contractive}
    For any $\tau \in [T]$, let $\phi^{(\tau)}$ be as defined in
    \eqref{eq:fp_fin_def} and $\delta$ as set in \eqref{eq:par_set}, 
    \[
       \norm*{J_{\phi^{(\tau)}}}_{1} \leq \frac{1}{4}.
    \]
\end{corollary}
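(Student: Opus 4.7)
The plan is to read the bound straight off Lemma \ref{lem:contr} and then apply the stopping rule used to build the macro-intervals. First I would observe that the right-hand side of the estimate proven in Lemma \ref{lem:contr} is, term for term, exactly the quantity $\gamma^{(\tau)}_{\ell^{(\tau)}}$ defined just before Algorithm \ref{alg:macro_ints}. Next I would invoke the definition of $\ell^{(\tau)}$ in that algorithm, which selects $\ell^{(\tau)}$ as the largest $\ell \in \mathbb{N}$ for which both $x_{\tau,\ell} \leq \min(2 x_{\tau-1}, 1 - \theta/2)$ and $\gamma^{(\tau)}_\ell \leq 1/4$. By construction this yields $\gamma^{(\tau)}_{\ell^{(\tau)}} \leq 1/4$, and combining the two gives the claimed bound $\|J_{\phi^{(\tau)}}\|_1 \leq 1/4$.

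The one point that needs checking, rather than being purely tautological, is that the set over which the $\max$ defining $\ell^{(\tau)}$ is taken is nonempty, i.e.\ $\ell^{(\tau)} \geq 1$, so that each macro-interval is genuinely nondegenerate and the corollary is not vacuous. To verify this I would bound $\gamma^{(\tau)}_1$ from above directly: the defining sum contains a single term $\frac{x_{\tau,1}}{(1-x_{\tau,1})^2} \Delta^{(\tau)}_{i,1}$, and because $x_{\tau-1} \geq \nu$ and $x_{\tau,1} \leq 1 - \theta/2$, the denominator is bounded away from zero by a power of $\theta$. By the density upper bound $f_i \leq \eta$ together with the concentration of $\hat{G}_i$ from Lemma \ref{lem:epsg_lem}, the differential $\Delta^{(\tau)}_{i,1}$ is at most $O(\eta \delta + \eps_g)$. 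Finally, the prefactor $(\eta/\alpha)^6/\hat{U}_i(x_{\tau-1})$ is bounded using Lemma \ref{lem:u_appx} together with the lower bound $U^*_i(x_{\tau-1}) \geq U^*_i(\nu) \gtrsim (\alpha \nu)^{k-1}$. The polynomial smallness of $\delta$ and $\eps_g$ prescribed in \eqref{eq:par_set} then drives $\gamma^{(\tau)}_1$ well below $1/4$, so $\ell^{(\tau)} \geq 1$.

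The main obstacle is not this corollary itself, whose content is essentially bundled into the construction of the macro-intervals, but rather the interaction with the rest of the argument: the same stopping rule $\gamma^{(\tau)}_{\ell^{(\tau)}} \leq 1/4$ must simultaneously enforce $\ell^{(\tau)} \geq 1$ here (so that macro-intervals make nontrivial progress) and keep the total number of macro-intervals $T$ small enough for the error-propagation bound of \cref{sssec:err_prop} to be usable, since the parameters in \eqref{eq:par_set} are tuned to balance these competing requirements.
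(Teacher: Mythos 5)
Your proposal is correct and follows essentially the same route as the paper: Corollary~\ref{cor:contractive} is read off directly from Lemma~\ref{lem:contr} by recognizing its right-hand side as $\gamma^{(\tau)}_{\ell^{(\tau)}}$ and invoking the stopping rule $\gamma^{(\tau)}_{\ell}\leq 1/4$ in \cref{alg:macro_ints}. Your nonemptiness check ($\ell^{(\tau)}\geq 1$, via bounding $\gamma^{(\tau)}_1$ using \cref{lem:u_appx}, the density bound on $g_i$, the lower bound $U_i^*(\nu)\gtrsim(\alpha\nu)^{k-1}$, and the parameter choices in \eqref{eq:par_set}) is exactly the content of the paper's separate ``macro-intervals are non-empty'' lemma, so nothing is missing.
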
 

\subsubsection{Error Propagation}
\label{sssec:err_prop}
One consequence of our stagewise estimation procedure is that error incurred
during the early stages {\em compounds} into the later stages. 
In this subsection, we establish a bound on the rate at which this error
grows---in particular, we show that the total error incurred over the course of
our entire estimation procedure is at most exponential in the number of
\emph{macro-intervals}, $T$.

We first introduce some notation. Let $\smash{\wt{U}^{(\tau)}}$ 
be estimates resulting from running the fixed-point iteration $\phi^{(\tau)}$
\eqref{eq:fp_fin_def} $L$ times---the results of the last subsection (in
particular, \cref{cor:contractive}) imply the existence of a unique fixed-point 
$\smash{\wt{U}^{(\tau)}}_{\text{fixed}}$ for which 
\begin{equation}
    \label{eq:fp_error}
    \norm*{
        \wt{U}^{(\tau)} - \wt{U}^{(\tau)}_{\text{fixed}}
    }_{\infty} \leq \frac{1}{4^L}.
\end{equation}
We will compare the estimates $\wt{U}^{(\tau)}$ to the ``ground-truth'' 
variables $\overline{U}^{(\tau)} \in \mb{R}^{k \times \ell^{(\tau)}}$, defined 
as:
\begin{equation*}
    \forall \tau \in [T], i \in [k], l \in \ell^{(\tau)}: \bar{U}^{(\tau)}_{i, l} \coloneqq U^*_i (x_{\tau, l}).
\end{equation*}
With this notation in hand, we can bound the total error incurred from our
stage-wise estimation algorithm, which we accomplish via induction:
\begin{lemma}
    \label{lem:err_prop}
    Let $T \in \mathbb{N}$ be the number of macro-intervals in our estimation
    algorithm. Then,
    \begin{equation*}
        \max_{\tau \in [T]}\ \norm*{\wt{U}^{(\tau)} - \bar{U}^{(\tau)}}_\infty \leq 2^{T} (2\eta \nu)^{k}.
    \end{equation*}
\end{lemma}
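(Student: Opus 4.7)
I will proceed by induction on $\tau$, setting $E_\tau := \|\wt{U}^{(\tau)} - \bar{U}^{(\tau)}\|_\infty$. For the base case, I will bound the initial error in $V_i^{(0)} = \hat{G}_i(\nu)$: subtracting $G_i(\nu) = \int_0^\nu g_i(z)\,dz$ from the population identity \eqref{eq:fixed_point} yields
\begin{equation*}
U^*_i(\nu) - G_i(\nu) = \int_0^\nu \frac{F_i(z)\, g_i(z)}{1 - F_i(z)}\,dz.
\end{equation*}
Plugging in $F_j(z) \leq \eta z$, the density bounds from \cref{as:second_price_bdd}, and $1 - F_i(z) \geq 1/2$ for $z \leq \nu$ (which holds by \eqref{eq:par_set}), a direct estimate bounds this integral by $(2\eta\nu)^k$; adding the statistical error $|\hat{G}_i(\nu) - G_i(\nu)| \leq \eps_g$ from \cref{lem:epsg_lem} and using $\eps_g \ll (2\eta\nu)^k$ from \eqref{eq:par_set} establishes the base case.

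\textbf{Inductive step.} Assume $E_{\tau-1} \leq 2^{\tau-1}(2\eta\nu)^k$. I will decompose
\begin{equation*}
E_\tau \;\leq\; \underbrace{\|\wt{U}^{(\tau)} - \wt{U}^{(\tau)}_{\text{fixed}}\|_\infty}_{(\mathrm{A})} \;+\; \underbrace{\|\wt{U}^{(\tau)}_{\text{fixed}} - \bar{U}^{(\tau)}\|_\infty}_{(\mathrm{B})},
\end{equation*}
where $\wt{U}^{(\tau)}_{\text{fixed}}$ is the unique fixed point of $\phi^{(\tau)}$ on $S^{(\tau)}$ guaranteed by \cref{lem:phi_monotone}, \cref{cor:contractive}, and \cref{thm:banach}. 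Term $(\mathrm{A})$ is bounded by $1/4^L \leq \eps_g/4$ by \eqref{eq:fp_error}. For $(\mathrm{B})$, I will introduce an ``ideal'' operator $\phi^{\star,(\tau)}$ obtained from $\phi^{(\tau)}$ by substituting $U^*_i(x_{\tau-1})$ for $V_i^{(\tau)}$ and the true $G_i$ for $\hat{G}_i$; this shifts $\phi^{(\tau)}$ pointwise by at most $E_{\tau-1} + O(\eps_g)$, so by the standard contractive perturbation lemma with contraction constant $\nicefrac{1}{4}$, the two fixed points differ by at most $\tfrac{4}{3}(E_{\tau-1} + O(\eps_g))$. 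Furthermore, $\bar{U}^{(\tau)}$ is an approximate fixed point of $\phi^{\star,(\tau)}$ with residual equal to the $\delta$-Riemann-sum discrepancy of $\int_{x_{\tau-1}}^{x_{\tau,l}} g_i(z)/(1-F_i(z))\,dz$, which Lipschitzness of the integrand on $[\nu, 1-\theta]$ together with the extreme smallness of $\delta$ in \eqref{eq:par_set} renders negligible. Combining the three pieces yields $E_\tau \leq 2 E_{\tau-1} + (\text{negligible})$, and the parameter choices absorb the additive terms into the doubling factor, giving $E_\tau \leq 2^\tau (2\eta\nu)^k$.

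\textbf{Main obstacle.} Two subtleties demand care. First, one must verify that the clipping in \eqref{eq:fp_fin_def} is inactive at $\bar{U}^{(\tau)}$ and in a neighborhood containing $\wt{U}^{(\tau)}_{\text{fixed}}$, so that it does not disrupt the contractive perturbation argument; this follows from \cref{lem:u_appx}, which places $U^*_i(x_{\tau,l})$ strictly inside the clipping window $[\hat{U}_i(x_{\tau,l})/(2\eta),\, 2\hat{U}_i(x_{\tau,l})/\alpha]$. Second, and more delicate, is the uniform control of the discretization error, which requires bounding how much $H_i^{(\tau)}$ varies across each micro-interval. Here the macro-interval construction in \cref{alg:macro_ints} (which enforces $\gamma^{(\tau)}_\ell \leq 1/4$ and $x_\tau \leq 2 x_{\tau-1}$) together with the density bounds of \cref{as:second_price_bdd} ensures the integrand varies slowly enough, and the choice $\delta = (\alpha\nu/(8\eta))^{32k}$ from \eqref{eq:par_set} makes the per-stage discretization contribution vanishingly small compared to $(2\eta\nu)^k$.
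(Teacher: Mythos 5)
Your proposal follows essentially the same route as the paper's proof: the identical base-case computation bounding $\abs{\hat{G}_i(\nu)-U^*_i(\nu)}$ by $(2\eta\nu)^k$, then an induction whose step combines the $4^{-L}$ iterate-to-fixed-point error \eqref{eq:fp_error}, the $\nicefrac14$-contractivity of $\phi^{(\tau)}$ (\cref{cor:contractive}), the carried-over error through $V_i^{(\tau)}$, the $\hat{G}_i$-versus-$G_i$ statistical term, and the $\delta$-Riemann-sum discretization term, all absorbed by \eqref{eq:par_set} into the doubling recursion. The only deviation—passing through an ``ideal'' operator $\phi^{\star,(\tau)}$ and a fixed-point perturbation lemma instead of directly bounding $\norm*{\bar{U}^{(\tau)} - \wt{U}^{(\tau)}_{\text{fixed}}}_\infty \le \tfrac{4}{3}\norm*{\bar{U}^{(\tau)} - \phi^{(\tau)}(\bar{U}^{(\tau)})}_\infty$ as the paper does—is cosmetic, though note it adds the (easily discharged but unacknowledged) obligation of verifying contractivity and existence of a fixed point for the ideal operator, and that the pointwise operator shift is of order $\eps_g\,\ell^{(\tau)}/(\alpha\theta)$ rather than $O(\eps_g)$, which your parameter choices still render negligible.
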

\begin{proof}
        We will prove the lemma by induction on the number of macro intervals. Concretely, we will prove the following claim via induction on $\tau$:
        \begin{equation*}
            \norm*{\wt{U}^{(\tau)} - \bar{U}^{(\tau)}} \leq 2^{\tau} (2\eta \nu)^{k} \tag{IND} \label{eq:ind_err}.
        \end{equation*}
        We first establish the base case:
        \begin{align*}
            \abs{\hat{G}_i (\nu) - U^*_i (\nu)} &\leq \eps_g + \abs{G_i (\nu) - U^*_i (\nu)} \\
            &= \eps_g + \int_{0}^\nu (1 - (1 - F_i(z))) \cdot \sum_{j \neq i} f_j (z) \prod_{m \neq i, j} F_m (z) dz \\
            &\leq \eps_g +  \eta \nu \int_{0}^\nu \sum_{j \neq i} f_j (z) \prod_{m \neq i, j} F_m (z) dz \\ 
            &\leq \eps_g + \eta \nu U^*_i (\nu) \leq (2\eta \nu)^{k}
        \end{align*}
        
        \noindent For the induction step, suppose \eqref{eq:ind_err} is true for
        all intervals up to $\tau - 1$. From \eqref{eq:fp_error}, 
        it suffices to bound the error between $U^{(\tau)}_{\mrm{fixed}}$ and $\bar{U}^{(\tau)}$:
        \begin{align*}
            \norm{\bar{U}^{(\tau)} - U^{(\tau)}_{\mrm{fixed}}}_\infty &\leq \norm{\bar{U}^{(\tau)} - \phi^{(\tau)}(\bar{U}^{(\tau)})}_\infty + \norm{\phi^{(\tau)}(\bar{U}^{(\tau)}) - U^{(\tau)}_{\mrm{fixed}}}_\infty \\
            &= \norm{\bar{U}^{(\tau)} - \phi^{(\tau)}(\bar{U}^{(\tau)})}_\infty + \norm{\phi^{(\tau)}(\bar{U}^{(\tau)}) - \phi^{(\tau)} (U^{(\tau)}_{\mrm{fixed}})}_\infty \\
            &\leq \norm{\bar{U}^{(\tau)} - \phi^{(\tau)}(\bar{U}^{(\tau)})}_\infty + {\norm{\bar{U}^{(\tau)} - U^{(\tau)}_{\mrm{fixed}}}_\infty}/{4} \\
            \norm{\bar{U}^{(\tau)} - U^{(\tau)}_{\mrm{fixed}}}_\infty 
            &\leq \frac{4}{3}\norm{\bar{U}^{(\tau)} - \phi^{(\tau)}(\bar{U}^{(\tau)})}_\infty \numberthis \label{eq:ust_baru_contr_bnd}. \\
        \end{align*}
        For the RHS, we have for fixed $i \in [k], \ell \in [\ell^{(\tau)}]$:
        \begin{align*}
            &\abs*{\bar{U}^{(\tau)}_{i, \ell} - (\phi^{(\tau)} (\bar{U}^{(\tau)}))_{i, \ell}} 
            = \abs*{\bar{U}_{i, \ell} - \sum_{l = 1}^\ell \frac{1}{1 - H^{(\tau)}_{i, l} (\bar{U}^{(\tau)})} \Delta^{(\tau)}_{i, l} - V_i^{(\tau)}} \\
            &= \abs*{\int_{x_{\tau - 1}}^{x_{\tau, \ell}} \frac{1}{1 - F_i (z)} \cdot g_i (z) dz + U^*_i (x_{\tau - 1}) - \sum_{l = 1}^\ell \frac{1}{1 - F_i(x_{\tau, l})} \Delta^{(\tau)}_{i, l} - V_i^{(\tau)}} \\
            &\leq \abs*{\int_{x_{\tau - 1}}^{x_{\tau, \ell}} \frac{1}{1 - F_i (z)} \cdot g_i (z) dz - \sum_{l = 1}^\ell \frac{1}{1 - F_i(x_{\tau, l})} \Delta^{(\tau)}_{i, l}} + \abs*{U^*_i (x_{\tau - 1}) - V_i^{(\tau)}} \\
            &\leq \abs*{\int_{x_{\tau - 1}}^{x_{\tau, \ell}} \frac{1}{1 - F_i (z)} \cdot g_i (z) dz - \sum_{l = 1}^\ell \frac{1}{1 - F_i(x_{\tau, l})} \Delta^{(\tau)}_{i, l}} + \norm*{\wt{U}^{(\tau - 1)} - \bar{U}^{(\tau - 1)}}_\infty. 
            \numberthis \label{eq:u_phiu_bnd}
        \end{align*}
    
        \noindent The second term in the above expression is bounded by our induction hypothesis---for the first term, we have:
        \begin{align*}
            &\!\!\!\!\!\!\!\!\!\!\!\!\!\!\!\!\!\!
            \abs*{\int_{x_{\tau - 1}}^{x_{\tau, \ell}} \frac{1}{1 - F_i (z)} \cdot g_i (z) dz - \sum_{l = 1}^\ell \frac{1}{1 - F_i(x_{\tau, l})} \Delta^{(\tau)}_{i, l}} \\
            &= \abs*{\sum_{l = 1}^\ell \int_{x_{\tau, l - 1}}^{x_{\tau, l}} \frac{1}{1 - F_i (z)} \cdot g_i (z) dz - \frac{1}{1 - F_i(x_{\tau, l})} (\hat{G}_i (x_{\tau, l}) - \hat{G}_i (x_{\tau, l - 1}))} \\
            &\leq \abs*{\sum_{l = 1}^\ell \int_{x_{\tau, l - 1}}^{x_{\tau, l}} \lprp{\frac{1}{1 - F_i (z)} - \frac{1}{1 - F_i (x_{\tau, l})}} \cdot g_i (z) dz} + \\
            &\qquad \qquad \abs*{\sum_{l = 1}^\ell \frac{1}{1 - F_i(x_{\tau, l})} (\hat{G}_i (x_{\tau, l}) - \hat{G}_i (x_{\tau, l - 1}) - (G_i (x_{\tau, l}) - G_i (x_{\tau, l - 1})))} \\
            &\leq \sum_{l = 1}^\ell \int_{x_{\tau, l - 1}}^{x_{\tau, l}} \frac{(F_i (x_{\tau, l}) - F_i (z))}{(1 - F_i (z))(1 - F_i (x_{\tau, l}))} \cdot g_i (z) dz + 
            \frac{8 \eps_g \ell^{(\tau)}}{\alpha \theta} \\
            &\leq \frac{4 \eta \delta}{(\alpha \theta)^2} \int_{x_{\tau, 0}}^{x_{\tau, \ell}} g_i (z) dz + \frac{8\eps_g \ell^{(\tau)}}{\alpha \theta} \\
            &\leq \frac{4 \eta \delta}{(\alpha \theta)^2} + \frac{8\eps_g \ell^{(\tau)}}{\alpha \theta}.
            \numberthis \label{eq:u_phiu_int_bnd}
        \end{align*}
        \cref{eq:fp_error,eq:ust_baru_contr_bnd,eq:u_phiu_bnd,eq:u_phiu_int_bnd} conclude the induction step with \eqref{eq:par_set} and \eqref{eq:ind_err}.
        
\end{proof}

\subsubsection{Bounding the Number of Macro Intervals}
\label{sssec:iter_count}
We now turn to the most technical component of our proof of
\cref{thm:sec_price_fin_samp}, namely bounding $T$ (the number of
macro-intervals in our algorithm). 
In \cref{lem:T_bnd}, we employ a potential function argument and track the
growth of $U^*_i$ at the end points of the macro-intervals $\{ x_{\tau} \}_{\tau
\in [T]}$ to argue that the total number of such intervals must be bounded.

\paragraph{Preliminaries.} 
Before proving \cref{lem:T_bnd}, we establish a few
preliminary results. Our first proves establishes a threshold beyond which the functions $U^*_i$ are lower bounded by a constant.
\begin{lemma}
    \label{lem:u_ub_lem}
    We have:
    \begin{equation*}
        \forall i \in [k], x \in \lsrs{1 - \frac{1}{4 \eta k}, 1}: U^*_i (x) \geq \frac{3}{4}.
    \end{equation*}
\end{lemma}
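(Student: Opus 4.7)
The plan is to exploit the upper bound $f_j \le \eta$ (from \cref{as:second_price_bdd}) together with $F_j(1)=1$ to show that each factor $F_j(x)$ in the product $U_i^*(x) = \prod_{j\neq i} F_j(x)$ is already close to $1$ on the interval in question, and then apply Bernoulli's inequality.

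First, I would observe that since $f_j \le \eta$ on $[0,1]$, for any $x \in [0,1]$ we have
\[
1 - F_j(x) \;=\; F_j(1) - F_j(x) \;=\; \int_x^1 f_j(z)\,dz \;\le\; \eta(1-x).
\]
Consequently, whenever $x \ge 1 - \tfrac{1}{4\eta k}$, each factor satisfies $1 - F_j(x) \le \tfrac{1}{4k}$, i.e.\ $F_j(x) \ge 1 - \tfrac{1}{4k}$.

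Multiplying over the $k-1$ factors $j\neq i$ and applying Bernoulli's inequality $(1-t)^{m} \ge 1 - mt$ for $t \in [0,1]$ and integer $m \ge 1$ gives
\[
U_i^*(x) \;=\; \prod_{j\neq i} F_j(x) \;\ge\; \left(1 - \tfrac{1}{4k}\right)^{k-1} \;\ge\; 1 - \tfrac{k-1}{4k} \;\ge\; \tfrac{3}{4},
\]
which is the desired bound. There is no real obstacle here; the lemma is a direct bookkeeping consequence of the density upper bound and is included only to control the behaviour of $U_i^*$ near the right endpoint when bounding the number of macro-intervals in \cref{lem:T_bnd}.
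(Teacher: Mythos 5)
Your proof is correct and is essentially the paper's argument: the paper phrases it probabilistically, bounding $\Pr\{\exists j: X_j \geq 1 - \tfrac{1}{4\eta k}\} \leq \tfrac14$ by a union bound, which rests on exactly the same fact $1 - F_j(x) \leq \eta(1-x) \leq \tfrac{1}{4k}$ and is equivalent to your Bernoulli-type product bound. No difference of substance.
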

\begin{proof}
    Let $X_i \overset{iid}{\thicksim} F_i$. Then, by the union bound:
    \(
        \mb{P} \lbrb{\exists i: X_i \geq 1 - \frac{1}{4\eta k}} \leq \frac{1}{4},
    \)
    concluding the proof.
\end{proof}

Our second result is an upper bound on the densities of the ``densities'' $g_i (\cdot)$.
\begin{lemma}[Bounding the density of $g_i(\cdot)$]
    \label{lem:g_bnd}
    Under \cref{as:second_price_bdd}, we have that for all $i \in [k]$ and all
    $x \in [0, 1]$, the density $g_i$ satisfies:
    \begin{equation*}
        g_i(x) \leq \frac{\eta^2}{\alpha}.
    \end{equation*}
\end{lemma}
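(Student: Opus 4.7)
The plan is to start from the density formula
\[
g_i(x) \;=\; (1 - F_i(x)) \sum_{j \neq i} f_j(x) \prod_{l \neq i, j} F_l(x),
\]
which follows directly from Definition~\ref{def:second_price_sel} by computing the probability density, at $(Y, W) = (x, i)$, that exactly one bidder $j \neq i$ has bid $x$, all other bidders apart from $i$ have bid below $x$, and $i$ has bid above $x$. From \cref{as:second_price_bdd} I then get the two elementary consequences $f_l(x) \leq \eta$ and $\alpha x \leq F_l(x) \leq \eta x$ for every $l$.

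\textbf{Key pointwise bound.} The core estimate I would prove first is that, for any $m \neq i$,
\[
(1 - F_i(x)) F_m(x) \;\leq\; \frac{\eta}{\alpha}\, F_i(x)(1 - F_i(x)) \;\leq\; \frac{\eta}{4\alpha},
\]
where the first inequality uses $F_m(x) \leq \eta x \leq (\eta/\alpha) F_i(x)$ and the second uses the elementary bound $t(1-t) \leq 1/4$ with $t = F_i(x)$. This is the one place where both sides of the density assumption are used at once.

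\textbf{Per-summand estimate.} For $k = 2$ the claim follows immediately since $g_i(x) = (1-F_i(x)) f_j(x) \leq \eta \leq \eta^2/\alpha$. For $k \geq 3$, in each summand of the formula for $g_i$, I would pick some $m \in [k] \setminus \{i, j\}$, absorb one factor $F_m(x)$ into the leading $(1 - F_i(x))$ via the key bound, upper bound the remaining $F_l$'s by $1$, and apply $f_j(x) \leq \eta$. Each summand is then at most $\eta^2/(4\alpha)$.

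\textbf{The main obstacle, and how I would resolve it.} Summing over $j \neq i$ naively gives only $(k-1)\eta^2/(4\alpha)$, while the target $\eta^2/\alpha$ is $k$-independent. The hard part is shaving this spurious $k$. My plan would be to replace the term-by-term estimate with the product-rule identity
\[
\sum_{j \neq i} f_j(x) \prod_{l \neq i, j} F_l(x) \;=\; \frac{d}{dx} \prod_{j \neq i} F_j(x),
\]
so that $g_i(x) = (1-F_i(x)) \cdot (U^*_i)'(x)$, and then run a first-order argument at the maximizer $x^\star$ of $g_i$. Using $g_i'(x^\star) = 0$ and the decomposition $(d/dx)[(1-F_i) U^*_i] = g_i - f_i U^*_i$, the maximum simplifies to a bound involving $f_i(x^\star) U^*_i(x^\star) \leq \eta$; in the symmetric case $F_j \equiv F$ this is already tight, since the $(k-1)$ prefactor of $h_M$ exactly cancels the $1/(k-1)$ from maximizing $(1-F)F^{k-2}$, yielding $g_i \leq \eta \cdot ((k-2)/(k-1))^{k-2} \leq \eta$. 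Porting this cancellation to the asymmetric case, together with the pointwise bound above, should give the clean $\eta^2/\alpha$ uniformly in $x \in [0,1]$ (the boundary points $x = 0$ and $x = 1$ are handled directly since $g_i$ vanishes there).
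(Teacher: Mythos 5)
You have correctly identified the crux---the naive term-by-term bound leaves a spurious factor of $k-1$---but your proposed way around it does not work. Under \cref{as:second_price_bdd} the densities $f_j$ are only bounded measurable functions, so $g_i$ need not be continuous, let alone differentiable: a maximizer $x^\star$ with $g_i'(x^\star)=0$ need not exist, and the first-order argument you sketch is unavailable. Even granting smoothness, the sketch does not go through: differentiating $g_i=(1-F_i)\,(U_i^*)'$ gives $g_i' = -f_i\,(U_i^*)' + (1-F_i)\,(U_i^*)''$, so the stationarity condition involves $(U_i^*)''$, i.e.\ the derivatives $f_j'$, which are completely uncontrolled by the assumption; the identity $\frac{d}{dx}\left[(1-F_i)U_i^*\right] = g_i - f_i U_i^*$ that you invoke concerns a different function and does not substitute for this. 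Finally, the step ``porting the cancellation to the asymmetric case''---which is exactly where the $k$-dependence must be removed---is asserted rather than proved; the symmetric computation you mention is just the direct maximization of $(1-t)t^{k-2}$ and does not extend on its own.

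The paper removes the factor $k-1$ with a purely pointwise comparison, with no calculus on $g_i$. Fix $x$ and let $i^* = \arg\max_{\ell} F_\ell(x)$. The two-sided density bound gives $1-F_i(x)\le \eta(1-x)$ and $1-F_{i^*}(x)\ge \alpha(1-x)$, hence $1-F_i(x)\le \frac{\eta}{\alpha}\,(1-F_{i^*}(x))$; moreover $F_\ell(x)\le F_{i^*}(x)$ for every $\ell$ and $f_j(x)\le \eta$. Therefore $g_i(x)\le \frac{\eta^2}{\alpha}\,(k-1)\,(1-F_{i^*}(x))\,F_{i^*}(x)^{k-2}\le \frac{\eta^2}{\alpha}$, using the elementary inequality $(k-1)(1-t)t^{k-2}\le 1$ for $t\in[0,1]$ and $k\ge 2$. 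Note that the $\eta/\alpha$ factor enters through the tails (via $\alpha(1-x)\le 1-F_\ell(x)\le \eta(1-x)$), not through $F_m(x)\le \frac{\eta}{\alpha}F_i(x)$ as in your key bound; your bound, while correct, discards the decay of $1-F_{i^*}(x)$ at large $x$, which is precisely what is needed to beat the sum over $j$. If you want to salvage your write-up, replace the ``first-order argument'' with this max-comparison step; the rest of your setup (the density formula and $f_j\le\eta$) is fine and matches the paper's.
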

\begin{proof}
    Fix $x \in [0, 1]$ and let $i^* = \argmax_{i \in  [k]} F_i (x)$. Then, for any $i
    \in [k]$, we get:
    \begin{align*}
        g_i(x) &= (1 - F_i (x)) \sum_{j \neq i} f_j(x) \prod_{\ell \neq i, j} F_\ell (x) \leq \eta (1 - F_i (x)) \sum_{j \neq i} \prod_{\ell \neq i, j} F_\ell (x) \\
               &\leq \lprp{\frac{\eta^2}{\alpha}} \cdot (1 - F_{i^*} (x)) \sum_{j \neq i} \prod_{\ell \neq i, j} F_\ell (x) \leq \lprp{\frac{\eta^2}{\alpha}} \cdot (1 - F_{i^*} (x)) \sum_{j \neq i} \prod_{\ell \neq i, j} F_{i^*} (x) \\
               &= (k - 1) \lprp{\frac{\eta^2}{\alpha}} (1 - F_{i^*} (x)) (F_{i^*} (x))^{k - 2} \leq \lprp{\frac{\eta^2}{\alpha}} \cdot \lprp{1 - \frac{1}{k - 1}}^{k - 2} \leq \frac{\eta^2}{\alpha}
    \end{align*}
    where the first inequality is \cref{as:second_price_bdd}, and the second is $(k-1)(1-x)x^{k-2} \leq 1\ \forall\ k \geq 2$. 
\end{proof}

Our next three results prove coarse niceness properties on the macro-intervals constructed by \cref{alg:macro_ints}. The first proves that \cref{alg:macro_ints} constructs a finite number of intervals and hence, terminates in polynomial time from our settings of $\delta$ \eqref{eq:par_set}. 
\begin{lemma}[Macro-intervals are non-empty]
    Running \cref{alg:macro_ints} with our
    particular parameters \eqref{eq:par_set} yields a set of non-empty
    macro-intervals that cover the interval $[\nu, 1-\theta]$. That is,
    \begin{equation*}
        \ell^{(\tau)} > 0\ \forall\ \tau \in [T] \qquad 
        \text{ and } 
        \qquad 
        x_T \geq 1 - \theta.
    \end{equation*}
\end{lemma}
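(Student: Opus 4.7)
The plan is to verify each of the two claims in the statement separately. For the first claim ($\ell^{(\tau)} > 0$), I will show that $\ell = 1$ always satisfies both conditions in the definition of $\ell^{(\tau)}$ inside \cref{alg:macro_ints}: namely (a) the size constraint $x_{\tau,1} \leq \min(2x_{\tau-1}, 1 - \theta/2)$, and (b) the contractivity constraint $\gamma^{(\tau)}_1 \leq 1/4$. The second claim follows once we know each macro-interval makes strictly positive progress.

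To verify (a), note that the while-loop precondition $x_{\tau-1} < 1 - \theta$ combined with our choice of $\delta$ in \eqref{eq:par_set} (which makes $\delta$ far smaller than both $\nu$ and $\theta/2$) gives $\delta \leq x_{\tau-1}$ (since $x_{\tau-1} \geq \nu$) and $x_{\tau-1} + \delta \leq 1 - \theta/2$. So $x_{\tau,1} = x_{\tau-1} + \delta$ satisfies both upper bounds. To verify (b), I will bound each factor appearing in $\gamma^{(\tau)}_1$. The numerator $\Delta^{(\tau)}_{i,1}$ equals $\hat{G}_i(x_{\tau,1}) - \hat{G}_i(x_{\tau-1})$, which by \cref{lem:epsg_lem} is at most $G_i(x_{\tau,1}) - G_i(x_{\tau-1}) + 2\eps_g$, and by \cref{lem:g_bnd} the integral bound gives $G_i(x_{\tau,1}) - G_i(x_{\tau-1}) \leq (\eta^2/\alpha)\,\delta$. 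The factor $x_{\tau,1}/(1-x_{\tau,1})^2$ is at most $4/\theta^2$ since $x_{\tau,1} \leq 1 - \theta/2$. Finally, $1/\hat{U}_i(x_{\tau-1})$ is upper-bounded using \cref{lem:u_appx} with $y=0$, together with the density lower bound $F_j(x) \geq \alpha x$, which yields $U^*_i(x_{\tau-1}) \geq (\alpha\nu)^{k-1}$; the precondition of \cref{lem:u_appx} is easily met since $(\alpha\nu)^{k-1} \gg (\alpha\nu/(8\eta))^{32k}$ (as $\eta \geq \alpha$), giving $\hat{U}_i(x_{\tau-1}) \geq (\alpha/2)(\alpha\nu)^{k-1}$. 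Multiplying these three bounds together and plugging in the settings $\delta = (\alpha\nu/(8\eta))^{32k}$ and $\eps_g = \delta \cdot (\alpha\nu/(8\eta))^{32k}$ from \eqref{eq:par_set} yields $\gamma^{(\tau)}_1 \leq 1/4$, as the $(\alpha\nu/(8\eta))^{32k}$ factor dominates all the polynomial prefactors in $\eta/\alpha$ and $1/\theta$ appearing in the product.

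For the second claim $x_T \geq 1 - \theta$, since $\ell^{(\tau)} \geq 1$ we have $x_\tau \geq x_{\tau-1} + \delta$, so the sequence $\{x_\tau\}$ is strictly increasing with step at least $\delta > 0$. Hence the while loop in \cref{alg:macro_ints} terminates after at most $1/\delta$ iterations, and by the loop-exit condition, the terminal value $x_T$ satisfies $x_T \geq 1 - \theta$.

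The main obstacle is the arithmetic verification of $\gamma^{(\tau)}_1 \leq 1/4$: the definition of $\gamma^{(\tau)}_\ell$ aggregates several awkward factors ($(\eta/\alpha)^6$, $1/\hat{U}_i(x_{\tau-1})$, $x_{\tau,m}/(1-x_{\tau,m})^2$, and $\Delta^{(\tau)}_{i,m}$), and showing that their combined product is small for $\ell = 1$ requires careful bookkeeping of the $(\alpha\nu/(8\eta))^{32k}$-sized quantities in \eqref{eq:par_set}. The choice of $\delta$ and $\eps_g$ in \eqref{eq:par_set} was calibrated precisely so that this computation goes through; the only subtle step is applying \cref{lem:u_appx} to lower-bound $\hat{U}_i(x_{\tau-1})$, for which one must verify its precondition using the density-lower-bound chain $U^*_i(x_{\tau-1}) \geq (\alpha x_{\tau-1})^{k-1} \geq (\alpha\nu)^{k-1}$.
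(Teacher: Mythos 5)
Your proposal is correct and follows essentially the same route as the paper: both verify that $\ell=1$ satisfies the size constraint via $\delta < \nu \leq x_{\tau-1}$ and $\delta < \theta/2$, and bound $\gamma^{(\tau)}_1 \leq 1/4$ by combining \cref{lem:epsg_lem}, \cref{lem:g_bnd}, \cref{lem:u_appx} with the lower bound $U^*_i(x_{\tau-1}) \geq (\alpha\nu)^{k-1}$ and the parameter choices in \eqref{eq:par_set}, then conclude $x_T \geq 1-\theta$ from the loop-exit condition. Your added remark that the loop terminates in at most $1/\delta$ iterations is a harmless (and slightly more explicit) supplement to the paper's one-line justification of the second claim.
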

\begin{proof}
    First, from \cref{alg:macro_ints}, we have that $x_{\tau, 0} < 1 - \theta$
    (otherwise the algorithm terminates). 
    Furthermore, we have $x_{\tau, 0} + \delta < 2 x_{\tau, 0}$, since $x_{\tau,
    0} > \nu > \delta$  from our definition of $\delta$ and $\nu$
    \eqref{eq:par_set}. Similarly, $x_{\tau, 0} + \delta < 1 - \theta/2$ since
    $x_{\tau,0} < 1 - \theta$ and $\delta < \theta/2$ (again, by
    \eqref{eq:par_set}). 
    
    It only remains to show that $\gamma^{(\tau)}(1)$ as
    defined in Line 8 of \cref{alg:macro_ints} is less than $1/4$.
    From \cref{lem:u_appx,lem:g_bnd}, our bounds on $\eps_g, \theta, \delta$
    (\ref{eq:par_set}) and the fact that $U^*_i (\nu) \geq (\alpha \nu)^{k - 1}$:
    \begin{align*}
        \gamma^{(\tau)}_1 &\coloneqq \max_{i \in [k]} 16 \cdot \lprp{\frac{\eta}{\alpha}}^6  \cdot \frac{1}{\hat{U}_i (x_{\tau - 1})} \cdot \frac{x_{\tau - 1} + \delta}{(1 - x_{\tau - 1} - \delta)^2} \cdot \Delta^{(\tau)}_{i, m} \\
        &\leq \max_{i \in [k]} 16 \cdot \lprp{\frac{\eta}{\alpha}}^6  \cdot \frac{1}{\hat{U}_i (x_{\tau - 1})} \cdot \frac{x_{\tau - 1} + \delta}{(1 - x_{\tau - 1} - \delta)^2} \cdot (\hat{G}_i (x_{\tau - 1} + \delta) - \hat{G}_i (x_{\tau - 1})) \\
        &\leq \max_{i \in [k]} 16 \cdot \lprp{\frac{\eta}{\alpha}}^6  \cdot \frac{1}{\hat{U}_i (x_{\tau - 1})} \cdot \frac{x_{\tau - 1} + \delta}{(1 - x_{\tau - 1} - \delta)^2} \cdot \lprp{\frac{\eta^2 \delta}{\alpha} + 2 \eps_g} \\
        &\leq \max_{i \in [k]} 16 \cdot \lprp{\frac{\eta}{\alpha}}^6  \cdot \frac{2}{\alpha} \cdot \frac{1}{U^*_i (\nu)} \cdot \frac{1 + \delta}{(\theta / 2 - \delta)^2} \cdot \lprp{\frac{\eta^2 \delta}{\alpha} + 2 \eps_g} < \frac{1}{4}
    \end{align*}
    establishing the first claim. Note that the previous argument also
    establishes the second claim as if $x_{T} < 1 - \theta$, a new
    macro-interval exists.  
\end{proof}

The next lemma lower bounds the value of our contractivity proxy, $\gamma^{(\tau)}_{\ell^{(\tau)}}$, for intervals, $[x_{\tau - 1}, x_\tau]$ where $x_\tau$ is not too large either with respect to $x_{\tau - 1}$ or the upper threshold $1 - \theta / 2$. We will subsequently establish in the proof of \cref{lem:T_bnd} that most intervals satisfy this condition and such intervals guarantee substantial growth in the potential functions employed in \cref{lem:T_bnd}.
\begin{lemma}
    \label{lem:gamma_lt_lb}
    Consider the context of our interval-construction algorithm 
    (\cref{alg:macro_ints}). We have:
    \begin{equation*}
        \forall \tau \in [T] \text{ s.t } x_{\tau - 1, \ell^{(\tau)} + 1} \leq \min(2 x_{\tau 
        - 1}, 1-\theta/2): \gamma^{(\tau)}_{\ell^{(\tau)}} \geq \frac{1}{8}.
    \end{equation*}
\end{lemma}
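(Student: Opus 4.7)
The plan is to use the maximality built into \cref{alg:macro_ints}: the algorithm stops extending a macro-interval either because the geometric cap $x_{\tau,\ell} \le \min(2x_{\tau-1}, 1-\theta/2)$ would be violated, or because the contractivity proxy $\gamma^{(\tau)}_\ell$ would exceed $1/4$. The hypothesis of the lemma asserts that the first cap is \emph{not} binding at $\ell^{(\tau)}+1$, so it must be that the contractivity cap is the one that kicks in, i.e.
\[
\gamma^{(\tau)}_{\ell^{(\tau)}+1} \;>\; \tfrac14 .
\]
Hence it is enough to show that adding a single micro-interval can inflate $\gamma^{(\tau)}$ by at most $\tfrac18$, which gives $\gamma^{(\tau)}_{\ell^{(\tau)}} \ge \tfrac14 - \tfrac18 = \tfrac18$.

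To make that quantitative, I would fix an index $i^\star$ achieving the max in $\gamma^{(\tau)}_{\ell^{(\tau)}+1}$ (writing $\gamma^{(\tau)}_\ell(i)$ for the per-agent summand), and estimate the single-term increment
\[
\gamma^{(\tau)}_{\ell^{(\tau)}+1}(i^\star) - \gamma^{(\tau)}_{\ell^{(\tau)}}(i^\star)
= 16 \lprp{\tfrac{\eta}{\alpha}}^6 \frac{1}{\hat U_{i^\star}(x_{\tau-1})}
\cdot \frac{x_{\tau,\ell^{(\tau)}+1}}{(1-x_{\tau,\ell^{(\tau)}+1})^2} \cdot \Delta^{(\tau)}_{i^\star,\ell^{(\tau)}+1}.
\]
The three factors after the constant are controlled by exactly the same tools used in the proof that $\gamma^{(\tau)}_1 < \tfrac14$ in the preceding lemma: (i) the hypothesis $x_{\tau,\ell^{(\tau)}+1} \le 1-\theta/2$ gives $\tfrac{x}{(1-x)^2} \le 4/\theta^2$; (ii) the micro-interval has length $\delta$, the density $g_{i^\star}$ is bounded by $\eta^2/\alpha$ (\cref{lem:g_bnd}), and $\hat G_{i^\star}$ is $\eps_g$-close to $G_{i^\star}$ (\cref{lem:epsg_lem}), so $\Delta^{(\tau)}_{i^\star,\ell^{(\tau)}+1} \le \eta^2\delta/\alpha + 2\eps_g$; and (iii) the coarse-approximation lemma (\cref{lem:u_appx}) together with the a priori bound $U^*_{i^\star}(\nu) \ge (\alpha\nu)^{k-1}$ gives $\hat U_{i^\star}(x_{\tau-1}) \ge \tfrac{\alpha}{2}U^*_{i^\star}(\nu) \ge \tfrac{\alpha}{2}(\alpha\nu)^{k-1}$. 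Plugging these three bounds together yields exactly the estimate
\[
16 \lprp{\tfrac{\eta}{\alpha}}^6 \cdot \tfrac{2}{\alpha(\alpha\nu)^{k-1}} \cdot \tfrac{4}{\theta^2} \cdot \lprp{\tfrac{\eta^2\delta}{\alpha}+2\eps_g}
\;\le\; \tfrac18,
\]
provided $\delta$ and $\eps_g$ are sufficiently small relative to $\nu,\theta,\alpha,\eta$, which is exactly what the parameter setting in \eqref{eq:par_set} guarantees (with room to spare, since the same choices already made the analogous quantity $\gamma^{(\tau)}_1$ strictly below $1/4$). Since $\gamma^{(\tau)}_{\ell^{(\tau)}} \ge \gamma^{(\tau)}_{\ell^{(\tau)}}(i^\star)$, the lemma follows.

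The only mildly delicate point is that the maximizing index in $\gamma^{(\tau)}_\ell$ could change with $\ell$, so subtracting the $\max$s directly is not valid; the step of fixing $i^\star$ to be the maximizer at $\ell^{(\tau)}+1$ and then lower-bounding the $\max$ at $\ell^{(\tau)}$ by the $i^\star$ coordinate circumvents that issue. Everything else is a direct reuse of the calculation from the previous paragraph of the paper, so I do not expect any genuine obstacle beyond bookkeeping of constants.
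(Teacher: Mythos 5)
Your proposal is correct and follows essentially the same route as the paper's proof: use maximality of $\ell^{(\tau)}$ (given that the geometric cap is not binding) to get $\gamma^{(\tau)}_{\ell^{(\tau)}+1} > 1/4$, then bound the single extra term by $1/8$ via \cref{lem:g_bnd}, \cref{lem:epsg_lem}, \cref{lem:u_appx}, $U^*_i(\nu) \geq (\alpha\nu)^{k-1}$, and the parameter choices in \eqref{eq:par_set}. Your explicit fixing of the maximizing index $i^\star$ at $\ell^{(\tau)}+1$ is exactly the (implicit) step in the paper's chain of inequalities, and the minor constant differences (e.g.\ $4/\theta^2$ versus $(1-\theta/4)/(\theta/4)^2$) are immaterial given the slack in \eqref{eq:par_set}.
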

\begin{proof}
    We have:
    \begin{align*}
        \gamma_{\ell^{(\tau)}}^{(\tau)} &= \max_{i \in [k]} 16 \cdot \lprp{\frac{\eta}{\alpha}}^6  \cdot \frac{1}{\hat{U}_i (x_{\tau - 1})} \sum_{m = 1}^{\ell^{(\tau)}} \frac{x_{\tau, m}}{(1 - x_{\tau, m})^2} \cdot \Delta^{(\tau)}_{i, m} \\
        &= \max_{i \in [k]} 16 \cdot \lprp{\frac{\eta}{\alpha}}^6  \cdot \frac{1}{\hat{U}_i (x_{\tau - 1})} \lprp{\sum_{m = 1}^{\ell^{(\tau)} + 1} \frac{x_{\tau, m}}{(1 - x_{\tau, m})^2} \Delta^{(\tau)}_{i, m} - \frac{x_{\tau, \ell^{(\tau)} + 1}}{(1 - x_{\tau, \ell^{(\tau)} + 1})^2} \Delta^{(\tau)}_{i, \ell^{(\tau)} + 1}} \\
        &\geq \gamma^{(\tau)}_{\ell^{(\tau)} + 1} - 16 \cdot \lprp{\frac{\eta}{\alpha}}^6 \cdot \frac{1}{\hat{U}_i (x_{\tau - 1})} \cdot \frac{1 - \theta / 4}{(\theta / 4)^2} \cdot \lprp{\frac{\eta^2 \delta}{\alpha} + 2 \eps_g} \\
        &\geq \gamma^{(\tau)}_{\ell^{(\tau)} + 1} - 16 \cdot \lprp{\frac{\eta}{\alpha}}^6 \cdot \frac{2}{\alpha} \cdot \frac{1}{U^*_i (x_{\tau - 1})} \cdot \frac{1 - \theta / 4}{(\theta / 4)^2} \cdot \lprp{\frac{\eta^2 \delta}{\alpha} + 2 \eps_g} \\
        &\geq \gamma^{(\tau)}_{\ell^{(\tau)} + 1} - 16 \cdot \lprp{\frac{\eta}{\alpha}}^6 \cdot \frac{2}{\alpha} \cdot \frac{1}{U^*_i (\nu)} \cdot \frac{1 - \theta / 4}{(\theta / 4)^2} \cdot \lprp{\frac{\eta^2 \delta}{\alpha} + 2 \eps_g} \geq \frac{1}{8}
    \end{align*}
    where the first inequality follows from \cref{lem:g_bnd} and noting $\delta < \theta / 4$ and $x_T \leq 1 - \theta / 2$ and the next two from \cref{lem:u_appx}, our bounds on $\eps_g, \delta, \theta$ (\ref{eq:par_set}) and observing $U^*_i (\nu) \geq (\alpha \nu)^{k - 1}$.
\end{proof}

The final preliminary result guarantees that the conclusion of \cref{lem:u_appx} holds for all intervals $[x_{\tau - 1}, x_\tau]$ constructed by \cref{alg:macro_ints} and will be used to guarantee large potential growth for intervals satisfying the conclusion of \cref{lem:gamma_lt_lb}.

\begin{lemma}
    \label{lem:ust_diff}
    Consider the context of our interval-construction algorithm (\cref{alg:macro_ints}). We have:
    \begin{gather*}
        \forall \tau \in [T], i \in [k]: U^*_i (x_\tau) - U^*_i (x_{\tau - 1}) \geq \frac{1}{2048} \cdot \lprp{\frac{\alpha}{\eta}}^{10} \cdot \theta \cdot \lprp{\frac{\alpha \nu}{2}}^{k - 1}.
    \end{gather*}
\end{lemma}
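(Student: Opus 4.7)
The bound should follow by a case split on \emph{why} the greedy choice of $\ell^{(\tau)}$ in \cref{alg:macro_ints} stopped. Either (i) $x_{\tau,\ell^{(\tau)}+1}$ exceeded the hard upper cap $\min(2 x_{\tau-1}, 1-\theta/2)$, in which case the macro-interval is geometrically wide, or (ii) the contractivity proxy $\gamma^{(\tau)}_{\ell^{(\tau)}+1}$ would have exceeded $1/4$, which by \cref{lem:gamma_lt_lb} forces $\gamma^{(\tau)}_{\ell^{(\tau)}} \geq 1/8$. In both cases we will first produce a lower bound on $U^*_{i^*}(x_\tau)-U^*_{i^*}(x_{\tau-1})$ for some coordinate $i^*$, then upgrade the bound to \emph{every} $i \in [k]$ via \cref{lem:ui_uj_comp}, which costs only an extra factor of $(\alpha/\eta)^3$.

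\textbf{Case (i) --- geometric width.} If $x_{\tau,\ell^{(\tau)}+1} > 2 x_{\tau-1}$ then $x_\tau - x_{\tau-1} \geq x_{\tau-1} - \delta \geq \nu/2$ by our setting of $\delta$ in \eqref{eq:par_set}; if instead $x_{\tau,\ell^{(\tau)}+1} > 1-\theta/2$ then since $x_{\tau-1} \leq 1-\theta$ we get $x_\tau - x_{\tau-1} \geq \theta/4$. In either sub-case I will differentiate the product $U^*_i(x) = \prod_{j\neq i} F_j(x)$, apply the density lower bound $f_j \geq \alpha$, and use $F_j(x) \geq \alpha \nu$ for $x \geq \nu$ to obtain $(U^*_i)'(x) \geq \alpha (\alpha \nu)^{k-2}$ throughout the interval. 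Integrating yields $U^*_i(x_\tau)-U^*_i(x_{\tau-1}) \geq \alpha \cdot \min(\nu/2, \theta/4) \cdot (\alpha\nu)^{k-2}$, which is easily larger than the claimed bound once the smallness of $\nu$ from \eqref{eq:par_set} is inserted.

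\textbf{Case (ii) --- tight contractivity.} Here \cref{lem:gamma_lt_lb} gives $\gamma^{(\tau)}_{\ell^{(\tau)}} \geq 1/8$, so for the maximizing index $i^*$,
\[
    \sum_{m=1}^{\ell^{(\tau)}} \frac{x_{\tau,m}}{(1-x_{\tau,m})^2}\, \Delta^{(\tau)}_{i^*,m} \;\geq\; \frac{\hat{U}_{i^*}(x_{\tau-1})}{128\,(\eta/\alpha)^6}.
\]
Since $x_{\tau,m} \leq 1-\theta/2$, each weight is at most $4/\theta^2$, so $\sum_m \Delta^{(\tau)}_{i^*,m} \geq \theta^2 \hat{U}_{i^*}(x_{\tau-1})/\bigl(512 (\eta/\alpha)^6\bigr)$. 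Using $\sum_m \Delta^{(\tau)}_{i^*,m} = \hat{G}_{i^*}(x_\tau) - \hat{G}_{i^*}(x_{\tau-1})$ together with \cref{lem:epsg_lem} replaces $\hat{G}$ by $G$ up to $2\eps_g$, and \cref{lem:u_appx} (applicable by the crude lower bound $U^*_{i^*}(x_{\tau-1}) \geq (\alpha\nu)^{k-1}$ against the very small threshold in \eqref{eq:par_set}) gives $\hat{U}_{i^*}(x_{\tau-1}) \geq (\alpha/2)\, U^*_{i^*}(x_{\tau-1}) \geq (\alpha/2)(\alpha\nu)^{k-1}$. Finally, since $(U^*_{i^*})'(z) = g_{i^*}(z)/(1-F_{i^*}(z)) \geq g_{i^*}(z)$, we transfer this to $U^*_{i^*}(x_\tau) - U^*_{i^*}(x_{\tau-1}) \geq G_{i^*}(x_\tau) - G_{i^*}(x_{\tau-1})$.

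\textbf{Combining and closing.} Applying \cref{lem:ui_uj_comp} in both cases converts the bound for $i^*$ into one for every $i \in [k]$ at a cost of $(\alpha/\eta)^3$. What remains is the constant bookkeeping to check that the resulting quantity dominates $\tfrac{1}{2048}(\alpha/\eta)^{10}\, \theta\, (\alpha\nu/2)^{k-1}$; both cases produce an additional factor of either $\theta^2$, $\theta$, or $\nu$ multiplying $(\alpha\nu)^{k-1}$, and the absorbtion of the error terms $\eps_g$ and of powers of $1/2$ from the $(\alpha\nu/2)^{k-1}$ normalization follows from the hierarchy $\eps_g \ll \delta \ll \nu \ll \theta$ built into \eqref{eq:par_set}. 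The main delicacy will be case (ii), where the natural bound carries a $\theta^2$ rather than the target $\theta$; this is reconciled by the fact that \eqref{eq:par_set} forces $\nu$ to be exponentially smaller than $\theta$ and any power of $(\alpha/\eta)$, so the slack in $(\alpha\nu)^{k-1}$ against $(\alpha\nu/2)^{k-1}$ absorbs the missing $\theta$ factor.
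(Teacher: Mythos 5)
Your case (i) (the cap-hit cases) is fine and essentially reproduces the paper's Cases 2 and 3: a width of at least $\nu/2$ or $\theta/4$ together with $(U^*_i)'(z)\ge (k-1)\alpha(\alpha z)^{k-2}$ comfortably dominates the target, and the upgrade to all $i$ via \cref{lem:ui_uj_comp} is exactly what the paper does.

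The gap is in case (ii). After invoking $\gamma^{(\tau)}_{\ell^{(\tau)}}\ge 1/8$ you bound the weight $\frac{x_{\tau,m}}{(1-x_{\tau,m})^2}$ by its worst-case value $4/\theta^2$ and discard it, which leaves you with a bound of order $\theta^2\,(\alpha\nu)^{k-1}$ (times powers of $\alpha/\eta$), i.e.\ one full factor of $\theta$ short of the target $\theta\,(\alpha\nu/2)^{k-1}$. Your proposed rescue --- absorbing the missing $1/\theta$ into the $2^{k-1}$ slack between $(\alpha\nu)^{k-1}$ and $(\alpha\nu/2)^{k-1}$ --- requires $\theta\cdot 2^{k-1}\gtrsim 1$, i.e.\ $\epsilon\gtrsim 2^{-k}$ (recall $\theta=\epsilon/(16\eta)$). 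This is not available: the lemma must hold for every admissible $\epsilon$, \cref{thm:sec_price_fin_samp} works in the regime $\epsilon\le e^{-C^1_{\eta,\alpha}k}$, and there is no lower bound on $\epsilon$, so $\theta$ can be arbitrarily small while $2^{k-1}$ is fixed. (Making $\nu$ tiny does not help either, since $\nu$ appears identically on both sides.) The fix is to be less lossy with the weight, which is how the paper argues: write $\frac{x_{\tau,m}}{(1-x_{\tau,m})^2}\le \frac{2}{\theta}\cdot\frac{1}{1-x_{\tau,m}}$, keep the $\frac{1}{1-x_{\tau,m}}$-weighted sum of the increments $\Delta^{(\tau)}_{i,m}$, and relate it to $\hat{U}_i(x_\tau)-\hat{U}_i(x_{\tau-1})$ by \cref{lem:uh_disc_apx} and then to $U^*_i(x_\tau)-U^*_i(x_{\tau-1})$ by \cref{lem:u_appx}; equivalently, in your $G$-based language, use $1-F_i(z)\le \eta(1-z)$ so that $\sum_m\frac{\Delta^{(\tau)}_{i,m}}{1-x_{\tau,m}}\lesssim \eta\,(U^*_i(x_\tau)-U^*_i(x_{\tau-1}))$ up to the $\eps_g$ error. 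This spends only a single power of $\theta$ plus an $O(\eta)$ factor, after which your bookkeeping (including the $(\alpha/\eta)^3$ from \cref{lem:ui_uj_comp}) closes the argument.
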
 
\begin{proof}
    To start, suppose $\tau \in [T]$. We prove the lemma in three cases:
    
    \paragraph{Case 1:} $x_{\tau - 1, \ell^{(\tau)} + 1} \leq \min \lprp{2 x_{\tau - 1}, 1 - \theta / 2}$. \cref{lem:gamma_lt_lb} implies $\gamma^{(\tau)}_{\ell^{(\tau)}} \geq 1 / 8$. We have for some $i$ from the facts that $x_{\tau, m} < 1 - \theta / 2$ for all $m \in  [\ell^{(\tau)}]$ and $\hat{U}_{i} (x_{\tau - 1}) \geq \hat{G}_i (\nu) \geq (1 - \eta \nu) (\alpha \nu)^{k - 1} - \eps_g$ along with our parameter settings \eqref{eq:par_set}:
    \vspace{-5pt}
    \begin{equation*}
        64 \lprp{\frac{\eta}{\alpha}}^6  \frac{1}{(\alpha \nu)^{k - 1}} \frac{1}{\theta} \sum_{m = 1}^{\ell^{(\tau)}} \frac{1}{(1 - x_{\tau, m})} \Delta^{(\tau)}_{i, m} \geq 16 \lprp{\frac{\eta}{\alpha}}^6  \frac{1}{\hat{U}_i (x_{\tau - 1})} \sum_{m = 1}^{\ell^{(\tau)}} \frac{x_{\tau, m}}{(1 - x_{\tau, m})^2} \Delta^{(\tau)}_{i, m} \geq \frac{1}{8}
    \end{equation*}
    Re-arranging the above and applying \cref{lem:uh_disc_apx,lem:gamma_lt_lb,lem:ui_uj_comp} yields for all $j \in [k]$:
    \begin{align*}
        2\lprp{\frac{\eta}{\alpha}}^4 (U^*_j (x_{\tau}) - U^*_j (x_{\tau - 1})) &\geq 2\eta (U^*_i (x_{\tau}) - U^*_i (x_{\tau - 1})) \\
        &\geq \hat{U}_i (x_{\tau}) - \hat{U}_i (x_{\tau - 1}) \geq \frac{1}{1024} \cdot \lprp{\frac{\alpha}{\eta}}^6 \cdot \theta \cdot (\alpha \nu)^{k - 1}
    \end{align*}
    proving the claim in this case. 
    
    \paragraph{Case 2:} $x_{\tau - 1, \ell^{(\tau)} + 1} > 2 x_{\tau - 1}$. As $x_{\tau - 1} \geq \nu$, $x_{\tau} \geq \frac{3}{2} x_{\tau - 1}$ from our setting of $\delta$ and the claim follows:
    \begin{equation*}
        U^*_i (x_{\tau}) - U^*_i (x_{\tau - 1}) = \int_{x_{\tau - 1}}^{x_{\tau}} \sum_{j \neq i} f_j (z) \prod_{m \neq i, j} F_m (z) dz \geq (k - 1) \alpha \int_{0}^{\frac{\nu}{2}} (\alpha z)^{k - 2} dz \geq \frac{(\alpha \nu)^{k - 1}}{2^{k - 1}}.
    \end{equation*}
    
    \paragraph{Case 3:} $x_{\tau - 1, \ell^{(\tau)} + 1} > 1 - \theta / 2$. Note that \cref{lem:gamma_lt_lb,alg:macro_ints} imply $\tau = T$, $x_{\tau - 1} < 1 - \theta$ and $x_{\tau} + \delta \geq 1 - \theta / 2$. We obtain from our choice of $\delta$ that $x_\tau \geq 1 - (3 \theta) / 4$. Furthermore, since $x_{\tau - 1} < 1 - \theta$, we have from \cref{lem:u_ub_lem} and the constraint on $\eps$ in \cref{thm:sec_price_fin_samp}:
    \begin{align*}
        U^*_i (x_\tau) - U^*_i (x_{\tau - 1}) &\geq U^*_i \lprp{1 - \frac{3 \theta}{4}} - U^*_i (1 - \theta) = \int_{1 - \theta}^{1 - \frac{3 \theta}{4}} \sum_{j \neq i} f_j (z) \prod_{m \neq i, j} F_m (z) dz \\
        &\geq \alpha \int_{1 - \theta}^{1 - \frac{3 \theta}{4}} \sum_{j \neq i} \prod_{m \neq i, j} F_m (z) dz \geq \alpha \int_{1 - \theta}^{1 - \frac{3 \theta}{4}} (k - 1) U^*_i (1 - \theta) dz \geq \frac{\alpha\theta}{8}.
    \end{align*}
    concluding the proof of the lemma.
\end{proof}

A key observation behind our proof is that when $x_{\tau,0}$ is smaller than an 
appropriately chosen constant $c_{\eta, \alpha}$, the rate of growth of 
$U^*_i$ is {\em doubly exponential}, in that 
$$U^*_i (x_{\tau+1, 0}) \geq U^*_i (x_{\tau, 0})^{1 - \frac{1}{2 (k - 1)}}.$$
This rate of growth allows us to bound $T$
Hence, the rate of growth is \emph{doubly
exponential} before $c_{\eta, \theta}$ allowing the bound on $T \approx \log_2
(1 / \nu) + k \log \log (1 / \nu)$ while the initial error scales is at most
$\nu^{k}$. With \cref{lem:err_prop}, a simple post-processing step proves
\cref{thm:sec_price_fin_samp}. Auxiliary results needed in the proof are
included in \cref{ssec:misc_res_sec_price}. 

\begin{lemma}
    \label{lem:T_bnd}
    We have:
    \begin{equation*}
        T \leq 2(k - 1) \log \log (1 / (\alpha \nu)) + \log_2 (1 / \nu) + 2^{20} \lprp{\frac{\eta}{\alpha}}^{14} \lprp{k^2 \log (2\eta / \alpha) + k \log_2 (2 / \theta)}.
    \end{equation*}
\end{lemma}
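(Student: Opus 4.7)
The plan is to partition $[T]$ according to which case of \cref{lem:ust_diff} applies and further split the Case 1 (contractivity-binding) intervals by the magnitude of $x_{\tau-1}$. Case 3 (terminal) contributes at most $1$, and each Case 2 interval satisfies $x_\tau \geq (3/2)\, x_{\tau-1}$, so $x_0 = \nu$ and $x_T \leq 1$ give at most $\log_2(1/\nu)$ such intervals, accounting for the second term in the stated bound.

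For Case 1 intervals where $x_{\tau-1}$ lies in the intermediate range (small enough that $\gamma^{(\tau)}$ is the binding constraint but bounded below by roughly $(c_{\eta,\alpha}')^k$), I plan to extract the doubly-exponential estimate $U^*_i(x_\tau) \geq U^*_i(x_{\tau-1})^{1-1/(2(k-1))}$. Starting from $\gamma^{(\tau)}_{\ell^{(\tau)}} \geq 1/8$ (\cref{lem:gamma_lt_lb}), one has $\sum_m \frac{x_{\tau,m}}{(1-x_{\tau,m})^2}\, \Delta^{(\tau)}_{i,m} \geq \hat{U}_i(x_{\tau-1})/(128(\eta/\alpha)^6)$; in this regime $(1-x_{\tau,m})^{-2} \approx 1$, while \cref{lem:g_bnd} and the density bounds give $g_i(z) \approx (k-1)\eta^{k-1} z^{k-2}$ and $\hat{U}_i(x_{\tau-1}) \approx (\alpha x_{\tau-1})^{k-1}$ via \cref{lem:u_appx}, so the condition reduces to $x_\tau^k - x_{\tau-1}^k \geq c_{\eta,\alpha,k}\, x_{\tau-1}$, i.e.\ $x_\tau \geq c'\, x_{\tau-1}^{1-1/k}$. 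Feeding this back into $U^*_i(x_\tau) \geq (\alpha x_\tau)^{k-1}$ and using $x_{\tau-1} \geq (U^*_i(x_{\tau-1})/\eta^{k-1})^{1/(k-1)}$ (plus \cref{lem:ui_uj_comp} to move across indices) yields the claimed inequality. Setting $\phi_\tau \coloneqq \log\log(1/U^*_i(x_\tau))$, each such interval reduces $\phi_\tau$ by $\Omega(1/(k-1))$, so starting from $\phi_0 = O(\log((k-1)\log(1/(\alpha\nu))))$ the phase lasts at most $2(k-1)\log\log(1/(\alpha\nu))$ iterations, giving the first term.

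For Case 1 intervals where $x_{\tau-1}$ is bounded below by a constant $c_{\eta,\alpha}$, I switch to an additive potential. The contractivity-binding condition now extracts multiplicative growth $\hat{U}_{i^\star}(x_\tau) \geq (1 + \theta^2/(512(\eta/\alpha)^6))\, \hat{U}_{i^\star}(x_{\tau-1})$ for the $i^\star$ attaining the max in $\gamma^{(\tau)}$; via \cref{lem:u_appx} and \cref{lem:ui_uj_comp}, this propagates to every $i$ up to constants in $\eta/\alpha$, so the potential $\Phi_\tau \coloneqq \sum_i \log U^*_i(x_\tau)$ is monotone non-decreasing and increases by $\Omega(\theta^2/(\eta/\alpha)^6)$ per step. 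The total range of $\Phi_\tau$ is bounded above by $k \log(4/3)$ from \cref{lem:u_ub_lem} (once $x \geq 1 - 1/(4\eta k)$), and below at $x = c_{\eta,\alpha}$ by $-O(k^2 \log(\eta/\alpha))$ via the density bounds. The additional $k \log(2/\theta)$ budget comes from intervals near the upper boundary where $1-x \approx \theta$, so the weight $1/(1-x)^2$ in $\gamma^{(\tau)}$ dominates and per-step growth of $\Phi$ weakens; dividing the total budget by the per-step increase gives the final $2^{20} (\eta/\alpha)^{14}(k^2 \log(2\eta/\alpha) + k \log_2(2/\theta))$ term.

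The main obstacle will be extracting the doubly-exponential estimate with the precise exponent $1 - 1/(2(k-1))$: this requires matching the Riemann-sum leading order in the contractivity condition against the polynomial scaling $U^*_i(x) \approx x^{k-1}$ while absorbing the empirical noise from $\hat{G}_i$ (controlled by the choice of $\eps_g$ in \eqref{eq:par_set}) and the discretization error of width $\delta$ into lower-order terms; in particular, the $1/(2(k-1))$ (rather than $1/(k-1)$) exponent comes from losing a factor $1/2$ when converting between $x$-growth and $U^*_i$-growth through the gap between $(\alpha x)^{k-1}$ and $(\eta x)^{k-1}$. A secondary challenge is verifying that the intermediate doubly-exponential phase and the large-$x$ additive phase meet cleanly at the threshold $c_{\eta,\alpha}$, which I handle by choosing $c_{\eta,\alpha}$ so that $U^*_i(c_{\eta,\alpha})$ is at least a universal constant, after which the additive-phase counting becomes independent of $\nu$. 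Summing the bounds from the three cases yields the claimed inequality.
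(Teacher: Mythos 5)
Your outline matches the paper's strategy in two of its three components: the count of ``doubling'' macro-intervals (those not terminated by the contractivity proxy), which the paper bounds by $\log_2(4/\nu)$ in Claim~\ref{clm:few_bad_ints} and you bound the same way (your factor $3/2$ should really be the near-$2$ factor $2-\delta/\nu$, otherwise you only get $\log_{3/2}(1/\nu)$, which exceeds the stated $\log_2(1/\nu)$ budget --- a minor repair), and the small-$x$ phase, where your extraction of $U^*_j(x_\tau)\geq U^*_j(x_{\tau-1})^{1-1/(2(k-1))}$ from $\gamma^{(\tau)}_{\ell^{(\tau)}}\geq 1/8$, with the slack in the exponent absorbing the $(\eta/\alpha)^{O(1)}$ constants because $x\leq(\alpha/2\eta)^{32}$, together with the $\log\log$ potential, is exactly the paper's Case~1 and yields the $2(k-1)\log\log(1/(\alpha\nu))$ term.

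The gap is in your treatment of the region above the constant threshold. First, your claimed per-step growth $\hat U_{i^\star}(x_\tau)\geq(1+\theta^2/(512(\eta/\alpha)^6))\,\hat U_{i^\star}(x_{\tau-1})$ has the wrong form: the contractivity condition forces $U^*_i(x_\tau)-U^*_i(x_{\tau-1})\gtrsim(1-x_\tau)\,(\alpha/\eta)^{O(1)}\,U^*_i(x_{\tau-1})$, so on the paper's middle region $I_2=[(\alpha/2\eta)^{32},\,1-1/(4\eta k)]$ the relative growth is $\Omega\bigl(\tfrac{1}{k}(\alpha/\eta)^{14}\bigr)$, \emph{independent of $\theta$}; taking your $\theta^2$ rate at face value and dividing the $O(k^2\log(\eta/\alpha))$ potential range by it would give a count of order $k^2\log(\eta/\alpha)/\theta^2$, which is far larger than the stated $2^{20}(\eta/\alpha)^{14}(k^2\log(2\eta/\alpha)+k\log_2(2/\theta))$ since $\theta=\epsilon/(16\eta)$ can be arbitrarily small. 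Second, and more fundamentally, the mechanism producing the $k\log_2(2/\theta)$ term is missing: near the right endpoint the per-step \emph{multiplicative} growth of $U^*$ does weaken (it is only proportional to $1-x$), and ``dividing the total budget by the per-step increase'' with a weakened increase makes the count blow up, not shrink to a logarithm. The paper's resolution is a dyadic decomposition of $[1-1/(4\eta k),\,1-\theta/2]$ into bands $I_{3,p}=[1-2^{p}\theta,\,1-2^{p-1}\theta]$: within each band the contractivity condition forces additive growth $U^*_i(x_\tau)-U^*_i(x_{\tau-1})\gtrsim 2^{p-1}\theta\,(\alpha/\eta)^{10}$ per macro-interval, while the total additive budget available in the band is at most $\eta k\,2^{p-1}\theta$ --- both scale linearly in $2^{p}\theta$, so each band contains only $O\bigl(k(\eta/\alpha)^{11}\bigr)$ macro-intervals and there are only $\log_2(2/\theta)+1$ bands. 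Without this localization (matching a band-local growth rate to a band-local budget), your potential argument cannot recover the $k\log_2(2/\theta)$ dependence, so as written the proposal does not establish the lemma.
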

\begin{proof}
    To bound $T$, we break $[\nu, 1 - \theta]$ into three segments and handle each separately:
\begin{equation*}
    I_1: \lsrs{\nu, \lprp{\frac{\alpha}{2\eta}}^{32}}, 
    I_2: \lsrs{\lprp{\frac{\alpha}{2\eta}}^{32}, 1 - \frac{1}{4\eta k}}, 
    \text{ and } 
    I_3: \lsrs{1 - \frac{1}{4\eta k}, 1 - \frac{\theta}{2}}. 
\end{equation*}

Before we proceed, we prove a simple claim allowing us to restrict ourselves to intervals, $[x_{\tau - 1}, x_{\tau}]$ such that $\gamma^{\tau}_{\ell^{(\tau)}}$ is large ($> 1 / 8$).
\begin{claim}
    \label{clm:few_bad_ints}
    We have:
    \begin{equation*}
        \abs{S} \leq \log_2 (4 / \nu) \text{ where } S \coloneqq \lbrb{\tau: \gamma^\tau_{\ell^{(\tau)}} \geq \frac{1}{8}}.
    \end{equation*}
\end{claim}
\begin{proof}
    Consider $\tau \in S$. Then, we have from \cref{lem:gamma_lt_lb} that either $x_{\tau} \geq 1 - \theta$ ($\tau = T$ as \cref{alg:macro_ints} now terminates) or $x_{\tau, \ell^{(\tau)} + 1} \geq 2 x_{\tau - 1}$. In the second case, we get:
    \begin{equation*}
        x_{\tau, \ell^{(\tau)} + 1} = x_{\tau} + \delta \geq 2 x_{\tau - 1} \implies x_{\tau} \geq \lprp{2 - \frac{\delta}{x_{\tau - 1}}} \cdot x_{\tau - 1} \geq x_{\tau} \geq \lprp{2 - \frac{\delta}{\nu}} \cdot x_{\tau - 1}.
    \end{equation*}
    Letting $T_S = \abs{S \setminus \{T\}}$, we get by iterating the above inequality and noting that $x_\tau$ are monotonic:
    \begin{equation*}
        1 \geq \lprp{2 - \frac{\delta}{\nu}}^{T_S} \cdot \nu
    \end{equation*}
    and hence, \ref{eq:par_set} yields:
    \begin{equation*}
        T_S \leq \frac{1}{\log_2 (2 - \delta / \nu)} \cdot \log_2 (1 / \nu) \leq \frac{1}{1 - \delta / \nu} \cdot \log_2 (1 / \nu) \leq \lprp{1 + 2 \cdot \frac{\delta}{\nu}} \cdot \log_2 (1 / \nu) \leq \log_2 (2 / \nu).
    \end{equation*}
    Noting that $\abs{S} \leq T_S + 1$ concludes the proof of the claim.
\end{proof}

\paragraph{Case 1:} $I_1$. We restrict ourselves to the intervals $[x_{\tau - 1}, x_\tau] \subset I_1$ where $\gamma^{(\tau)}_{\ell^{(\tau)}} \geq 1/8$. From the definition of $\gamma^{(\tau)}_{\ell^{(\tau)}}$, there exists some $i \in [k]$ such that: 
\begin{gather*}
    16 \cdot \lprp{\frac{\eta}{\alpha}}^6  \cdot \frac{1}{\hat{U}_i (x_{\tau - 1})} \sum_{m = 1}^{\ell^{(\tau)}} \frac{x_{\tau, m}}{(1 - x_{\tau, m})^2} \cdot \Delta^{(\tau)}_{i, m} \geq \frac{1}{8} \\
    U_i^* (x_{\tau - 1}) \geq (\alpha x_{\tau - 1})^{k - 1} \implies x_{\tau - 1} \leq \frac{U_i^*(x_{\tau - 1})^{1 / (k - 1)}}{\alpha}
\end{gather*}
and as a result, we obtain from \cref{lem:uh_disc_apx,lem:u_appx,lem:ust_diff}:
\begin{align*}
\frac{1}{8} &\leq 16 \cdot \lprp{\frac{\eta}{\alpha}}^6  \cdot \frac{1}{\hat{U}_i (x_{\tau - 1})} \cdot \frac{2 x_{\tau - 1}}{(1 - x_\tau)} \sum_{m = 1}^{\ell^{(\tau)}} \frac{1}{(1 - x_{\tau, m})} \cdot \Delta^{(\tau)}_{i, m} \\
&\leq 32 \cdot \lprp{\frac{\eta}{\alpha}}^6  \cdot \frac{1}{\hat{U}_i (x_{\tau - 1})} \cdot \frac{1}{\alpha} \cdot \frac{U^*_i (x_{\tau - 1})^{1 / (k - 1)}}{(1 - x_\tau)} \sum_{m = 1}^{\ell^{(\tau)}} \frac{1}{(1 - x_{\tau, m})} \cdot \Delta^{(\tau)}_{i, m} \\
&\leq 512 \cdot \lprp{\frac{\eta}{\alpha}}^8 \cdot \frac{1}{U^*_i (x_{\tau - 1})^{(k - 2) / (k - 1)}} \cdot \lprp{U^*_i (x_\tau) - U^*_i (x_{\tau - 1})}.
\end{align*}

Re-arranging the above, two applications of \cref{lem:ui_uj_comp} with the fact $U^*_j (x) \leq (\eta x)^{k - 1}$:
\begin{equation*}
    j \in [k]: U^*_j (x_\tau) \geq \lprp{\frac{\alpha}{2\eta}}^{14} \cdot U^*_j (x_{\tau - 1})^{(k - 2) / (k - 1)} \geq U_j^* (x_{\tau - 1})^{1 - \frac{1}{2(k - 1)}}.
\end{equation*}
Defining $S_1 \coloneqq \{\tau: [x_{\tau - 1}, x_\tau] \subset I_1 \text{ and } \gamma^{(\tau)}_{\ell^{(\tau)}} \geq 1/8\}$, $T_1 \coloneqq \abs{S_1}$ and $\tau_1^* \coloneqq \max S_1$, we get by a recursive application of the above inequality for all $j \in [k]$:
\begin{equation*}
e^{-(k - 1)} \geq U^*_j (x_{\tau_1^*}) \geq (U^*_j (x_0))^{{\lprp{1 - \frac{1}{2(k - 1)}}}^{T_1}} \geq (\alpha \nu)^{(k - 1) {\lprp{1 - \frac{1}{2(k - 1)}}}^{T_1}}.
\end{equation*}
Iteratively taking logs,
\begin{equation*}
    \exp \lbrb{- \frac{T_1}{2 (k - 1)}} \log (\alpha \nu) \geq 1 \implies T_1 \leq 2 (k - 1) \log \log (1 / (\alpha \nu)).
\end{equation*}

\paragraph{Case 2:} $I_2$. As before, we restrict to intervals $[x_{\tau - 1}, x_\tau] \subset I_2$ with $\gamma^{(\tau)}_{\ell^{(\tau)}} \geq 1 / 8$. Noting that $1 - x_{\tau} \geq 1 / (4\eta k)$, we have from \cref{lem:uh_disc_apx,lem:u_appx,lem:ust_diff} for some $i \in [k]$:
\begin{align*}
\frac{1}{8} &\leq 16 \cdot \lprp{\frac{\eta}{\alpha}}^6  \cdot \frac{1}{\hat{U}_i (x_{\tau - 1})} \cdot \frac{2 x_{\tau - 1}}{(1 - x_\tau)} \sum_{m = 1}^{\ell^{(\tau)}} \frac{1}{(1 - x_{\tau, m})} \cdot \Delta^{(\tau)}_{i, m} \\
&\leq 16 \cdot \lprp{\frac{\eta}{\alpha}}^6  \cdot \frac{1}{\hat{U}_i (x_{\tau - 1})} \cdot 8 \eta k \cdot \sum_{m = 1}^{\ell^{(\tau)}} \frac{1}{(1 - x_{\tau, m})} \cdot \Delta^{(\tau)}_{i, m} \\
&\leq 1024 k \cdot \lprp{\frac{\eta}{\alpha}}^8 \cdot \frac{1}{U^*_i (x_{\tau - 1})} \cdot \lprp{U^*_i (x_\tau) - U^*_i (x_{\tau - 1})}.
\end{align*}
Re-arranging the above inequality and two applications of \cref{lem:ui_uj_comp} yield:
\begin{equation*}
\forall j \in [k]: U^*_j (x_\tau) - U^*_j (x_{\tau - 1}) \geq \frac{1}{8192 k} \cdot \lprp{\frac{\alpha}{\eta}}^{11} \cdot U^*_i (x_{\tau - 1}) \geq \frac{1}{8192 k} \cdot \lprp{\frac{\alpha}{\eta}}^{14} \cdot U^*_j (x_{\tau - 1}).
\end{equation*}
Define $S_2 \coloneqq \{\tau: [x_{\tau - 1}, x_\tau] \subset I_2 \text{ and } \gamma_{\ell^{(\tau)}}^{(\tau)} \geq 1 / 8\}, T_2 \coloneqq \abs{S_2}$ and $\tau^*_2 \coloneqq \max S_2$ as before. As $x_{\tau - 1} \geq (\eta / 2\alpha)^{32}$ for all $\tau \in S_2$, recursively applying the above inequality yields:
\begin{equation*}
1 \geq U^*_j (x_{\tau^*_2}) \geq \lprp{1 + \frac{1}{8192k} \cdot \lprp{\frac{\alpha}{\eta}}^{14}}^{T_2} U^*_j \lprp{\lprp{\frac{\alpha}{2\eta}}^{32}}.
\end{equation*}
Again, noting $U^*_j(x) \geq (\alpha x)^{k - 1}$ and taking logs, the current case follows:
\begin{equation*}
1 \geq \exp \lbrb{\frac{T_2}{16384k} \cdot \lprp{\frac{\alpha}{\eta}}^{14}} \cdot \lprp{\frac{\alpha}{2 \eta}}^{64 (k - 1)} \implies T_2 \leq 2^{20} k^2 \lprp{\frac{\eta}{\alpha}}^{14} \log (2\eta / \alpha).
\end{equation*}

\paragraph{Case 3:} $I_3$. We start by subdividing $I_3$ into $r$ subintervals $\lbrb{I_{3, p} \coloneqq \lsrs{1 - 2^{p} \theta, 1 - 2^{p - 1} \theta}}_{p = 0}^r$. Note $r \leq \log_2 (2 / \theta) + 1$. We now bound the number of intervals in each of these sub-intervals and restrict ourselves to intervals $[x_{\tau - 1}, x_{\tau}] \subset I_{3, p}$ for some $p$ with $\tau < T$. Note this excludes at most $2(r + 1)$ intervals. For one such interval $[x_{\tau - 1}, x_\tau] \in I_{3, p}$, note that $\gamma_{\ell^{(\tau)}}^{(\tau)} \geq 1 / 8$ from \cref{lem:gamma_lt_lb} and the definition of $I_3$. Similarly, we have for some $i$ using \cref{lem:ust_diff,lem:u_ub_lem,lem:u_appx,lem:uh_disc_apx}:
\begin{align*}
\frac{1}{8} &\leq 16 \cdot \lprp{\frac{\eta}{\alpha}}^6  \cdot \frac{1}{\hat{U}_i (x_{\tau - 1})} \sum_{m = 1}^{\ell^{(\tau)}} \frac{x_{\tau, m}}{(1 - x_{\tau, m})^2} \cdot \Delta^{(\tau)}_{i, m} \\
&\leq 16 \cdot \lprp{\frac{\eta}{\alpha}}^6  \cdot \frac{1}{\hat{U}_i (x_{\tau - 1})} \sum_{m = 1}^{\ell^{(\tau)}} \frac{1}{(1 - x_{\tau, m})(2^{p - 1} \theta)} \cdot \Delta^{(\tau)}_{i, m} \\
&\leq \frac{256}{2^{p - 1} \theta} \cdot \lprp{\frac{\eta}{\alpha}}^7 \cdot \lprp{U^*_i (x_\tau) - U^*_i (x_{\tau - 1})}
\end{align*}
which yields:
\begin{equation*}
U^*_i (x_\tau) - U^*_i (x_{\tau - 1}) \geq \frac{2^{p - 1} \theta}{2048} \cdot \lprp{\frac{\alpha}{\eta}}^7 \implies \forall j \in [k]: U^*_j (x_\tau) - U^*_j (x_{\tau - 1}) \geq \frac{2^{p - 1} \theta}{2048} \cdot \lprp{\frac{\alpha}{\eta}}^{10}.
\end{equation*}
Again, defining $S_{3, p} = \{\tau < T: [x_{\tau - 1}, x_\tau] \subset I_{3, p} \}, T_{3, p} \coloneqq \abs{S_{3, p}}$, we have from the above:
\begin{align*}
T_{3, p} &\leq \frac{2048}{2^{p - 1} \theta} \cdot \lprp{\frac{\eta}{\alpha}}^{10} \lprp{U^*_i \lprp{1 - 2^{p - 1} \theta} - U^*_i \lprp{1 - 2^{p} \theta}} \\
&= \frac{2048}{2^{p - 1} \theta} \cdot \lprp{\frac{\eta}{\alpha}}^{10} \cdot \int_{1 - 2^p \theta}^{1 - 2^{p - 1} \theta} \sum_{j \neq i} f_j (x) \prod_{q \neq i,j} F_q(x) dx \\
&\leq \frac{2048}{2^{p - 1} \theta} \cdot \lprp{\frac{\eta}{\alpha}}^{10} \cdot (\eta k \cdot 2^{p - 1} \theta) \leq 2048 k \cdot \lprp{\frac{\eta}{\alpha}}^{11}.
\end{align*}
Summing up over $p$, we get that:
\begin{equation*}
    T_3 \coloneqq \sum_{p = 0}^r T_{r, p} \leq 4096 k \cdot \lprp{\frac{\eta}{\alpha}}^{11} \log_2 (2 / \theta).
\end{equation*}
Finally, \cref{clm:few_bad_ints}, accounting for intervals $[x_{\tau - 1}, x_\tau] \subsetneq I_j$ for all $j \in [3]$ and the previous three cases establish the lemma as:
\begin{equation*}
    T \leq T_1 + T_2 + T_{3} + 2(r + 1) + \log_2 (4 / \nu) + 2.
\end{equation*}
\end{proof}

\subsubsection{Completing the Proof of \cref{thm:sec_price_fin_samp}}
To complete the proof of \cref{thm:sec_price_fin_samp}, we have from
\cref{lem:err_prop,lem:T_bnd} and our setting of the parameter, $\nu$, that
\begin{equation*}
    \norm{\bar{U} - \wt{U}}_\infty \leq 2^T (2 \eta \nu)^k \leq (2 \eta \nu)^{k / 8} \leq \lprp{{\alpha \cdot \eps/32 \eta}}^{2k}.
\end{equation*}

We now recover estimates of $F_i$ from estimates of $U^*_i$. Note, when $x \leq \theta$, $F_i (x) \leq \eps / 16$ and hence, $0$ is suitable in this range. Likewise, when $x \geq 1 - \theta$, $F_i (x) \geq 1 - \eps / 16$ and $1$ is correspondingly accurate. For the final case, assume $\theta \leq x \leq 1 - \theta$. We will first estimate $F_i$ on the grid points, $x_{\tau, l}$. Suppose now that $x = x_{\tau, l}$ for some $\tau, l$. We have:
\begin{equation*}
    U^*_i (x) \geq \int_{0}^x \sum_{j \neq i} f_j (z) \prod_{m \neq i, j} F_m (z) dz \geq (k - 1) \alpha^{k - 1} \int_0^x z^{k - 2} dz \geq \lprp{\frac{\alpha \eps}{16}}^{k - 1}.
\end{equation*}
And, as a consequence, we get:
\(
    \lprp{1 - \frac{\eps}{16}} \leq \frac{U^*_i(x)}{\wt{U}^{(\tau)}_{i, l}} \leq \lprp{1 + \frac{\eps}{16}}.
\)
Defining our estimate:

\begin{center}
\(
    \hat{F}_i (x) \coloneqq \prod_{j \neq i} (\wt{U}^{(\tau)}_{j, l})^{1 / (k - 1)}
    /
    \lr{(U^{(\tau)}_{i, l})^{(k - 2) / (k - 1)}}
\)
\end{center}

\noindent we get by noting that $F_i (x) = \frac{\prod_{j \neq i} (U^*_j (x))^{1 / (k - 1)}}{(U^*_i (x))^{(k - 2) / (k - 1)}}$:
\begin{align*}
    \hat{F}_i(x) \leq F_i (x) \cdot \lprp{1 + \frac{\eps}{16}} \cdot \lprp{1 - \frac{\eps}{16}}^{-1} \leq \lprp{1 + \frac{\eps}{4}} \cdot F_i (x) \\
    \hat{F}_i(x) \geq F_i (x) \cdot \lprp{1 - \frac{\eps}{16}} \cdot \lprp{1 + \frac{\eps}{16}}^{-1} \geq \lprp{1 - \frac{\eps}{4}} \cdot F_i (x).
\end{align*}
Finally, for any $\theta \leq x \leq 1 - \theta$, there exists $x_{\tau, l}$ such that $\abs{x - x_{\tau, l}} \leq \delta$. And we have:
\begin{equation*}
    \frac{\abs{F_i (x) - \hat{F}_i (x_{\tau, l})}}{F_i (x)} \leq \frac{\abs{F_i (x) - F_i (x_{\tau, l})} + \abs{F_i (x_{\tau, l}) - \hat{F}_i (x_{\tau, l})}}{F_i(x_{\tau, l}) - \abs{F_i(x_{\tau, l}) - F_i (x)}} \leq \frac{\delta \eta + (\eps / 4) F_i (x_{\tau, l})}{F_i (x_{\tau, l}) - \delta \eta} \leq \frac{\eps}{2}
\end{equation*}
from our setting of $\delta$ and $\theta$. This concludes the proof of the theorem.
\qed
\subsection{Estimation from Partial Observations}
\label{sec:sp_bid_insert}
Finally, we explore a second-price analogue of the ``partial observability'' 
setting (introduced by \citet{blum2015learning}) that we studied in the context 
of first-price auctions in Section \ref{sec:1stPrice:additionalBidder}. 
In particular, in this setting we observe {\em the winner} of each auction
and a binary indicator of whether the reserve price was triggered, but
not the price that the winner pays for the auctioned good.
On the other hand, as in \citep{blum2015learning}, 
in this setting the econometrician is given the ability to 
set the reserve price (or equivalently, insert bids into the auction). 
We formally define the 
setting below:
\begin{definition}[Partial Observation Data -- Second-price]
    \label{def:sp_partial_obs}
  Let $\{F_i\}_{i = 1}^k$ be $k$ cumulative distribution functions with support 
  $[0, 1]$, i.e. $F_i(x) = 0\ \forall x < 0$ and $F_i(1) = 1$. A sample $(r, Y, Z)$ from a
  first-price auction with bid distributions $\{F_i\}_{i = 1}^k$ is generated as
  follows:
  \begin{enumerate}
    \item we, the observer, pick a price $r \in [0, 1]$, and let $X_{k + 1} = r$
    \item generate $X_i \ts F_i$ independently for all $i \in [k]$,
    \item observe a winner $Z = \argmax_{i \in [k+1]} X_i$ and an indicator $Q$
    indicating whether the reserve price $r$ was triggered.
  \end{enumerate}
\end{definition}

We again operate in the {\em effective-support setting} (cf. 
\cref{thm:1stPrice:likely}), and---that is, we
a tuple $p, \gamma \in [0, 1]$ such that for all $j \in [k]$, 
$\prod_{l \neq j} F_l(p) \geq \gamma$. 
In other words, the transaction price of
the auction will be less than $p$ with probability at least $\gamma$.

It turns out that in this seemingly limited observation model, a very
simple algorithm suffices for recovering agents' value distributions.
We 
begin with the Lemma demonstrating pointwise recovery of the bid
distributions for any $x \in [p, 1]$:
\begin{lemma}
\label{lem:sp_reserve_conc}
Fix any $x \in [p, 1]$ and any $\epsilon > 0$. Using $n$ samples from the 
we can obtain as estimate $\widehat{F}_{j \in [k]}(x)$ satisfying,
for all $j \in [k]$,
\[
    \abs*{\widehat{F}_j(x) - F_j(x)} \leq \epsilon
    \quad 
    \text{ with probability at least } 1 - \delta, 
\]
as long as $n \geq \frac{48}{\gamma \epsilon^2}\log(2k/\delta)$.
\end{lemma}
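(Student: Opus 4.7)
The plan is to reduce the problem to estimating a conditional probability of observable events. After setting the reserve to $r = x$ in every auction, the second-price rule makes $Q = 1$ (reserve triggered, i.e., the winner paid exactly $r$) equivalent to \emph{at most one} of $X_1, \ldots, X_k$ exceeding $x$. Combining this with the definition of $Z$, the event $\{Q = 1,\ Z \in \{j, k+1\}\}$ coincides exactly with $\{X_l \leq x \text{ for all } l \neq j\}$; moreover, conditioned on this event, $\{Z = k+1\}$ is equivalent to $\{X_j \leq x\}$. By independence of the $X_l$'s, this yields the clean identity
\[
    F_j(x) \;=\; \Pr(Z = k+1 \mid Q = 1,\ Z \in \{j, k+1\}).
\]

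Guided by this identity, I would estimate $F_j(x)$ by the empirical conditional frequency
\[
    \hat{F}_j(x) \;\triangleq\; \frac{\#\{i \in [n]: Z_i = k+1\}}{\#\{i \in [n]: Q_i = 1,\ Z_i \in \{j, k+1\}\}},
\]
with $\hat{F}_j(x) \triangleq 0$ if the denominator vanishes. Let $M_j$ denote this denominator; the effective-support hypothesis gives $\mb{E}[M_j] = n \prod_{l \neq j} F_l(x) \geq n \gamma$, so a multiplicative Chernoff bound ensures $M_j \geq n\gamma/2$ with probability at least $1 - e^{-n\gamma/8}$. Conditional on the realized value of $M_j$, the numerator is distributed as $\mrm{Bin}(M_j, F_j(x))$ (here the independence of $X_j$ from the other bids is essential), so Hoeffding's inequality gives $|\hat{F}_j(x) - F_j(x)| \leq \sqrt{\log(4/\delta)/(2 M_j)}$ with probability at least $1 - \delta/2$.

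Combining the two concentration events shows that for any fixed $j$, accuracy $\varepsilon$ holds with probability $1 - \delta$ whenever $n = \Omega(\log(1/\delta)/(\gamma\varepsilon^2))$. A union bound over the $k$ agents upgrades this to a simultaneous guarantee, absorbing the extra factor of $k$ into $\log(2k/\delta)$ and matching the form of the stated sample complexity. The only real obstacle is routine bookkeeping: one has to split the failure probability between the Chernoff event controlling $M_j$ and the Hoeffding event controlling the conditional fraction, and verify that the $n = \Omega(\log(k/\delta)/\gamma)$ requirement from the former is dominated by the $1/\varepsilon^2$ requirement from the latter, so that the final bound takes the clean form $n \geq 48 \log(2k/\delta)/(\gamma\varepsilon^2)$.
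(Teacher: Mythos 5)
Your proposal is correct, and it takes a genuinely different (and arguably cleaner) route than the paper. Both approaches set the reserve to $r=x$ and work with the same observable counts, but they recover $F_j(x)$ from those counts differently. The paper's estimator uses, for each $j$, the empirical frequency of the event ``$Q=1$ and $Z\in\{j,k+1\}$'' as an estimate $\hat P_j$ of $P_j := \prod_{l\neq j}F_l(x)$, proves multiplicative concentration of each $\hat P_j$ around $P_j$ via a Chernoff bound, and then reconstructs $F_j$ through the algebraic identity $F_j = \bigl(\prod_{l\in[k]}P_l^{1/(k-1)}\bigr)/P_j$, bounding the error amplification from the ratio of products. Your estimator skips the geometric-mean manipulation: you notice that the same denominator event $\{Q=1,\,Z\in\{j,k+1\}\}$ is precisely $\{X_l\leq x\ \forall l\neq j\}$, that $\{Z=k+1\}$ is the further sub-event $\{X_l\leq x\ \forall l\}$, and so the conditional frequency $\#\{Z_i=k+1\}/M_j$ directly estimates $F_j(x)$. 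The analysis is then a two-step concentration (Chernoff on the denominator $M_j$, Hoeffding on the conditional binomial numerator), rather than simultaneous multiplicative control of $k$ product estimates followed by error propagation through the ratio $(1+\epsilon)^{k/(k-1)}/(1-\epsilon)$. Your conditional-binomial step is valid because $M_j$ is determined by $\{X_l^{(i)}:l\neq j\}$, which is independent of $\{X_j^{(i)}\}$; given $M_j$ the numerator is indeed $\mathrm{Bin}(M_j,F_j(x))$. Both arguments yield the same $O\!\bigl(\log(k/\delta)/(\gamma\epsilon^2)\bigr)$ sample complexity; yours avoids the error amplification bookkeeping at the cost of one extra conditioning argument.
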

\begin{proof}
First, suppose we set the reserve price of the auction to $x$, and define the
random variable  $Z_{j}$ as the indicator of whether either (a) 
agent $j$ won the auction {\em and} the reserve price was triggered; or 
(b) no one won the auction and the reserve price was triggered.
By construction (and since (a) and (b) are disjoint),
\begin{align*}
\mathbb{P}(Z_j = 1) &= \mathbb{P}(X_j \geq x, X_{-j} \leq x)
                                +  \mathbb{P}(X_{[k]} \leq x) 
                    =  (1 - F_j(x)) \prod_{l \neq j} F_l(x)
                                   + \prod_{l \in [k]} F_l(x)
                    = \prod_{l \neq j} F_l(x).
\end{align*}
Thus, applying a (multiplicative) Chernoff bound combined with the lower bound 
$\prod_{l \neq j} F_l(x) \geq \gamma$ given by our effective support assumption,
\[
    \mathbb{P}
    \lr{
        \abs*{\sum_{i=1}^n Z_j^{(i)} - \prod_{l \neq j} F_l(x)} 
        \geq \epsilon \cdot \prod_{l \neq j} F_l(x)
    } \leq 2\exp\lbrb{-\epsilon^2 \gamma n/3}
\]
Now, by our effective support assumption, as long as $k \geq 2$, \\
\[
    \widehat{F}_j(x) \coloneqq 
    \frac{\prod_{l \in [k]}\lr{\sum_{i=1}^n Z_j^{(i)}}^{\frac{1}{k-1}}}
    { \lr{\sum_{i=1}^n Z_j^{(i)}} }
    \leq \frac{(1 + \epsilon)^{k/(k-1)} \prod_{l \in [k]} F_l(x)}
              {(1 - \epsilon) \prod_{l \neq j} F_l(x)} 
    \leq (1 + 4\epsilon) F_j(x) 
\]
and an identical argument for the lower bound shows that 
$\abs{\widehat{F}_j(x) - F_j(x)} \leq 4\epsilon$. Applying a union bound over
all agents completes the proof.
\end{proof}

We can use this result to construct piecewise-constant approximations of
$F_j(x)$ that is $\epsilon$-close to the true bid distributions:
\begin{theorem}
    \label{thm:sp_bid_insert}
    Assume the partially observed second-price setting, and suppose the cumulative
    density functions $F_{[k]}$ are all Lipschitz-continuous with Lipschitz
    constant $L$. For any pair $p, \gamma \in [0, 1]$ that define an effective
    support,
    we can find piecewise-constant functions $\widehat{F}_j(\cdot)$ satisfying
    \[
        \sup_{x \in [p, 1]} \abs*{\widehat{F}_j(x) - F_j(x)} \leq \epsilon
        \quad 
        \text{ with probability at least $1 - \delta$,}
    \] 
    using \(
        n = \Theta\lr{\frac{k \log(k/\epsilon) \log(L/\epsilon)^2}{\epsilon^3\gamma}}
    \)
    samples from the partially observed second-price model.
\end{theorem}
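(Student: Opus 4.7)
The plan is to bootstrap the pointwise concentration of Lemma \ref{lem:sp_reserve_conc} into a uniform guarantee on $[p,1]$ by combining a coarse value-space discretization with binary search over the domain. Since each $F_j$ is monotone and $L$-Lipschitz, it suffices to produce a piecewise-constant approximation whose breakpoints land near equispaced quantile values of $F_j$. A naive uniform grid of spacing $\eps/L$ in the domain would cost $\Theta(L/\eps)$ distinct reserve-price probes and give a $L/\eps^3$ sample complexity, so to obtain the logarithmic dependence on $L$ stated in the theorem, binary search appears essential.

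Concretely, I would fix $N = \Theta(1/\eps)$ target quantile levels $u_1 < u_2 < \cdots < u_N$ spanning $[0,1]$ with spacing $\eps/4$. For each bidder $j \in [k]$ and each level $u_i$, I would run a binary search on $[p,1]$ to locate a point $\widehat{x}_{j,i}$ with $|F_j(\widehat{x}_{j,i}) - u_i| \leq \eps/2$. At each step of the binary search, I set the reserve price to the current midpoint $x$, collect a fresh batch of $n_0$ observations, and apply Lemma \ref{lem:sp_reserve_conc} to obtain a pointwise estimate $\widehat{F}_j(x)$ accurate to within $\eps/8$; the recursion rule is "go right if $\widehat{F}_j(x) < u_i - \eps/8$, go left if $\widehat{F}_j(x) > u_i + \eps/8$, otherwise accept." By $L$-Lipschitzness, after $T = O(\log(L/\eps))$ iterations the bracket has length at most $\eps/(2L)$, so the accepted point has $F_j$-value within $\eps/2$ of $u_i$. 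The final estimator is the piecewise-constant function $\widehat{F}_j(x) \coloneqq u_i$ for $x \in [\widehat{x}_{j,i-1}, \widehat{x}_{j,i})$; a straightforward argument using monotonicity and the $\eps/2$-closeness at breakpoints bounds the uniform error by $\eps$. The total number of distinct reserve-price probes is at most $k N T = O(k \log(L/\eps)/\eps)$, and setting $n_0 = \Theta(\log(k N T/\delta)/(\gamma \eps^2))$ in Lemma \ref{lem:sp_reserve_conc}, followed by a union bound across all probes, yields the advertised sample complexity.

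The main obstacle is propagating the randomness through the binary search: the midpoint oracle only returns values close to $F_j(x)$ on a high-probability event, and on the complementary event the recursion may branch incorrectly and drift far from the target quantile. The remedy is to coordinate the pointwise accuracy (here $\eps/8$) with the acceptance gap around $u_i$ so that, conditioned on the event that every pointwise estimate is accurate, the branching decision is always consistent with the true ordering of $F_j(x)$ versus $u_i$. The second delicate point is the union bound: one must take a single union bound over all $k \cdot N \cdot T$ intermediate probes before invoking Lemma \ref{lem:sp_reserve_conc}, rather than one per binary search, in order to keep the logarithmic factors as stated and avoid an unnecessary blowup.
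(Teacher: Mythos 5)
Your proposal is correct and takes essentially the same route as the paper's proof: a value-space quantile grid, a binary search over the domain driven by the pointwise reserve-price estimates of Lemma~\ref{lem:sp_reserve_conc}, a piecewise-constant estimator whose uniform error is bounded via monotonicity and Lipschitzness, and a single union bound over all $k \cdot N \cdot T$ probes. The only detail the paper spells out that you leave implicit is the leftmost region $[p, \hat{z}_{j,0})$ --- quantile levels below $F_j(p) \geq \gamma$ have no preimage in $[p,1]$, so the search for them never accepts --- which is resolved by a one-line monotonicity argument using the effective-support bound.
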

Given \cref{lem:sp_reserve_conc}, we can use the exact binary search and
estimation procedure from \cref{sec:1stPrice:additionalBidder} to
prove \cref{thm:sp_bid_insert}---we give the full proof in
\cref{app:sp_partial_info_proof}.

\section{Conclusion}
In this work, we presented efficient methods for estimating first- and
second-price auctions under independent (asymmetric) private values and partial
observability. 
Our methods come with convergence guarantees that are uniform in that
their error rates do not depend on the bid/value distributions being estimated.
These methods and the corresponding finite-sample guarantees build on a long
line of work in Econometrics that establishes either identification
results, or estimation results under restrictive assumptions such as
symmetry or full bid observability.

\section{Acknowledgements}
This work is supported by NSF Awards CCF-1901292, DMS-2022448 and
DMS2134108, a Simons Investigator Award, the Simons Collaboration on the Theory
of Algorithmic Fairness, a DSTA grant, the DOE PhILMs project
(DE-AC05-76RL01830), an Open Philanthropy AI Fellowship and a Microsoft Research-BAIR Open Research Commons grant.

\printbibliography

\appendix

\section{Omitted Proofs for First-Price Auctions}
\subsection{Omitted Calculations from Proof of \cref{thm:fp_partial_obs}}
\label{sec:fp_partial_info_extras}
\newcommand{\ygap}{\delta}
\newcommand{\numsteps}{T}
\newcommand{\numquantiles}{N}
\newcommand{\Hquants}[1]{v_{{#1}}}
\newcommand{\estHquants}[1]{\hat{v}_{{#1}}}
\newcommand{\Hiquants}[2]{w_{{#2}}}
\newcommand{\estHiquants}[2]{\hat{w}_{{#2}}}
\newcommand{\yquants}[1]{u_{{#1}}}
\newcommand{\hoeffgap}{\beta}
\newcommand{\binsearchgap}{\epsilon_1}
\newcommand{\binHstepgap}{\epsilon_1/2}
\newcommand{\binHistepgap}{\epsilon_1/2}
We further condition on the above and define an estimate of $G_i(x_j) 
= \int_{x_j}^1 \frac{1}{H(z)}\, dH_i(z)$,
\[
    \hat{G}_i(x_j) = \sum_{s=j}^{|X|-1} \lr{\hat{H}_i(x_{s+1}) - \hat{H}_i(x_s)}/{\hat{H}(x_s)}.
\]
Using the mean value theorem, there exists a set of points $\{\zeta_j\}$ with
$\zeta_j \in (x_{j}, x_{j+1}]$ and 
\[
    G_i(x_j) 
    = \int_{x_j}^1 \frac{1}{H(z)}\, dH_i(z)
    = \sum_{s=j}^{|X|-1} \frac{H_i(x_{s+1}) - H_i(x_{s})}{H(\zeta_s)}.
\]
We first bound the difference between our piecewise estimate and the true $G_i$
on the set $X$: 
\begin{align*}
    \abs{G_i(\hat{v}_t) - \hat{G}_i(\hat{v}_t)} 
    &=  
    \abs*{
    \sum_{s=j}^{|X|-1} \frac{H_i(x_{s+1}) - H_i(x_{s})}{H(\zeta_s)} - 
    \sum_{s=j}^{|X|-1} \frac{\hat{H}_i(x_{s+1}) - \hat{H}_i(x_s)}{\hat{H}(x_s)}
    } \\
    &\leq 
    \abs*{
        \sum_{s=j}^{|X|-1} \frac{H_i(x_{s+1}) - H_i(x_{s})}{H(\zeta_s)} - 
        \sum_{s=j}^{|X|-1} \frac{\hat{H}_i(x_{s+1}) - \hat{H}_i(x_s)}{H(\zeta_s)}
    } \\
    &\qquad +
    \abs*{
        \sum_{s=j}^{|X|-1} \frac{\hat{H}_i(x_{s+1}) - \hat{H}_i(x_s)}{H(\zeta_s)}
        - \sum_{s=j}^{|X|-1} \frac{\hat{H}_i(x_{s+1}) - \hat{H}_i(x_s)}{\hat{H}(x_s)}
    } \\
    &\leq 
    \frac{2}{\gamma} \cdot \sum_{s=1}^{|X|} \abs{H_i(x_{s}) - \hat{H}_i(x_{s})}
    + \max_{s \in [|X|]}\ \ \abs*{
        \frac{1}{H(x_{s+1})} - \frac{1}{\hat{H}(x_s)}
    } \\
    &\leq 
    \frac{2|X|\hoeffgap}{\gamma} + \frac{\ygap + 2 \cdot \binsearchgap + \hoeffgap}{\gamma^2}
    \\
    &\leq 
    \frac{2|X|\hoeffgap}{\gamma} + \frac{1}{\gamma^2} \max_{s \in [|X|]}\ \ \abs*{
        {H(\zeta_s)} - \hat{H}(x_s)
    } 
\end{align*}
\subsection{Proof of \cref{lem:qi_lb}}
We will proceed similarly to the proof of \citep{lebrun2006uniqueness}, who
use a similar technique to prove strict monotonicity (i.e., a lower bound of
zero). In particular, proving a quantitative lower bound requires carefully 
controlling additional terms that cancel in the original proof.

For $1 \leq i \leq n$, we define 
\[
    b'_i = \inf \left\{
        b' \in [0, 1]:\ 
        \frac{d}{db}\log\lr{G_i(v_i(b))} > L(b)\ \text{ for all } b \in (b', 1]
    \right\},
\]
and let $i$ be such that $b'_i = \max_{1 \leq k \leq n} b'_k$. 
Our goal is to prove that $b_i' < \rho$, since (by construction) this would
imply that our desired property is true on the entire range.

By continuity of $(d/db) \log\, G_i(v_i(b))$ and of $L(b)$, at the point
$b_i'$ we must have that
\[
    \qi = L(b).
\] 
Suppose that $b_i' \geq \rho$---by our definition of effective
support, $\Gi{b_i'} \geq \gamma$. Re-arranging the characterization of the
Bayes-Nash equilibrium \eqref{eq:equilibrium_characterization} (cf.
\cref{lemma:characterization_equilibrium}), 
\begin{align*}
    (v_i(b) - b) \cdot \frac{d}{db} \log\lr{G_i(v_i(b))} = \frac{1}{n-1}\lr{-(n-2) + \sum_{j \neq i} \frac{v_i(b) - b}{v_j(b) - b}}.
\end{align*}
Taking the derivative with respect to $b$ yields
\begin{align}
    D(b) 
    \label{eq:lebrun_derivative_1}
    &= \sum_{j \neq i} \frac{v_i'(b)}{v_j(b) - b} - \sum_{j \neq i} \frac{(v_i(b) - b)v_j'(b)}{(v_j(b) - b)^2} + \sum_{j \neq i} \frac{v_i(b) - v_j(b)}{(v_j(b) - b)^2}. 
\end{align}

\noindent Our next goal is to upper-bound the value of
\eqref{eq:lebrun_derivative_1} at $b_i'$. 
First, note that for all $j \neq i$, our construction of $b_i'$ implies that
$b_i' \in [b_j', 1]$ (since $i = \arg\max_k b_k'$), and so
\begin{equation*}
    \frac{1}{v_i(b_i') - b_i'} - \frac{1}{v_j(b_i') - b_i'} = \frac{d}{db} \log\lr{G_j(v_j(b_i'))} - \frac{d}{db} \log\lr{G_i(v_i(b_i'))} \geq 0,
\end{equation*}
meaning that 
\begin{equation}
    \vi{b_i'} \leq \vj{b_i'} \implies 
    \frac{v_i(b) - v_j(b)}{(v_j(b) - b)^2} \leq 0.
    \label{eq:value_monotone_1}
\end{equation}
Thus, we can safely ignore the (negative) final term when upper bounding
\eqref{eq:lebrun_derivative_1}. 
Turning our attention to the second term, \eqref{eq:log_deriv_diff}
implies the existence of at least one $j \neq i$ such that  
\begin{align}
    \nonumber
    \qj &\geq \frac{1}{n-1} \frac{1}{v_i(b) - b}, \text{ and rearranging, } \\
    \label{eq:value_monotone_2}
    (v_i(b) - b) v_j'(b_i') &\geq \frac{G_j(v_j(b_i'))}{(n-1)\cdot g_j(v_j(b_i'))} \geq \frac{\gamma}{(n-1)\cdot \eta}.
\end{align}
Since $v_j(b), b \in [0, 1]$ and $v_j(b_i') \geq v_i(b_i')$, we have $0 \leq v_j(b_i') - b_i' \leq 1$,
and in turn 
\begin{equation}
    \frac{(v_i(b) - b) v_j'(b_i')}{(v_j(b_i') - b_i')^2} \geq \frac{\gamma}{(n-1)\cdot \eta}
\end{equation}
for at least one $j \neq i$, and thus 
\begin{equation}
    -\sum_{j \neq i} \frac{(v_i(b) - b) v_j'(b_i')}{(v_j(b_i') - b_i')^2} \leq -\frac{\gamma}{(n-1)\cdot \eta}.
\end{equation}
Finally, recall that 
\[
    L(b_i') = \frac{v_i'(b_i') \cdot g_i(\vi{b_i'})}{G_i(\vi{b_i'})}
    \implies 
    v_i'(b_i') = \frac{L(b_i') \cdot G_i(\vi{b_i'})}{g_i(\vi{b_i'})}.
\]
Combining this with \eqref{eq:value_monotone_1} and
\eqref{eq:value_monotone_2}, and using that $\vj{b_i'} \geq \vi{b_i'}$ yields
\begin{align*}
    D(b_i') &\leq \frac{-\gamma}{(n-1)\cdot \eta} + \sum_{j \neq i} \frac{L(b_i')}{v_j(b_i') - b_i'} 
    \leq \frac{-\gamma}{(n-1)\cdot \eta} + \frac{(n - 1) \cdot L(b_i')}{\vi{b_i'} - b_i'}
    = 0,
\end{align*}
where the last equality is by definition of $L$. Meanwhile, by definition
\begin{align}
    \nonumber
    D(b) &= \frac{d}{db} \lbrb{ (v_i(b) - b) \cdot \frac{d}{db} \log\lr{G_i(v_i(b))} } \\ 
    \nonumber
    &= \lr{v'_i(b) - 1} \cdot \qi 
    + (v_i(b) - b) \cdot \frac{d^2}{db^2} \log\, G_i(v_i(b)) \\
    \label{eq:Db_lhs}
    D(b_i') &\geq (0 - 1)\cdot L(b_i') + (v_i(b_i') - b_i') \cdot \frac{d^2}{db^2} \log\, G_i(v_i(b)).
\end{align}
Combining the above results yields
\(
    (v_i(b_i') - b_i') \cdot \frac{d^2}{db^2} \log\, G_i(v_i(b))
    \leq -L(b_i') < 0,
\)
and since $v_i(b_i') - b_i' > 0$,
this must mean $\qi < 0$. However, this in turn implies that there exists an
$\epsilon > 0$ such that $\log\, \Gi{b_i'} < x$, which is a contradiction of
our definition of $\bi'$. Thus, our initial assumption ($b_i' > \delta$) is
impossible, which proves the desired result.
\section{Omitted Proofs for Second-Price Auctions}
\label{sec:proof_sec_price}

\subsection{Miscellaneous Results}
\label{ssec:misc_res_sec_price}

Here, we present miscellaneous results used in various parts of our proof. The first lemma shows that the functions, $U^*_i$, for different $i$ are within a constant factor of each other.

\uiujcomp*
\begin{proof}
    We have:
    \begin{equation*}
        U^*_i (x) - U^*_i (y) = \int_{y}^x \sum_{l \neq i} f_l (z) \prod_{m \neq i, l} F_m (z) dz \leq \lprp{\frac{\eta}{\alpha}}^3 (U^*_j (x) - U^*_j (y))
    \end{equation*}
    where the second inequality follows from \cref{as:second_price_bdd}:
    \begin{equation*}
        \forall l, l', z \in [0, 1]: f_l(z) \prod_{m \neq i, l} F_m (z) \leq \lprp{\frac{\eta}{\alpha}}^3 \cdot f_{l'}(z) \prod_{m \neq j, l'} F_{m} (z).
    \end{equation*}
\end{proof}

\begin{lemma}
    \label{lem:uh_disc_apx}
    We have, for all $i \in [k]$ and all $\tau \in [T]$,
    \begin{align*}
        \forall l \in \ell^{(\tau)}: \lprp{\hat{U}_i (x_{\tau, l}) - \hat{U}_i (x_{\tau, l - 1})} \leq \frac{1}{1 - x_{\tau, l}} \cdot \Delta^{(\tau)}_{i, l} \leq 2 \lprp{\hat{U}_i (x_{\tau, l}) - \hat{U}_i (x_{\tau, l - 1})} \\
        \forall \hat{U}_i (x_\tau) - \hat{U}_i (x_{\tau - 1}) \leq \sum_{l = 1}^{\ell^{(\tau)}} \frac{1}{1 - x_{\tau, l}} \cdot \Delta^{(\tau)}_{i, l} \leq 2 \cdot \lprp{\hat{U}_i (x_\tau) - \hat{U}_i (x_{\tau - 1})}
    \end{align*}
\end{lemma}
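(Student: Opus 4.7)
The plan is to prove the pointwise inequality in part~1 directly from the definitions of $\hat{U}_i$ and $\Delta^{(\tau)}_{i,l}$, and then derive part~2 by summing over $l$.

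First, I expand the difference using the definition of $\hat{U}_i$ given in \cref{lem:u_appx} and of $\hat{G}_i$:
\begin{equation*}
  \hat{U}_i(x_{\tau,l}) - \hat{U}_i(x_{\tau,l-1})
  = \frac{1}{n}\sum_{j=1}^n \frac{1}{1-Y_j}\,\bm{1}\{W_j=i,\ x_{\tau,l-1} < Y_j \leq x_{\tau,l}\},
\end{equation*}
while $\Delta^{(\tau)}_{i,l}$ is exactly the same sum without the $1/(1-Y_j)$ factor. For every sample index $j$ contributing to either sum, we have $Y_j \in (x_{\tau,l-1}, x_{\tau,l}]$, and hence $\tfrac{1}{1-x_{\tau,l}} \geq \tfrac{1}{1-Y_j} \geq \tfrac{1}{1-x_{\tau,l-1}}$. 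Multiplying the sample-level indicators by the constant $1/(1-x_{\tau,l})$ (which dominates $1/(1-Y_j)$ termwise) immediately yields the left inequality of part~1.

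For the right inequality, I need $1/(1-Y_j) \geq \tfrac{1}{2}\cdot 1/(1-x_{\tau,l})$ for every such $Y_j$, i.e., $1-Y_j \leq 2(1-x_{\tau,l})$. Since $Y_j > x_{\tau,l-1} = x_{\tau,l} - \delta$, it suffices that $\delta \leq 1 - x_{\tau,l}$. This is where the parameter settings enter: by \eqref{eq:par_set}, $\delta \ll \theta/2$, and \cref{alg:macro_ints} enforces $x_{\tau,l} \leq x_\tau \leq 1 - \theta/2$, so $1 - x_{\tau,l} \geq \theta/2 > \delta$. Applying this pointwise bound inside the sum gives $\frac{1}{1-x_{\tau,l}}\Delta^{(\tau)}_{i,l} \leq 2(\hat{U}_i(x_{\tau,l}) - \hat{U}_i(x_{\tau,l-1}))$, completing part~1.

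Part~2 then follows by telescoping: $\hat{U}_i(x_\tau) - \hat{U}_i(x_{\tau-1}) = \sum_{l=1}^{\ell^{(\tau)}}(\hat{U}_i(x_{\tau,l}) - \hat{U}_i(x_{\tau,l-1}))$, and summing both inequalities of part~1 over $l$ produces the desired two-sided bound on $\sum_l \tfrac{1}{1-x_{\tau,l}}\Delta^{(\tau)}_{i,l}$. There is no serious obstacle here; the only non-trivial check is verifying $\delta \leq 1-x_{\tau,l}$, which is a direct consequence of how macro-intervals are constructed and of the parameter choice~\eqref{eq:par_set}.
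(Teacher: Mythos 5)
Your proposal is correct and follows essentially the same route as the paper's proof: a termwise comparison of $1/(1-x_{\tau,l})$ with $1/(1-Y_j)$ over the samples landing in $(x_{\tau,l-1}, x_{\tau,l}]$, using $Y_j > x_{\tau,l}-\delta$ together with $1-x_{\tau,l}\geq \theta/2 > \delta$ from \eqref{eq:par_set} and the construction in \cref{alg:macro_ints}, followed by summing over $l$. The only cosmetic difference is that the paper bounds the difference $\tfrac{1}{1-x_{\tau,l}}-\tfrac{1}{1-Y_j}$ (getting a factor $3/2$) whereas you bound the ratio directly (getting the stated factor $2$); both hinge on the same parameter check.
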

\begin{proof}
    For the lower bound, we have $\forall l \in \ell^{(\tau)}$:
    \begin{align*}
        \frac{1}{1 - x_{\tau, l}} \cdot \Delta^{(\tau)}_{i, l} &= \frac{1}{n} \cdot \sum_{j = 1}^n \frac{1}{(1 - x_{\tau, l})} \cdot \bm{1} \lbrb{Z_j = i, x_{\tau, l - 1} < Y_j \leq x_{\tau, l}} \\
        &\geq \frac{1}{n} \cdot \sum_{j = 1}^n \frac{1}{(1 - Y_j)} \cdot \bm{1} \lbrb{Z_j = i, x_{\tau, l - 1} < Y_j \leq x_{\tau, l}} = \hat{U}_i (x_{\tau, l}) - \hat{U}_i (x_{\tau, l - 1}).
    \end{align*}
    Summing the above inequality over $l$ concludes the proof of the lower bound. Similarly, for the upper bound, we get $\forall l \in \ell^{(\tau)}$:
    \begin{align*}
        &\frac{1}{1 - x_{\tau, l}} \cdot \Delta^{(\tau)}_{i, l} - (\hat{U}_i (x_{\tau, l}) - \hat{U}_i (x_{\tau, l - 1})) \\
        &= \frac{1}{n} \cdot \sum_{j = 1}^n \lprp{\frac{1}{(1 - x_{\tau, l})} - \frac{1}{(1 - Y_j)}} \cdot \bm{1} \lbrb{Z_j = i, x_{\tau, l - 1} < Y_j \leq x_{\tau, l}} \\
        &\leq \frac{1}{n} \cdot \sum_{j = 1}^n \lprp{\frac{\delta}{(1 - x_{\tau, l})(1 - Y_j)}} \cdot \bm{1} \lbrb{Z_j = i, x_{\tau, l - 1} < Y_j \leq x_{\tau, l}} \\
        &\leq \frac{1}{n} \cdot \sum_{j = 1}^n \lprp{\frac{1}{2 \cdot (1 - Y_j)}} \cdot \bm{1} \lbrb{Z_j = i, x_{\tau, l - 1} < Y_j \leq x_{\tau, l}} = \frac{1}{2} \lprp{\hat{U}_i (x_{\tau, l}) - \hat{U}_i (x_{\tau, l - 1})}.
    \end{align*}
    Again, re-arranging and summing over $l$ concludes the proof. 
\end{proof}

\subsection{Proof of \cref{thm:sp_bid_insert}}
\label{app:sp_partial_info_proof}

We combine this result with the same binary search from
\cref{sec:1stPrice:additionalBidder} to identify {\em quantiles} of the
functions $\widehat{F}_i$. In particular, fix $\epsilon > 0$ and let
\[
    W = \{w_a \coloneqq \gamma + a \cdot \frac{\epsilon}{2} 
    \ |\  a \in \mathbb{N} \text{ and } \gamma + a \cdot \frac{\epsilon}{2} \leq 1 \} 
    \cup \{1\}.
\]
An identical argument from the one from the proof of \cref{thm:fp_partial_obs}
shows that for any $\epsilon > 0$, we can use binary search to 
find $\hat{z}_{j,a}$ such that  
\begin{align}
    \label{eq:sp_quantile_recovery}
    \abs{F_j(\hat{z}_{j,a}) - w_a} \leq \frac{\epsilon}{2} 
    \text{ for all $j \in [k]$ and $a \in [|W|]$}
    \\ 
    \nonumber
    \text{ w.p. } \qquad 
    1 - \frac{4k}{\epsilon}\log\lr{\frac{4L}{\epsilon}}\exp\lbrb{-\epsilon^2 \gamma n_1 / 192}
\end{align}
using $C \cdot n_1\cdot k \cdot \log(4L/\epsilon) / \epsilon$ samples 
for a universal constant $C$ 
(in particular, see 
the proof of \cref{thm:fp_partial_obs}, set $\delta = \epsilon_1 =
\epsilon/2$, and use \cref{lem:sp_reserve_conc} for the pointwise guarantee in
place of \eqref{eq:fp_pointwise_conc}).

Conditioning on this event, we can thus define the piecewise-constant functions
$\widehat{F}_j$ for  $j \in [k]$ as \[
\widehat{F}_j(x) = \sum_{a \in [|W|]} \bm{1}\lbrb{x \in [z_{j,a}, z_{j, a+1})} \cdot \lr{\gamma + a\cdot \frac{\epsilon}{2}}
\]
Now, consider any $x \in [p, 1]$ and define $a \in \mathbb{N}$ such that 
$x \in [\hat{z}_{j, a}, \hat{z}_{j,a+1}]$, so that by construction 
$\widehat{F}_j(x) = w_{j,a}$. Then:
\begin{align*}
    F_j(x) &\geq F_j(\hat{z}_{j,a}) &\text{(monotonicity of CDF)}\\
            &= w_{j,a+1}
                + (F_j(\hat{z}_{j,a}) - w_{j,a}) 
                + (w_{j,a} - w_{j,a+1}) \\
            &\geq w_{j,a+1} - \epsilon/2 - \epsilon/2 &\text{(definition of $W$ and \eqref{eq:sp_quantile_recovery})} \\
            &\geq \widehat{F}_j(x) - \epsilon &\text{(definition of $\widehat{F}_j$)}
\end{align*}
Similarly, $F_j(x) \leq \widehat{F}_j(x) + \epsilon$. 

It remains to handle $x \in [p, \hat{z}_{j,0}]$: note that by (strict)
monotonicity of the $\widehat{F}_j$, we must have $\widehat{F}_j^{-1}(\gamma)
\leq p$. Thus, if $p \leq x \leq \hat{z}_{j,0}]$,
\begin{align*}
    F_i(x) \geq F_i(p) \geq \gamma = \widehat{F_i}(x),
\qquad \text{ and } \qquad 
    F_i(x) \leq F_i(\hat{z}_{j,0}) \leq \gamma + \frac{\epsilon}{2} \leq \widehat{F}_i(x) + \epsilon.
\end{align*}

Thus, 
$\abs{F_j(x) - \widehat{F}_j(x)} \leq \epsilon$ over the entire interval $[p,
1]$ with probability at least $1 - \delta$, as long as
\[
    n \geq \frac{C k \log(k/\epsilon) \log(L/\epsilon)^2}{\epsilon^3\gamma},
\]
for a universal constant $C$, completing the proof.
\qed

\end{document}